\DeclareSymbolFont{AMSb}{U}{msb}{m}{n}
\numberwithin{equation}{section}
\pgfplotsset{compat=1.15}
\DeclareFontFamily{U}{mathx}{}
\DeclareFontShape{U}{mathx}{m}{n}{<-> mathx10}{}
\DeclareSymbolFont{mathx}{U}{mathx}{m}{n}
\DeclareMathAccent{\widehat}{0}{mathx}{"70}
\DeclareMathAccent{\widecheck}{0}{mathx}{"71}
\providecommand{\mr}[1]{\href{http://www.ams.org/mathscinet-getitem?mr=#1}{MR~#1}}
\providecommand{\zbl}[1]{\href{https://zbmath.org/?q=an:#1}{Zbl~#1}}
\newcommand{\RR}{\mathbb{R}}
\newcommand{\CC}{\mathbb{C}}
\newcommand{\C}{\mathcal{C}}
\newcommand{\1}{\mathbf{1}}
\newcommand{\ii}{\mathrm{i}}
\renewcommand{\Re}{\mathrm{Re}\,}
\DeclareMathOperator{\weaklystar}{\rightharpoonup\kern-2.2ex ^* \, \,}
\DeclareMathOperator{\sgn}{sign}
\def\XXint#1#2#3{{\setbox0=\hbox{$#1{#2#3}{\int}$ }
\vcenter{\hbox{$#2#3$ }}\kern-.6\wd0}}
\newcommand{\R}{\mathbb R}
\newcommand{\N}{\mathbb N}
\newcommand{\Z}{\mathbb Z}
\renewcommand{\C}{\mathbb C}
\newcommand\norm[1]{\lVert #1 \rVert}
\newcommand\bignorm[1]{\big\lVert #1 \big\rVert}
\newcommand\inner[1]{\langle #1 \rangle}
\newcommand{\im}{{\rm{Im}}}
\newcommand{\ra}{\rightarrow}
\newcommand{\re}{\mathrm{Re}\,}
\renewcommand{\phi}{\varphi}
\newcommand{\dom}{\mathrm{dom}\,}
\newcommand{\signum}{\mathrm{sign}\,}
\theoremstyle{plain}
\newtheorem{theorem}{Theorem}[section]
\newtheorem{proposition}[theorem]{Proposition}
\newtheorem{corollary}[theorem]{Corollary}
\newtheorem{lemma}[theorem]{Lemma}
\newtheorem*{theorem*}{Theorem}
\theoremstyle{definition}
\newtheorem{definition}[theorem]{Definition}
\newtheorem{remark}[theorem]{Remark}
\newtheorem*{remark*}{Remark}
\begin{document}
\numberwithin{table}{section}
\title[On a variational problem related to the CLR and LT inequalities]{On a variational problem related to the Cwikel--Lieb--Rozenblum and Lieb--Thirring inequalities}
\author[T.~Carvalho~Corso]{Thiago Carvalho Corso}
\address[T.~Carvalho Corso]{Institute of Applied Analysis and Numerical Simulation, University of Stuttgart, Pfaffenwaldring 57, 70569 Stuttgart, Germany}
\email{thiago.carvalho-corso@mathematik.uni-stuttgart.de}

\author[T.~Ried]{Tobias Ried}
\address[T.~Ried]{Max-Planck-Institut für Mathematik in den Naturwissenschaften, Inselstraße 22, 04103 Leipzig, Germany}
\curraddr{School of Mathematics, Georgia Institute of Technology, 686 Cherry Street, Skiles Building, Atlanta, GA 30332-0160 United States of America}
\email{tobias.ried@gatech.edu}
\keywords{Schrödinger operators, eigenvalues, bound states, Lieb--Thirring inequality, Cwikel--Lieb--Rozenblum inequality, Hadamard three lines lemma, Hardy spaces, maximal Fourier multipliers, boundary values of holomorphic functions}
\subjclass[2020]{Primary 35P15; Secondary 81Q10, 30D05}
\date{\today}
\thanks{\emph{Funding information}: DFG -- Project-ID 195170736 -- TRR109; DFG -- Project-ID 442047500 -- SFB 1481.\\[1ex]
\textcopyright 2024 by the authors. Faithful reproduction of this article, in its entirety, by any means is permitted for noncommercial purposes.}
\begin{abstract}
We explicitly solve a variational problem related to upper bounds on the optimal constants in the Cwikel--Lieb--Rozenblum (CLR) and Lieb--Thirring (LT) inequalities, which has recently been derived in \cite{HKRV23} and \cite{FHJN21}. We achieve this through a variational characterization of the $L^1$ norm of the Fourier transform of a function and duality, from which we obtain a reformulation in terms of a variant of the Hadamard three lines lemma. By studying Hardy-like spaces of holomorphic functions in a strip in the complex plane, we are able to provide an analytic formula for the minimizers, and use it to get the best possible upper bounds for the optimal constants in the CLR and LT inequalities achievable by the method of \cite{HKRV23} and \cite{FHJN21}. 
\end{abstract}

\maketitle
\tableofcontents
\setcounter{secnumdepth}{3}
\section{Introduction}
Motivated by the recent new proof of the Cwikel--Lieb--Rozenblum (CLR) inequality \cite{HKRV23}, we study the variational problem
\begin{align}
    M_\gamma \coloneqq \inf_{m_1, m_2 \in L^2(\R_+,\tfrac{\mathrm{d}s}{s})} \biggr\{\biggr(\norm{m_1}_{L^2(\R_+;\tfrac{\mathrm{d}s}{s})} \norm{m_2}_{L^2(\R_+;\tfrac{\mathrm{d}s}{s})}\biggr)^{\gamma-2} \int_0^\infty |m_1\star m_2(t) -t|^2 \, t^{-\gamma} \frac{\mathrm{d}t}{t}\biggr\}, \label{eq:primal}
\end{align}
for $\gamma >2$, where $\R_+ = (0,\infty)$ is the multiplicative group,
\begin{align*}
    &L^2(\R_+;\tfrac{\mathrm{d}s}{s}) = \biggr \{ f: \R_+ \rightarrow \C \mbox{ measurable }: \int_0^\infty |f(s)|^2 \frac{\mathrm{d}s}{s} < \infty \biggr\},
\end{align*}
and the convolution is taken with respect to the Haar measure $\tfrac{\mathrm{d}s}{s}$ in the multiplicative group,
\begin{align}\label{eq:convolution}
    (m_1 \star m_2)(t) = \int_0^\infty m_1\left(\frac{t}{s}\right) m_2(s) \frac{\mathrm{d}s}{s}.
\end{align}
The solution of \eqref{eq:primal} gives an upper bound on the constant in the CLR inequality.\footnote{For more details on the CLR estimate and how it is connected to the variational problem \eqref{eq:primal}, we refer to \cite{HKRV21,HKRV23}.} In the article \cite{HKRV23}, the authors proved a lower bound on the variational problem for $M_{\gamma}$ and were able to provide upper bounds that in dimensions $d\geq 5$ improved the till then best upper bounds on the optimal constant in the CLR inequality due to Lieb \cite{Lie80}.  
However, the choice of test functions in \cite[Appendix D]{HKRV23} seems quite arbitrary and the questions of existence of optimizers and the corresponding optimal value for $M_{\gamma}$ were left open. 

In the article at hand we prove that the variational problem \eqref{eq:primal} indeed has a solution, and we give a description of the optimizer in terms of the solution of another optimization problem, which makes an interesting connection to (a variant of) Hadamard's three lines lemma.

In order to state this connection, let us consider the set $\mathbb{H}(S)$ of holomorphic functions on the strip 
\begin{align*}
	S \coloneqq \{ x+\ii y \in \C: 0<y<1\}
\end{align*}
in the complex plane, and introduce the following class of spaces similar to Hardy spaces:

\begin{definition}[$\mathbb{H}^{p,q}(S)$ spaces] \label{def:pq-hardy}
Let $1\leq p, q\leq \infty$. We say that a holomorphic function $h:S \rightarrow \C$ belongs to the space $\mathbb{H}^{p,q}(S)$ if
\begin{align}
	\norm{h}_{\mathbb{H}^{p,q}(S)} \coloneqq	 \sup_{0<y<1} \inf_{\substack{f_y\in L^p(\RR), g_y\in L^q(\RR): \\ f_y+g_y  = h_y}} \left( \frac{1}{1-y}\norm{f_y}_{L^p(\R)} + \frac{1}{y} \norm{g_y}_{L^q(\R)} \right) <\infty, \label{eq:hardynorm}
\end{align}
where for $0<y<1$, $h_y:\R \to \C$ denotes the function\footnote{Note that we do not require the functions $f,g$ in the above decomposition of $h$ to be holomorphic.} 
\begin{align*}
    h_y(x) \coloneqq h(x+\ii y).
\end{align*}
Moreover, we say that a holomorphic function $h:S_- \rightarrow \C$ on the lower strip $S_- \coloneqq \{ x-\ii y \in \C: 0<y<1\}$ belongs to $\mathbb{H}^{p,q}(S_-)$ provided that the reflected function $z \in S\mapsto \overline{h(\overline{z})}$ belongs to $\mathbb{H}^{p,q}(S)$.
\end{definition}
We will show in Lemma~\ref{lem:existenceboundary} that such functions admit boundary values in certain $L^p$ spaces. More precisely, for any $h\in \mathbb{H}^{p,2}(S)$, there exists $h_0 \in L^p(\R)$ and $h_1 \in L^2(\R)$ such that
\begin{align*}
    \lim_{y\downarrow 0} \inner{h_y, \varphi} = \inner{h_0, \varphi} \quad \mbox{and}\quad \lim_{y\uparrow 1}\inner{h_y,\varphi} = \inner{h_1,\varphi},
\end{align*}
for any Schwartz function $\varphi \in \mathcal{S}(\R)$.

\subsection{Main result} The first main result of this paper can now be stated as follows.
\begin{theorem}[Three lines variational problem]\label{thm:threelines} 
Let $\gamma >2$ and $M_\gamma$ be defined via \eqref{eq:primal}. Then we have
\begin{align}
	M_\gamma =  16 \pi \frac{(\gamma-2)^{\gamma-2}}{\gamma^{\gamma+1}} \biggr(\max_{h \in \mathbb{H}^{\infty,2}(S_-)}  \frac{\norm{h_{-2/\gamma}}_{L^\infty(\R)}}{\norm{h_0}_{L^\infty(\R)}^{1-2/\gamma} \norm{h_{-1}}_{L^2(\R)}^{2/\gamma}}\biggr)^\gamma, \label{eq:3linesproblem}
\end{align}
where $h_0$ and $h_{-1}$ are the boundary values of $h$ in the sense described above.
Moreover, the maximizer of \eqref{eq:3linesproblem} exists and is unique up to the transformation $h_{\alpha,\beta,\omega}(z) = \beta h(z-\alpha)\mathrm{e}^{\ii\omega z}$ for $\alpha,\omega \in \R$ and $\beta \in \C \setminus \{0\}$. 
\end{theorem}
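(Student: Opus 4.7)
The plan is to reduce \eqref{eq:primal} to a single-function optimization via a Wiener-algebra duality, then transform it by a contour shift into a problem for meromorphic functions on the strip $S_-$, and finally extract the three-lines characterization through Cauchy's theorem and a weighted AM--GM inequality.

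\textbf{Reduction to a strip problem.} Under $s=e^x$, multiplicative convolution becomes additive convolution and $\|m_i\|_{L^2(\RR_+;\dd s/s)}=\|\phi_i\|_{L^2(\RR)}$ with $\phi_i(x)\coloneqq m_i(e^x)$. The standard identity $\inf\{\|\phi_1\|_2\|\phi_2\|_2:\phi_1\ast\phi_2=g\}=(2\pi)^{-1}\|\widehat g\|_{L^1}$ (Cauchy--Schwarz plus the splitting $\widehat\phi_{1,2}=|\widehat g|^{1/2}\sgn\widehat g,\;|\widehat g|^{1/2}$) eliminates the factorization, giving
\begin{equation*}
M_\gamma=(2\pi)^{-(\gamma-2)}\inf_{g}\|\widehat g\|_{L^1}^{\gamma-2}\int_\RR|g(u)-e^u|^2 e^{-\gamma u}\,\dd u.
\end{equation*}
Since $e^u$ has a ``Fourier $\delta$'' at the complex frequency $\xi=-\ii$, the finiteness of the weighted $L^2$ integral forces (by a density/approximation argument) $\widehat g=G$ to admit a meromorphic extension to the strip $-\gamma/2<\Im\xi<0$ with a simple pole at $-\ii$ of residue $\ii$. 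Deforming the contour of $\mathcal F^{-1}$ from $\RR$ down to $\RR-\ii\gamma/2$ picks up this residue as exactly $e^u$, so $(g-e^u)(u)=e^{\gamma u/2}\mathcal F^{-1}[G(\cdot-\ii\gamma/2)](u)$ and Plancherel gives $\int|g-e^u|^2 e^{-\gamma u}\,\dd u=(2\pi)^{-1}\|G(\cdot-\ii\gamma/2)\|_{L^2}^2$. Rescaling $\xi\mapsto\gamma\zeta/2$ normalizes the strip to $S_-$ (the pole moves to $-2\ii/\gamma$ with rescaled residue $2\ii/\gamma$) and yields
\begin{equation*}
M_\gamma=\tfrac{(\gamma/2)^{\gamma-1}}{(2\pi)^{\gamma-1}}\inf_{\widetilde G}\|\widetilde G\|_{L^1(\RR)}^{\gamma-2}\,\|\widetilde G(\cdot-\ii)\|_{L^2(\RR)}^2,
\end{equation*}
the infimum being over meromorphic $\widetilde G$ on $S_-$ with the prescribed simple pole.

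\textbf{Three-lines bound by Cauchy's theorem.} For any $h\in\mathbb H^{\infty,2}(S_-)$, the residue theorem applied to $\widetilde G\cdot h$ on $S_-$, followed by Hölder ($L^1$--$L^\infty$ on the top boundary and $L^2$--$L^2$ on the bottom), gives
\begin{equation*}
\tfrac{4\pi}{\gamma}|h(-2\ii/\gamma)|\le\|\widetilde G\|_{L^1}\|h_0\|_{L^\infty}+\|\widetilde G(\cdot-\ii)\|_{L^2}\|h_{-1}\|_{L^2}.
\end{equation*}
A direct computation shows that the symmetries $h\mapsto\beta h(\cdot-\alpha)e^{\ii\omega z}$ leave the ratio in \eqref{eq:3linesproblem} invariant while letting $(\|h_0\|_{L^\infty},\|h_{-1}\|_{L^2})$ range freely over $(0,\infty)^2$ (via $|\beta|$ and $\omega$). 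Writing $L$ for the supremum in \eqref{eq:3linesproblem}, substituting an extremizer and minimizing the right-hand side over both boundary norms by weighted AM--GM with exponents $((\gamma-2)/\gamma,2/\gamma)$ yields
\begin{equation*}
\|\widetilde G\|_{L^1}^{\gamma-2}\|\widetilde G(\cdot-\ii)\|_{L^2}^2\ge \tfrac{4(\gamma-2)^{\gamma-2}}{\gamma^{\gamma}}\Bigl(\tfrac{4\pi L}{\gamma}\Bigr)^{\gamma},
\end{equation*}
and multiplying by the prefactor from the reduction step gives $M_\gamma\ge 16\pi(\gamma-2)^{\gamma-2}\gamma^{-(\gamma+1)}L^\gamma$.

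\textbf{Equality, existence, and uniqueness.} For the matching upper bound I would build an extremal $\widetilde G$ from a maximizer $h^\ast$ of \eqref{eq:3linesproblem}: the equality cases in Hölder dictate that $|h_0^\ast|=\|h_0^\ast\|_{L^\infty}$ (with matching phases) on the support of $\widetilde G|_\RR$ and $\widetilde G(\cdot-\ii)\propto\overline{h_{-1}^\ast}$, and the equality case in AM--GM fixes the norm ratio $\|\widetilde G\|_{L^1}\|h_0^\ast\|_{L^\infty}:\|\widetilde G(\cdot-\ii)\|_{L^2}\|h_{-1}^\ast\|_{L^2}=(\gamma-2):2$; the overall scale of $\widetilde G$ is then pinned by the residue condition. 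Existence of $h^\ast$ follows by a compactness argument: along a maximizing sequence normalized by $\|h_0\|_{L^\infty}=\|h_{-1}\|_{L^2}=1$, Lemma~\ref{lem:existenceboundary} provides weak-$\ast$ and weak limits of the boundary values, which transfer by a Poisson-type reconstruction to pointwise convergence inside $S_-$, in particular at $-2\ii/\gamma$. Uniqueness up to $h\mapsto\beta h(\cdot-\alpha)e^{\ii\omega z}$ reflects exactly the three-parameter invariance group of the ratio (independent rescalings of $\|h_0\|_{L^\infty}$ and $\|h_{-1}\|_{L^2}$ plus the real translation), while strict convexity pins the remaining degrees of freedom. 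The main obstacle will be the rigorous justification of the contour-shift duality when $\widetilde G$ is merely $L^1$ on one line and $L^2$ on the other; this is precisely what the $L^\infty+L^2$ decomposition of $h$ on every intermediate line in Definition~\ref{def:pq-hardy} is engineered to deliver.
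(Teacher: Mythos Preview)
Your reduction step and the lower bound via the residue pairing are essentially correct and give a pleasant alternative to the paper's route. The paper proceeds differently: after the same $L^1$ reformulation (your Wiener-algebra identity is Lemma~\ref{lem:fourier-L1}), it applies Fenchel--Rockafellar duality to the convex problem $\inf\{F_\gamma(m):\|m\|_{L^1}\le 1\}$, obtaining a dual problem on $L^\infty$ (Lemma~\ref{lem:dualproblem} and Theorem~\ref{thm:scaleinvariantdual}), and only then translates the dual into the $\mathbb{H}^{\infty,2}$ language via Theorem~\ref{thm:holomorphicextension}. Your argument instead casts the primal as a problem over meromorphic $\widetilde G$ (this is the content of Proposition~\ref{prop:holomorphicprimal}) and pairs it directly against $h\in\mathbb{H}^{\infty,2}(S_-)$ by the residue theorem, followed by H\"older and AM--GM. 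For the inequality $M_\gamma\ge 16\pi(\gamma-2)^{\gamma-2}\gamma^{-(\gamma+1)}L^\gamma$ this is a clean, self-contained proof of weak duality that bypasses the abstract convex-analysis machinery; what the Fenchel--Rockafellar route buys is that \emph{strong} duality (equality) and existence of the dual optimizer come for free.

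The genuine gap is in your equality step. You propose to build the extremal $\widetilde G$ from a maximizer $h^\ast$ by imposing the equality conditions in your chain of inequalities, in particular $\widetilde G(\cdot-\ii)\propto\overline{h^\ast_{-1}}$. But $h^\ast$ is holomorphic on $S_-$, so $\overline{h^\ast_{-1}}$ is the boundary value of an \emph{anti}-holomorphic function on $S_-$; there is no a priori reason why it should also be the lower boundary value of a function that is meromorphic on $S_-$ with the prescribed simple pole and whose upper boundary value lies in $L^1$. These constraints are not independent, and satisfying all of them simultaneously is exactly what singles out the optimizer. In the paper this circularity is broken by Fenchel--Rockafellar, which yields equality abstractly; the primal--dual relation you are reaching for (that the dual optimizer is a shift of the meromorphic continuation of the primal optimizer) is established only afterwards, through the Euler--Lagrange equation (Lemmas~\ref{lem:eulerlagrange} and~\ref{lem:holomorphicEulerLagrange}). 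Your compactness sketch for existence is plausible once translation is used to pin the location of $\|h_{-2/\gamma}\|_{L^\infty}$, but uniqueness does not follow from ``strict convexity'' of the ratio---in the paper it comes from strict convexity of $\|\cdot\|_{L^2_{-\gamma}}^2$ in the convex dual formulation, transported back through the bijection of Theorem~\ref{thm:holomorphicextension}.
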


This three lines variational problem resembles the variational problem associated to the famous Hadamard three lines lemma. In the latter, one replaces the $L^2$ norm in the denominator by the $L^\infty$ norm; one can then apply the maximum principle to conclude that the constant function is the unique maximizer (up to the symmetry described above), which gives the statement of the Hadamard three lines lemma. For problem~\eqref{eq:3linesproblem}, however, the same strategy does not apply as the functions considered 
\begin{enumerate}[label=(\roman*)]
	\item need to be in $L^2$ along the line $\R+i$, in particular they may be unbounded,
	\item only extend to the boundary of the strip in a weak sense to be made precise later, and
	\item the maximum principle does not apply. 
\end{enumerate}
Nevertheless, it turns out that problem~\eqref{eq:3linesproblem} can also be solved in the sense that an explicit analytic formula for the optimizers can be given. This is the content of the next theorem.

\begin{theorem}[Optimizer] \label{thm:optimizer}
For $\gamma >2$, the unique optimizer (up to symmetries) of Problem~\eqref{eq:3linesproblem} is given by
\begin{align*}
 h(z) = B_\gamma(z) \mathrm{e}^{\theta_\gamma(z)},
\end{align*}
where $B_\gamma(z) = \frac{z-z_{\gamma}}{z-\overline{z}_{\gamma}}$ is the Blaschke factor (for the upper half plane) with zero at $z_{\gamma} = \ii(2-\frac{2}{\gamma}) \in \CC_+$, and the function $\theta_\gamma$ is defined via
\begin{align*}
	\theta_\gamma(z) \coloneqq \frac{1}{2\pi} \lim_{\epsilon \downarrow 0} \int_{|k| \geq \epsilon} \frac{g_\gamma(k) (\mathrm{e}^{\ii zk} - \ii zk)}{k(\cosh(2k)-1)} \mathrm{d}k, 
    \text{ with } 
    g_\gamma(k) \coloneqq \pi \left(2\mathrm{e}^{-(2-\frac{2}{\gamma})|k|}  + \mathrm{e}^{-\frac{2}{\gamma}|k|} - \mathrm{e}^{-(4-\frac{2}{\gamma})|k|}\right).
\end{align*}
\end{theorem}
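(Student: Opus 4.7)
The plan is to derive the optimizer explicitly via a Hardy-type inner--outer factorization combined with the Euler--Lagrange (EL) conditions of the variational problem \eqref{eq:3linesproblem}. After normalizing $\|h_0\|_{L^\infty}=\|h_{-1}\|_{L^2}=1$ and using the translation, scaling, and phase symmetries listed in Theorem~\ref{thm:threelines}, I would reduce \eqref{eq:3linesproblem} to maximizing $|h(-2\ii/\gamma)|$ subject to these two boundary constraints. A duality/Lagrange multiplier argument then shows that any optimizer $h^*$ must saturate the $L^\infty$ constraint, i.e., $|h^*_0(x)|=1$ for a.e.\ $x\in\R$.

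Given $|h^*_0|\equiv 1$, a Hardy-type factorization on $S_-$ yields $h^* = B\cdot\mathrm{e}^{\theta}$, where $B$ is a finite Blaschke product for the upper half-plane (so that $|B|=1$ on $\R$ and $|B|>1$ inside $S_-$) and $\mathrm{e}^{\theta}$ is outer. The reflection symmetry $x\mapsto -x$ of the problem forces the zeros of $B$ to lie on the imaginary axis; a parameter-counting argument based on the three effective constraints (saturation on $y=0$, on $y=-1$, and the interior first-order condition at $-2\ii/\gamma$) shows that a single Blaschke factor suffices. Solving the EL equations at the interior point then pins down the imaginary part of the zero $z_\gamma=\ii\alpha$: balancing the amplification $|B(-2\ii/\gamma)|$ in the objective against the $L^2$ penalty on $y=-1$ yields $\alpha=2-2/\gamma$ (equivalently, the double Schwarz reflection of the target $-2\ii/\gamma$ first across $y=-1$, then across $y=0$).

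Next, I would construct $\theta$ explicitly. The real part $\re\theta$ is harmonic on $S_-$, vanishes on $y=0$ (since $|\mathrm{e}^{\theta_0}|\equiv 1$), and on $y=-1$ equals a specific boundary profile $\phi_\gamma$ dictated by the EL condition that $|h^*_{-1}|^2$ be proportional to the dual density of the $L^2$ constraint, modulated by $|B_{-1}|^2$. Using the Poisson representation of harmonic functions on the strip, whose Fourier symbol in the variable conjugate to $x$ involves $\sinh(2k)$ and whose reciprocal equals $2/(\cosh(2k)-1)$, I would write $\theta$ as a Fourier integral. Computing the Fourier transform of $\phi_\gamma$ via the identity $\widehat{\log(x^2+a^2)}(k)\propto -\mathrm{e}^{-|a||k|}/|k|$ together with the method of images on the strip then yields the three exponentials in $g_\gamma$ with exponents $\alpha$, $2-\alpha$, and $2+\alpha$; the coefficient $2$ in front of $\mathrm{e}^{-\alpha|k|}$ tracks the dual-$L^2$ contribution, while the $\pm\mathrm{e}^{-(2\mp\alpha)|k|}$ terms come from $\log|B_{-1}|^2$ after the image reflections. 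The subtraction $\mathrm{e}^{\ii kz}-\ii kz$ regularizes the double pole of the Poisson symbol at $k=0$ (coming from $\cosh(2k)-1\sim 2k^2$) and amounts to fixing the constant and linear parts of $\theta_\gamma$, as permitted by the scale and phase symmetries in Theorem~\ref{thm:threelines}.

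The main technical obstacle will be this last step: the rigorous identification of the boundary profile $\phi_\gamma$ on $y=-1$ and the matching of its Fourier transform with the exact coefficients in $g_\gamma$, while carefully handling the principal value regularization near $k=0$. Uniqueness of the optimizer up to the stated symmetry group is immediate from Theorem~\ref{thm:threelines}, so once the constructed $h=B_\gamma\mathrm{e}^{\theta_\gamma}$ is verified to belong to $\mathbb{H}^{\infty,2}(S_-)$ and to realize the supremum in \eqref{eq:3linesproblem}, the proof is complete.
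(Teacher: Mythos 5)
There is a genuine gap: your plan never provides a mechanism that certifies the constructed function is actually a maximizer. You work entirely in the ``necessity'' direction -- the optimizer must saturate the $L^\infty$ constraint, must factor as a \emph{finite} Blaschke product times an outer function, must have a \emph{single} zero on the imaginary axis located by ``parameter counting'' and ``balancing'' -- but none of these structural claims is proved, and they are genuinely delicate: a priori the inner part could contain infinitely many zeros or a singular inner factor, and differentiating the $L^\infty$-constraint requires identifying its subdifferential, i.e.\ essentially the unknown primal optimizer (the paper explicitly avoids even the simpler statement that the primal optimizer is nonzero a.e., calling it ``rather technical''). Your closing step, ``verify that $h=B_\gamma \mathrm{e}^{\theta_\gamma}$ realizes the supremum,'' is exactly what has no proof route in your outline. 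The paper instead argues by \emph{sufficiency}: Lemmas~\ref{lem:eulerlagrange} and \ref{lem:holomorphicEulerLagrange} show, via strict convexity and the strong duality of Theorem~\ref{thm:scaleinvariantdual}, that any $m\in p_\gamma+\mathbb{H}^{1,2}(S)$ which extends meromorphically to the double strip, is controlled up to the boundary, and satisfies the phase relation $\signum m_0=-c\,m_2$ is automatically a primal minimizer and $\beta m(\cdot+2\ii)$ a dual maximizer; the explicit ansatz is then checked against this equation by a Fourier computation (the transform of $\arctan(1/x)$ together with the identity \eqref{eq:phasefunctionid}), and uniqueness comes from Theorem~\ref{thm:threelines}. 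Some duality certificate of this kind is indispensable.

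The second gap is in how you propose to find $\theta$. Prescribing $\re\theta$ on $y=-1$ by ``the EL condition that $|h^*_{-1}|^2$ be proportional to the dual density of the $L^2$ constraint'' is circular, since that density is again the unknown primal optimizer; the actual optimality condition is the nonlinear, nonlocal relation $\signum h_{-2}=-h_0$ between the boundary phases of the \emph{analytic continuation} at $y=0$ and $y=-2$, not a pointwise prescription at the midline $y=-1$. Correspondingly, the kernel $1/(\cosh(2k)-1)$ in $\theta_\gamma$ reflects this width-$2$ structure (and the function is in fact built holomorphically on the width-$4$ strip $S_{(-2,2)}$, see Lemma~\ref{lem:singularPVphasefunction}); a Dirichlet problem on the width-$1$ strip with data on $y=0$ and $y=-1$, as you describe, would produce symbols built from $\sinh(k)$, $\cosh(k)$, so your derivation of $g_\gamma$ would not go through as stated even though the exponents you guess are the right ones. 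Finally, the membership $h\in\mathbb{H}^{\infty,2}(S_-)$ and the integrability needed to run the contour/Cauchy arguments rest on the growth estimate $\re\theta_\gamma(x+\ii y)\approx \pi|x|y$ of Lemma~\ref{lem:singularPVphasefunction}, which your outline does not address.
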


From the above formula, one can easily evaluate the values of $M_\gamma$ to high precision. The numerical values\footnote{The numerical values were obtained using the standard numerical integration routines of MATLAB and independently cross-checked with Mathematica.} of $M_\gamma$ for some values of $\gamma$ are displayed in Table~\ref{tab:Mgamma}.
\begin{table}[h]
\begin{tabular}{|c|c|}
 \hline
 $\gamma$ & $M_\gamma$ \\
 \hline
 3 & 0.371185695  \\
 \hline
 4 & 0.098174770  \\
 \hline
 5 & 0.040698664 \\
 \hline
 6 & 0.020862684\\
 \hline
 7 & 0.012143294\\
 \hline
 8 & 0.007698202 \\
 \hline
 9 & 0.005190491 \\
 \hline
\end{tabular}
\vspace*{0.2cm}
\caption{$M_\gamma$ for different values of $\gamma$.}\label{tab:Mgamma}
\end{table}

\subsection{Applications to CLR and LT inequalities}
Our main motivation for studying the variational problem~\eqref{eq:primal} comes from its recently discovered connection to the CLR inequality \cite{HKRV21,HKRV23}. Remarkably, the same variational problem also appears in a recent upper bound for the optimal constant in the Lieb--Thirring (LT) inequality \cite{FHJN21}. Let us briefly recall these results and explore their consequences in connection with our main results.

We start with the CLR inequality, for which in \cite[Theorem 1.3]{HKRV23} the authors proved the following upper bound for the CLR constant:
\begin{theorem}[CLR bound] \label{thm:CLRbound} Let $d \in \N$ and $0<\sigma <d/2$. Then the best constant $L_{0,d,\sigma}$ in the Cwikel-Lieb-Rozenblum inequality for the fractional Schr\"odinger operator $(-\Delta)^\sigma + V$,
\begin{align}
    N_0((-\Delta)^\sigma + V) \leq L_{0,d,\sigma} \int_{\R^d} V_-(x)^{\frac{d}{2\sigma}} \mathrm{d} x, \label{eq:CLR}
\end{align}
satisfies
\begin{align}
    \frac{L_{0,d,\sigma}}{L_{0,d,\sigma}^{\rm cl}} \leq \frac{1}{4} \frac{\gamma^{\gamma+1}}{ (\gamma-2)^{\gamma-2}} M_\gamma, \quad \mbox{with $\gamma = \frac{d}{\sigma}$,} \label{eq:CLRbound}
\end{align}
where the semi-classical constant is given by $L_{0,d,\sigma}^{\rm cl} = \frac{|B_1|}{(2\pi)^d}$ and $N_0$ denotes the number of negative eigenvalues (counting multiplicity).
\end{theorem}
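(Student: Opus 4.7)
The estimate~\eqref{eq:CLRbound} is \cite[Theorem 1.3]{HKRV23}, and my plan is to reproduce their argument. I would begin by invoking the Birman--Schwinger principle, which reformulates the counting function of negative eigenvalues as
\[
    N_0((-\Delta)^\sigma+V) = \#\bigl\{j: \mu_j(K_V)\geq 1\bigr\}, \qquad K_V \coloneqq V_-^{1/2}(-\Delta)^{-\sigma}V_-^{1/2},
\]
(with multiplicities), so that the task is reduced to estimating $\Tr\,\chi(K_V)$ for some $\chi\geq \1_{[1,\infty)}$ adapted to the multiplicative structure of the spectrum.

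The core step, and the one where the variational problem~\eqref{eq:primal} enters, would be to produce such a pointwise majorant of $\1_{[1,\infty)}$ via the Mellin transform on the multiplicative group $\R_+$. Concretely, for any admissible pair $m_1,m_2\in L^2(\R_+,\tfrac{\dd s}{s})$ I would build a function $\chi_{m_1,m_2}\geq \1_{[1,\infty)}$ for which $\Tr\,\chi_{m_1,m_2}(K_V)$ factors into (a) the multiplicative-convolution quantity $(\|m_1\|\|m_2\|)^{\gamma-2}\int_0^\infty|m_1\ast m_2(t)-t|^2\,t^{-\gamma}\tfrac{\dd t}{t}$ appearing inside the infimum of~\eqref{eq:primal}, and (b) the trace of a function of $(-\Delta)^{-\sigma}$ paired against $V_-$. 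The factor (b) can then be evaluated by using the explicit formula for the free heat semigroup $e^{-t(-\Delta)^\sigma}$ on $\R^d$ together with a Golden--Thompson / Feynman--Kac comparison, producing $\tfrac{1}{4}\tfrac{\gamma^{\gamma+1}}{(\gamma-2)^{\gamma-2}}L_{0,d,\sigma}^{\rm cl}\int_{\R^d} V_-^{d/(2\sigma)}\,\dd x$ after the substitution $\gamma=d/\sigma$.

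Finally, I would take the infimum over the admissible pairs $(m_1,m_2)$, which by definition~\eqref{eq:primal} yields $M_\gamma$ on the right-hand side and produces~\eqref{eq:CLRbound}. The main obstacle in carrying out this plan is the second step: identifying the sharp pointwise majorant $\chi_{m_1,m_2}$ so that the Mellin transform of $\chi_{m_1,m_2}-\1_{[1,\infty)}$ decouples into two $L^2(\R_+,\tfrac{\dd s}{s})$ factors against the reference function $\mathrm{id}:t\mapsto t$. The appearance of the affine function $t$ (rather than an indicator or a power) inside the absolute value in~\eqref{eq:primal} is precisely what makes this decoupling possible and is the central technical insight of \cite{HKRV23}. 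Once this representation is in place, the evaluation of the remaining constants $\tfrac14\gamma^{\gamma+1}/(\gamma-2)^{\gamma-2}$ and $L_{0,d,\sigma}^{\rm cl}$ is a routine Gamma-function computation of the free-semigroup trace; by contrast, obtaining the sharp value of $M_\gamma$ itself is precisely what Theorems~\ref{thm:threelines} and~\ref{thm:optimizer} of the present paper accomplish.
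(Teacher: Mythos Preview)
The paper does not prove Theorem~\ref{thm:CLRbound}; it simply quotes it as \cite[Theorem 1.3]{HKRV23}. You correctly identify this in your first sentence, so at the level of comparison with the present paper there is nothing more to say.

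Your sketch of the argument from \cite{HKRV23} is broadly right in spirit (Birman--Schwinger, multiplicative convolution structure, optimization over $m_1,m_2$), but the mechanism you describe in the middle step is not how that paper actually proceeds. There is no construction of a spectral majorant $\chi_{m_1,m_2}\geq \1_{[1,\infty)}$ followed by a Golden--Thompson or Feynman--Kac comparison. Instead, \cite{HKRV23} bounds the Birman--Schwinger operator directly via a maximal Fourier multiplier estimate (see \cite[Theorems~2.1 and~4.2]{HKRV23}, recalled in Appendix~\ref{sec:app-Fourier} of the present paper): one writes the kernel of a Cwikel-type operator $B_{f,g,m}$ with $m=m_1\ast m_2$, and the $L^2\to L^2$ bound on the associated maximal operator yields the Hilbert--Schmidt/trace control with constant $\|m_1\|_{L^2(\R_+,\frac{\dd s}{s})}\|m_2\|_{L^2(\R_+,\frac{\dd s}{s})}$. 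The residual integral $\int_0^\infty |m_1\ast m_2(t)-t|^2 t^{-\gamma}\frac{\dd t}{t}$ then arises from comparing $m$ to the symbol $t\mapsto t$ of the actual Birman--Schwinger kernel, and the explicit prefactor comes from integrating the free symbol in momentum space rather than from any semigroup trace. So your outline would need the maximal-multiplier step in place of the Golden--Thompson step to match the cited proof.
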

We can therefore combine the values obtained for $M_\gamma$ with the above result to update the best known upper bounds for the CLR constant.  The result is displayed in Table~\ref{tab:CLRbound}.
\begin{table}[ht]
\begin{center}
\begin{tabular}{ |c|c|c| } 
 \hline
 $d/\sigma$ & Optimal upper bound via VP & Best known so far  \\ 
 \hline
 3 & 7.51651  & 6.86924 \\
 \hline
 4 &  6.28319 & 6.03398 \\
 \hline
 5 & 5.88812 & 5.95405 \\
 \hline
 6 & 5.70334 & 5.77058\\
 \hline
 7 & 5.60029 & 5.67647\\
 \hline
 8 & 5.53645 & 5.63198\\
 \hline
 9 & 5.49398  & 5.62080\\
 \hline
 $d/\sigma \ra \infty$ & 5.34282 & -- \\
 \hline
\end{tabular}
\vspace*{0.2cm}
\caption{Comparison between previously best known upper bounds on the CLR constant and the best possible bounds via our solution of the variational  problem (VP) for $M_{\gamma}$ in \eqref{eq:primal}. The values in the third column of Table~\ref{tab:CLRbound} were extracted from \cite[Table 1]{HKRV23}.}
\end{center}
\label{tab:CLRbound}
\end{table}

\begin{remark}[Asymptotics of CLR upper bound]
    The asymptotic bound 
    \begin{align}
        \limsup_{d/\sigma \ra \infty} \frac{L_{0,d,\sigma}}{L_{0,d,\sigma}^{\rm cl}} \leq 5.342823
    \end{align}
    in Table~\ref{tab:CLRbound} is proved in Section~\ref{sec:asymptotics}. Moreover, one can show that in the opposite limit $\gamma = \frac{d}{\sigma} \downarrow 2$, the right hand side of \eqref{eq:CLRbound} has the asymptotic behaviour
    \begin{align}\label{eq:bound-asymptotic}
        \frac{1}{4} \frac{\gamma^{\gamma+1}}{ (\gamma-2)^{\gamma-2}} M_\gamma = \frac{2}{(\gamma-2)^{\gamma-1}} + \mathcal{O}(1),
    \end{align}
    where $\mathcal{O}(1)$ denotes a term bounded by a constant as $\gamma \downarrow 2$; this follows from the lower and upper bounds in \cite[Proposition 1.4]{HKRV23}, and we refer to that article for details.
    Note that the blowup as $\gamma\downarrow 2$ in \eqref{eq:bound-asymptotic} is expected, since the CLR inequality~\eqref{eq:CLR} does \emph{not} hold in the case $\frac{d}{\sigma}\leq 2$ \cite{LSW02,HHRV23}.
\end{remark}

We now turn to the Lieb--Thirring inequality. In \cite[Theorem 2]{FHJN21}, the authors proved the following upper bound for the LT inequality. 
\begin{theorem}[LT Bound]\label{thm:LTbound} For any $d\in \N$ and $\sigma >0$, the best constant $L_{1,d,\sigma}$ in the Lieb--Thirring inequality for the fractional Schr\"odinger operator $(-\Delta)^\sigma + V$,
\begin{align}
    \mathrm{Tr}[(-\Delta)^\sigma + V]_- \leq L_{1,d,\sigma} \int_{\R^d} V_-(x)^{1+\frac{d}{2\sigma}} \mathrm{d} x, \label{eq:LT}
\end{align}
satisfies
\begin{align}
    \frac{L_{1,d,\sigma}}{L_{1,d,\sigma}^{\rm cl}} \leq \frac{(d+2\sigma)^{2+\frac{d}{2\sigma}}}{d^{\frac{d}{2\sigma}} (2\sigma)^2}\mathcal{C}_{d,\sigma}, \quad \mbox{where $L_{1,d,\sigma}^{\rm cl} = \frac{2\sigma}{d+2\sigma} \frac{|B_1|}{(2\pi)^d}$,}
\end{align}
and
\begin{align}
    \mathcal{C}_{d,\sigma} = \frac{d}{2 \sigma} \inf_{\substack{f, \phi \in L^2(\R_+) \\ \norm{f}_{L^2(\R_+)} = 1}} \biggr\{ \biggr(\int_0^\infty \phi(t)^2 \mathrm{d}t\biggr)^{\frac{d}{2\sigma}} \int_0^\infty \frac{\bigr(1-\int_0^\infty \phi(s) f(ts) \mathrm{d}s\bigr)^2}{t^{1+\frac{d}{2\sigma}}} \mathrm{d} t \biggr\}.\label{eq:primalNAM}
\end{align}
\end{theorem}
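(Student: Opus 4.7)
The plan is to follow the Birman--Schwinger plus trace-integral strategy of \cite{FHJN21}, exploiting the dilation symmetry of the fractional Laplacian to produce the multiplicative convolution structure on $\R_+$ that appears in $\mathcal{C}_{d,\sigma}$. The goal is to upper bound $\mathrm{Tr}[(-\Delta)^\sigma - V_-]_-$ by an integral over a spectral parameter $e>0$ and then insert test functions $(f,\phi)$ that isolate a main term of the expected form $\int V_-^{1+d/(2\sigma)}$ together with an error term of convolution-defect type.

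First, express $\mathrm{Tr}[(-\Delta)^\sigma - V_-]_- = \int_0^\infty N_{-e}((-\Delta)^\sigma - V_-) \, \mathrm{d}e$ via the layer-cake formula, and apply the Birman--Schwinger principle to bound each $N_{-e}$ by a trace of a suitable function of the operator $K_e \coloneqq V_-^{1/2}((-\Delta)^\sigma + e)^{-1} V_-^{1/2}$. The elementary identity
\[
\mathrm{Tr}(K_e) = \int_{\R^d} V_-(x)\, ((-\Delta)^\sigma + e)^{-1}(x,x) \, \mathrm{d}x = c_{d,\sigma} \, e^{-1+\frac{d}{2\sigma}} \int_{\R^d} V_-(x)\, \mathrm{d}x
\]
encodes the scaling in the spectral parameter that turns the $e$-integral into a multiplicative-group integral once one substitutes $e = t^{-2\sigma}$.

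Second, insert an approximate unit of the multiplicative convolution on $\R_+$: write the identity $1 = \int_0^\infty \phi(s) f(ts) \, \mathrm{d}s + \bigl(1 - \int_0^\infty \phi(s) f(ts) \, \mathrm{d}s\bigr)$ and use it to split the $t$-integral into a main piece (the first summand, which after H\"older on $\R^d$ and the above scaling reproduces $\int_{\R^d} V_-^{1+d/(2\sigma)}$ with the prefactor $\frac{2\sigma}{d}\bigl(\int_0^\infty \phi(t)^2 \, \mathrm{d}t\bigr)^{d/(2\sigma)}$) and a remainder controlled by $\int_0^\infty \bigl(1 - \int_0^\infty \phi(s) f(ts) \, \mathrm{d}s\bigr)^2 t^{-1-d/(2\sigma)} \, \mathrm{d}t$ times $\int_{\R^d} V_-^{1+d/(2\sigma)}$. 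Normalizing $\|f\|_{L^2(\R_+)} = 1$ and optimizing over $(f,\phi)$ then produces exactly the constant $\mathcal{C}_{d,\sigma}$ defined in \eqref{eq:primalNAM}.

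Third, track the explicit prefactor: the powers of $d+2\sigma$, $d$, and $2\sigma$ arise from combining the Jacobian of $e = t^{-2\sigma}$, the volume of the unit ball encoded in the resolvent kernel diagonal $c_{d,\sigma}$, and the semiclassical normalization $L_{1,d,\sigma}^{\mathrm{cl}} = \frac{2\sigma}{d+2\sigma} \frac{|B_1|}{(2\pi)^d}$. The main obstacle is deriving the precise quadratic error term $\bigl(1 - \int \phi(s) f(ts)\, \mathrm{d}s\bigr)^2 t^{-1-d/(2\sigma)}$ without cross terms: this requires a completion-of-squares argument exploiting the self-duality of the Haar measure $\mathrm{d}t/t$, rather than a naive Cauchy--Schwarz that would yield an $L^\infty$-in-$t$ control on the defect and destroy the clean variational form of $\mathcal{C}_{d,\sigma}$.
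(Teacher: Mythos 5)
Note first that the paper does not prove this statement at all: Theorem~\ref{thm:LTbound} is quoted verbatim from \cite[Theorem 2]{FHJN21}, so the benchmark for your attempt is the proof given there, which is \emph{not} a Birman--Schwinger argument. Your first step already fails concretely. For $K_e = V_-^{1/2}((-\Delta)^\sigma+e)^{-1}V_-^{1/2}$ one has $\mathrm{Tr}(K_e)=\int_{\R^d}V_-(x)\,((-\Delta)^\sigma+e)^{-1}(x,x)\,\mathrm{d}x$, and the diagonal $(2\pi)^{-d}\int_{\R^d}(|\xi|^{2\sigma}+e)^{-1}\mathrm{d}\xi$ is \emph{infinite} whenever $d\geq 2\sigma$ (e.g.\ $d=3$, $\sigma=1$, or the relativistic case $\sigma=1/2$), so the ``elementary identity'' you display does not hold in the cases of main interest. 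Even in the subcritical range $d<2\sigma$, the bound $N_{-e}\leq \mathrm{Tr}(K_e)\sim e^{\frac{d}{2\sigma}-1}\int V_-$ is not integrable in $e$ at infinity and carries the wrong power of the potential ($\int V_-$ instead of $\int V_-^{1+\frac{d}{2\sigma}}$); the standard repairs (replacing $V_-$ by $(V_--e/2)_+$ before applying Birman--Schwinger, or passing to higher powers $\mathrm{Tr}(K_e^n)$) introduce extra constants and would never reproduce the exact variational constant $\mathcal{C}_{d,\sigma}$, which is the whole point of the theorem.

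The second genuine gap is the one you flag yourself but do not close: inserting the scalar identity $1=\int_0^\infty\phi(s)f(ts)\,\mathrm{d}s+\bigl(1-\int_0^\infty\phi(s)f(ts)\,\mathrm{d}s\bigr)$ into an $e$- (or $t$-) integral of traces cannot yield a remainder that is \emph{quadratic} in the defect without cross terms; a naive Cauchy--Schwarz at that level loses constants. In \cite{FHJN21} the quadratic structure is obtained at the level of the Hilbert space, not of a scalar integral: one uses the variational principle $\mathrm{Tr}[H]_-=\sup_{0\leq\gamma\leq1}(-\mathrm{Tr}[H\gamma])$, splits the functions in the range of $\gamma$ pointwise in $x$ by means of the multiplier $\int_0^\infty\phi(s)f(s\,T)\,\mathrm{d}s$ (with $T$ a rescaled kinetic operator whose energy scale is coupled to $V_-(x)$), performs a completion of squares with a free parameter, computes the main term exactly by Plancherel (this is where $\int_0^\infty\phi^2$ enters), absorbs the error term, which is exactly quadratic in the spectral defect $1-\int_0^\infty\phi(s)f(ts)\,\mathrm{d}s$, into the kinetic energy, and finally optimizes over the scale and the splitting parameter --- this last optimization is what produces the explicit prefactor $\frac{(d+2\sigma)^{2+\frac{d}{2\sigma}}}{d^{\frac{d}{2\sigma}}(2\sigma)^2}$. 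Your sketch asserts the outcomes of these computations (``after H\"older and the above scaling reproduces\dots'') but the two steps that actually carry the theorem --- a finite, correctly scaling replacement for the divergent Birman--Schwinger trace, and the cross-term-free quadratic defect --- are precisely the ones left unproved, so the proposal as it stands does not establish the statement.
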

At first glance, problem~\eqref{eq:primalNAM} looks different from problem~\eqref{eq:primal}. However, setting $m_1(t) \coloneqq \sqrt{2} f(t^2) t$ and $m_2(t) =\sqrt{2} \phi(t^{-2}) t^{-1}$, one can easily verify that 
\begin{align*}
   \norm{m_1}_{L^2(\R_+,\tfrac{\mathrm{d}s}{s})} = \norm{f}_{L^2(\R_+)}, 
   \quad 
   \norm{m_2}_{L^2(\R_+,\tfrac{\mathrm{d}s}{s})} = \norm{\phi}_{L^2(\R_+)}, 
\end{align*}
and 
\begin{align*}
	\int_0^\infty f(ts) \phi(s) \mathrm{d}s = \frac{(m_1 \star m_2)(t^{\frac12})}{t^{\frac12}},
\end{align*}
where the convolution is understood with respect to the multiplicative group, see \eqref{eq:convolution}. 
Thus, from the change of variables $t^{\frac12} \to t$ and a scaling argument (see Lemma~\ref{lem:scaling}) we find that
\begin{align}
    \mathcal{C}_{d,\sigma} = \frac{d}{\sigma} M_{2+\frac{d}{\sigma}}, \quad \mbox{where $\mathcal{C}_{d,\sigma}$ is defined in \eqref{eq:primalNAM} and $M_\gamma$ is defined in \eqref{eq:primal}.}
\end{align}
Consequently, we can also use the computed values for $M_\gamma$ in Table~\ref{tab:Mgamma} to update the best known upper bounds on the LT constant. 
\begin{corollary}[LT Bound Updated]\label{cor:improvedLTbound} For any $d\in \N$ and $\sigma >0$, the best constant in the Lieb--Thirring inequality for the fractional Schr\"odinger operator $(-\Delta)^\sigma + V$ satisfies
\begin{align}
    \frac{L_{1,d,\sigma}}{L_{1,d,\sigma}^{\rm cl}} \leq \frac{1}{4}\frac{\gamma^{\frac{\gamma+2}{2}}}{(\gamma-2)^{\frac{\gamma-4}{2}}}  M_{\gamma}, \quad\quad \mbox{with $\gamma = 2 + \frac{d}{\sigma}$,}
\end{align}
where $M_\gamma$ is defined in \eqref{eq:primal}. In particular, we have
\begin{align}
    \frac{L_{1,1,1}}{L^{\rm cl}_{1,1,1}} \leq 1.44655, 
    \quad
    \frac{L_{1,3,1/2}}{L^{\rm cl}_{1,3,1/2}} \leq 1.75177,  
    \quad
    \limsup_{\frac{d}{\sigma} \ra \infty} \frac{L_{1,d,\sigma}}{L^{\rm cl}_{1,d,\sigma}} \leq 1.96551,
    \quad \text{and} \quad 
    \limsup_{\frac{d}{\sigma} \downarrow 0} \frac{L_{1,d,\sigma}}{L_{1,d,\sigma}^{\rm cl}} = 1. \label{eq:marginalimprovedLT}
\end{align}
\end{corollary}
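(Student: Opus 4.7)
The plan is to derive the corollary as a direct consequence of Theorem~\ref{thm:LTbound} combined with the identity $\mathcal{C}_{d,\sigma} = \frac{d}{\sigma} M_{2+d/\sigma}$ that was already established in the discussion preceding the statement. Concretely, substituting this identity into the bound of Theorem~\ref{thm:LTbound} and introducing the parameter $\gamma := 2 + d/\sigma$, so that $d = \sigma(\gamma-2)$, $d+2\sigma = \sigma\gamma$, $\frac{d}{2\sigma} = \frac{\gamma-2}{2}$, and $2+\frac{d}{2\sigma} = \frac{\gamma+2}{2}$, the $\sigma$-factors cancel (as they must by the scaling invariance of the dimensionless ratio $L_{1,d,\sigma}/L^{\rm cl}_{1,d,\sigma}$, cf.\ Lemma~\ref{lem:scaling}) and the prefactor $\frac{(d+2\sigma)^{2+d/(2\sigma)}}{d^{d/(2\sigma)}(2\sigma)^2}\cdot \frac{d}{\sigma}$ simplifies to $\frac{1}{4}\frac{\gamma^{(\gamma+2)/2}}{(\gamma-2)^{(\gamma-4)/2}}$. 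This is the general bound asserted in the corollary.

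For the two explicit numerical values, the cases $(d,\sigma)=(1,1)$ and $(d,\sigma)=(3,\tfrac12)$ correspond to $\gamma = 3$ and $\gamma = 8$, respectively. Plugging the values $M_3 \approx 0.371185695$ and $M_8 \approx 0.007698202$ from Table~\ref{tab:Mgamma} into the just-derived formula gives $\frac{1}{4}\cdot 3^{5/2}\cdot M_3 \approx 1.44655$ and $\frac{1}{4}\cdot \frac{8^5}{6^2}\cdot M_8 \approx 1.75177$, which are the two stated bounds.

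For the limit $d/\sigma \downarrow 0$, i.e.\ $\gamma \downarrow 2$, I would invoke the asymptotic \eqref{eq:bound-asymptotic} from the preceding remark, namely $\frac{1}{4}\frac{\gamma^{\gamma+1}}{(\gamma-2)^{\gamma-2}}M_\gamma = \frac{2}{(\gamma-2)^{\gamma-1}} + \mathcal{O}(1)$, which gives $M_\gamma \sim \frac{8}{\gamma^{\gamma+1}(\gamma-2)}$ as $\gamma\downarrow 2$. Substituting into the LT prefactor yields $\frac{\gamma^{(\gamma+2)/2}}{4(\gamma-2)^{(\gamma-2)/2}}(1+o(1))$; since $\gamma^{(\gamma+2)/2}\to 4$ and $(\gamma-2)^{(\gamma-2)/2} \to 1$ as $\gamma\downarrow 2$, the bound tends to $1$. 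Combined with the universal lower bound $L_{1,d,\sigma}/L_{1,d,\sigma}^{\rm cl} \geq 1$ (Weyl asymptotics, which is classical), this yields equality in the limit, as stated.

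The limit $d/\sigma \to \infty$ (equivalently $\gamma\to\infty$) is the main technical obstacle and is deferred to Section~\ref{sec:asymptotics}. There the asymptotic behaviour of $M_\gamma$ is extracted from the explicit formula for the optimizer in Theorem~\ref{thm:optimizer}, the integral defining $\theta_\gamma$ being analyzed by a careful saddle-point / dominated-convergence argument using the kernel $g_\gamma(k)/(k(\cosh(2k)-1))$ and the fact that $B_\gamma(z)\to 1$ on compact subsets of the strip. Once the limiting value of $\gamma^{(\gamma+2)/2}(\gamma-2)^{-(\gamma-4)/2}M_\gamma$ is computed from that analysis, one immediately reads off the claimed asymptotic upper bound $1.96551$ by division. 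All remaining steps in this last paragraph are routine once the asymptotic for $M_\gamma$ is in hand; the analytic work is entirely in that asymptotic, not in the reduction itself.
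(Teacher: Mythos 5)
Your reduction is correct and is essentially the paper's: the general bound follows by substituting $\mathcal{C}_{d,\sigma}=\tfrac{d}{\sigma}M_{2+d/\sigma}$ into Theorem~\ref{thm:LTbound} and simplifying with $\gamma=2+\tfrac{d}{\sigma}$ (the $\sigma$-factors do cancel exactly as you compute), and the two numerical bounds are the Table~\ref{tab:Mgamma} values at $\gamma=3$ and $\gamma=8$. The only places you deviate are the two limits. For $\gamma\downarrow 2$ you invoke \eqref{eq:bound-asymptotic} (i.e.\ the bounds of \cite[Proposition 1.4]{HKRV23}) to get $M_\gamma\sim 8\,\gamma^{-\gamma-1}(\gamma-2)^{-1}$, whereas the paper uses its own limit \eqref{eq:lowdlimit} from Section~\ref{sec:asymptotics} together with Theorem~\ref{thm:threelines}; both give that the upper bound tends to $1$, and your explicit appeal to the Weyl lower bound $L_{1,d,\sigma}\geq L_{1,d,\sigma}^{\rm cl}$ to upgrade to equality is exactly the step the paper leaves implicit (your intermediate expression $\frac{\gamma^{(\gamma+2)/2}}{4(\gamma-2)^{(\gamma-2)/2}}$ absorbs a harmless factor $\frac{\gamma^{\gamma+1}}{8}=1+o(1)$, so the conclusion stands). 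For $\gamma\to\infty$ you defer to Section~\ref{sec:asymptotics}, which is also what the paper does, so this is not a gap in the corollary itself; note, however, that your sketch of that section is off in one detail: $B_\gamma$ does not tend to $1$ but to the Blaschke factor with zero at $2\ii$, which is precisely why the weight $\frac{x^2+9}{x^2+1}$ appears in \eqref{eq:highdlimit}. The actual mechanism is that, by Theorem~\ref{thm:threelines}, the right-hand side of the corollary's bound equals $4\pi\bigl(1-\tfrac{2}{\gamma}\bigr)^{\gamma/2}\,\frac{\|h_{-2/\gamma}\|_{L^\infty}^{\gamma}}{\|h_0\|_{L^\infty}^{\gamma-2}\|h_{-1}\|_{L^2}^{2}}$ for the optimizer $h$, whose limit is $\mathrm{e}^{-1}\cdot 5.342823\approx 1.96551$ by \eqref{eq:highdlimit} and $\bigl(1-\tfrac{2}{\gamma}\bigr)^{\gamma/2}\to \mathrm{e}^{-1}$.
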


Let us now compare the values derived above with previous results. For the LT inequality, our bound $L_{1,1,1}/L_{1,1,1}^{\rm cl} \leq 1.44655$ improves only marginally over the previously best known bound $L_{1,1,1}/L_{1,d,1}^{\rm cl} \leq 1.45579$ derived in \cite{FHJN21}. This shows that the \emph{lower} bounds derived for $\mathcal{C}_{d,\sigma}$ and $M_\gamma$ respectively in \cite[Corollary 8]{FHJN21}  and \cite[Proposition 1.4]{HKRV23} were rather optimistic and the exact optimal values of \eqref{eq:primal} are in fact much closer to the \emph{upper} bounds obtained in these works. Moreover, from the \emph{induction in dimension} argument due to Laptev--Weidl \cite{LW00} and Hundertmark--Laptev--Weidl \cite{HLW00}, it is well-known that
\begin{align}
 \frac{L_{1,d,1}}{L_{1,d,1}^{\rm cl}} \leq \frac{L_{1,d',1}}{L_{1,d',1}^{\rm cl}} \quad \mbox{for any $d \geq d' \in \N$.} \label{eq:dimensionlifting}
\end{align}
We can therefore extend the upper bound in \eqref{eq:marginalimprovedLT} from the case $d=1=\sigma$ to all dimensions $d \geq 1$ with $\sigma=1$; in particular, our results marginally improve the best upper bounds for these cases as well. On the other hand, for the polyharmonic Schr\"odinger operator with $\sigma \neq 1$, the induction-in-dimension argument is not available and our bounds significantly improve over the best known bounds as $\frac{d}{\sigma}$ becomes large. For instance, for the ultra-relativistic operator in three dimensions the bound in \eqref{eq:marginalimprovedLT} only slightly improves over the previous bound $L_{1,3,1/2}/L_{1,3,1/2}^{\rm cl} = (1/0.826)^3 \approx 1.77443$ \cite{FHJN21} while the asymptotic bound in \eqref{eq:marginalimprovedLT} is about $38\%$ better than the asymptotic bound  
\begin{align*}
    \limsup_{\frac{d}{\sigma} \ra \infty} \frac{L_{1,d,\sigma}}{L_{1,d,\sigma}^{\rm cl}} = \mathrm{e} \approx 2.71828
\end{align*}
in \cite[Corollary 3]{FHJN21}.

For the CLR inequality the situation is similar. The new upper bounds presented in Table~\ref{tab:CLRbound} do not improve over the original upper bounds derived by Lieb \cite{Lie76} for $d=3,4$. For dimension $d\geq 5$, however, we obtain improvements over the values derived in \cite{HKRV23}; these improvements become significantly better as $d/\sigma$ grows. Moreover, in contrast with the LT inequality, where the case $d=\sigma =1$ gives the best bound for all higher dimensions, the upper bound for the CLR inequality is monotonically decreasing in $d/\sigma$; therefore, the upper bounds for large $d/\sigma$ presented here (see Table~\ref{tab:CLRbound}) provide significant improvements not only for the fractional case but also for the case $\sigma=1$.

To conclude our brief comparison with previous works, let us make a few remarks. 
First, both the LT and the CLR inequalities can be stated in a dual form (see, e.g., \cite[Equation 6]{FHJN21}). In particular, the upper bounds derived here can be translated into lower bounds for the dual constants, denoted by $K_{1,d,\sigma}$ for the LT inequality in \cite{FHJN21}. 
Second, in view of \cite[Theorem 1.7]{HKRV23} and \cite[Remark 7]{FHJN21}, the bounds derived here also apply to operator-valued Schr\"odinger operators. 
Third, in view of Theorems~\ref{thm:CLRbound} and \ref{thm:LTbound}, we expect the variational problem for $M_{\gamma}$ to also provide interesting upper bounds for the whole family of LT inequalities (i.e.\ sums of $p$\textsuperscript{th} moments of the negative eigenvalues with $p\geq 0$). This will be the content of future work.
Finally, let us refer to the review articles \cite{Fra21,Nam21,Sch22} and the books \cite{FLW23,EFGHW21} for further information on recent developments regarding  Lieb--Thirring and related inequalities, as well as the state of the art on the Lieb--Thirring conjecture. 

\subsection{Outline of the proofs of our main theorems}

We now outline the proof of Theorems~\ref{thm:threelines} and ~\ref{thm:optimizer}. First, based on a characterization of the space of Fourier transforms of $L^1(\R)$ functions, we derive an alternative formulation \eqref{eq:primalL1} of the variational problem \eqref{eq:primal}. This reformulation, via  its symmetries, allows us to obtain a convex formulation of the problem; we then apply the Fenchel-Rockafellar duality theorem to obtain a dual formulation of \eqref{eq:primalL1}, see Theorem~\ref{thm:scaleinvariantdual} below.

The next step of our proof brings in some tools from complex analysis to restate both the primal and dual variational problems in \eqref{eq:primalL1} and \eqref{eq:scaleinvariantdual} as minimization problems over the Hardy-type spaces $\mathbb{H}^{p,q}(S)$ on a strip in the complex plane introduced above. More precisely, we show that the domains of these problems are in one-to-one correspondence with the boundary values of functions in $\mathbb{H}^{p,q}(S)$ for suitable $p,q \in [1,\infty]$. From this correspondence, and the previous reformulations, we immediately obtain the three lines problem stated in Theorem~\ref{thm:threelines}.

The final step in our proof is to effectively solve the Euler-Lagrange (EL) equations associated to the three lines problem from Theorem~\ref{thm:threelines}. This is a quite challenging problem, owing to the fact that the EL equation (see eqs.~\eqref{eq:primaloptimizer} and \eqref{eq:ELholomorphic}) is non-local and non-linear in the sense that it relates the boundary values of a meromorphic function at the opposite ends of a strip in a non-linear way. To overcome these challenges, we implement the three following main ideas:
\begin{enumerate}[label=(\roman*)]
	\item Factorize the poles (and zeros) of the optimizer via a Blaschke product decomposition. For this, it turns out that the single Blaschke factor introduced in Theorem~\ref{thm:dualoptimizer} suffices.
	\item Linearize the EL equation by applying a logarithmic transformation. More precisely, we make the ansatz $h(z) = B_\gamma(z) \mathrm{e}^{\theta_\gamma(z)}$ for the optimizer, which allows us to re-state the non-linear EL equation~\eqref{eq:ELholomorphic} for $h$ as a linear equation for $\theta_\gamma$.
	\item Overcome the non-locality by using the fact that holomorphic functions on the strip can be viewed as the Fourier transform of rapidly decaying tempered distributions with complex argument.\footnote{This is actually the idea that led us to introduce the Hardy like spaces $\mathbb{H}^{p,q}(S)$.} In other words, we make the ansatz $\theta_\gamma(z) = \widehat{\eta}(z)$ for some tempered distribution $\eta$, which results in a linear local equation that can be formally solved pointwise.
\end{enumerate}
The rest of the proof then consists in showing that the formal guess obtained for the optimizer is in the right function space and indeed satisfies the EL equation.

\subsection{Overview of the article}
We end this section with a brief outline of the rest of the paper. In Section~\ref{sec:dual}, we carry out the first step of our proof and obtain a primal and dual reformulation of problem~\eqref{eq:primal} over the classical Lebesgue spaces $L^1(\R)$ and $L^\infty(\R)$. 
In Section~\ref{sec:holomorphic}, we study the boundary values of functions in $\mathbb{H}^{p,q}(S)$, and derive the reformulation of our variational problem over these spaces. 
The construction of the optimizer of \eqref{eq:3linesproblem} is carried out in Section~\ref{sec:optimisers}. 
In Section~\ref{sec:asymptotics}, we prove the asymptotic upper bounds for the CLR and LT inequalities stated in Table~\ref{tab:CLRbound} and Corollary~\ref{cor:improvedLTbound}. 
In Appendix~\ref{sec:app-Fourier}, we briefly discuss how the characterization of the space of Fourier transform of $L^1(\R)$ functions (Lemma~\ref{lem:fourier-L1} below) naturally appears in the maximal Fourier multiplier bound derived in \cite[Theorem 4.2]{HKRV23}, which is of independent interest. 
The proof of some technical lemmas used in Section~\ref{sec:holomorphic} are presented in Appendix~\ref{sec:app-hardy}. 
In Appendix~\ref{sec:app-primaloptimizer}, we prove existence of optimizers for problem~\eqref{eq:primal}, even though this is not explicitly needed for the proof of our main theorems (since we explicitly construct the optimizers).
\addtocontents{toc}{\protect\setcounter{tocdepth}{1}}
\section{The Dual problem} \label{sec:dual}
In this section, we show that Problem~\eqref{eq:primal} can be reformulated in $L^1(\R)$ with respect to standard Lebesgue measure. We then derive a dual formulation for this problem on the space  $L^\infty(\R)$. 

\subsection{Formulation over \texorpdfstring{$L^1(\RR)$}{L1(R)}}
It is convenient to transform \eqref{eq:primal} to $\RR$ by an exponential change of coordinates, yielding
\begin{align}
M_\gamma = \inf_{m_1,m_2 \in L^2(\R)} \biggr\{ \biggr(\norm{m_1}_{L^2(\R)} \norm{m_2}_{L^2(\R)} \biggr)^{\gamma-2} \norm{m_1\ast m_2 - \exp}_{L^2_\gamma(\RR)}^2 \biggr \}, \label{eq:primalR}
\end{align}
where $\exp: \R \rightarrow \R$ is the exponential function $k \mapsto \mathrm{e}^k$, $L^2(\R)$ now denotes the classical Lebesgue space of square integrable functions on $\R$, $m_1\ast m_2$ is the standard convolution
\begin{align*}
m_1 \ast m_2(k) = \int_\R m_1(k-u) m_2(u) \mathrm{d} u,
\end{align*}
and for $\gamma \in \R$, $L^2_\gamma$ will be used throughout this article to denote the exponentially weighted $L^2$ spaces\footnote{Notice that the weight is exponentially growing in one direction and exponentially decaying in the other. This comes from the exponential change of coordinates transforming \eqref{eq:primal} into \eqref{eq:primalR}.}
\begin{align}
 	L^2_\gamma(\RR) = L^2(\R; \mathrm{e}^{-\gamma k} \mathrm{d} k) = \biggr\{ m: \R \rightarrow \C : \norm{m}_{L^2_\gamma}^2 = \int_\R |m(k)|^2 \mathrm{e}^{-\gamma k} \mathrm{d} k < \infty \biggr\}. \label{eq:expL2space}
 \end{align}
This sets our problem in the more conventional framework of $L^p$ spaces on $\R$.
 
We use the following convention for the Fourier transform of a function $m\in L^1(\RR)$,
\begin{align*}
	\widehat{m}(k) =  \int_{\RR} m(x) \,\mathrm{e}^{-\ii k x}\,\mathrm{d}x,
\end{align*}
as well as its extension to the space of tempered distributions $\mathcal{S}'(\RR)$. With this convention, we have the following characterization of the Fourier transform of integrable functions.
\begin{lemma}[Fourier transform of integrable functions] \label{lem:fourier-L1} 
 Let $m \in \mathcal{S}'(\R)$. Then $\widecheck{m} \in L^1(\R)$ if and only if there exists $m_1, m_2 \in L^2(\R)$ with $m_1 \ast m_2 = m$. Moreover, we have the equality
\begin{align}
	\norm{\widecheck{m}}_{L^1(\R)} = \min \{ \norm{m_1}_{L^2} \norm{m_2}_{L^2} : m_1 \ast m_2 = m\}, \label{eq:variationalL1}
\end{align} 
where the minimum is attained.
\end{lemma}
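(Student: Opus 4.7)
My plan is to reduce the statement to a factorization question for $L^1(\R)$ via the relation between convolution and pointwise multiplication under the inverse Fourier transform. Precisely, for $m_1, m_2 \in L^2(\R)$ one has
\[
    \widecheck{m_1 \ast m_2} = 2\pi\, \widecheck{m}_1\, \widecheck{m}_2
\]
as tempered distributions; this identity I would first establish on Schwartz functions by direct computation and then extend by density using Plancherel's identity $\norm{\widecheck{f}}_{L^2} = (2\pi)^{-1/2} \norm{f}_{L^2}$. Note that the right-hand side is the product of two $L^2$ functions, hence automatically in $L^1$, so this identity already contains the information that $\widecheck{m_1 \ast m_2}\in L^1(\R)$.

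The inequality ``$\leq$'' in \eqref{eq:variationalL1} then follows directly from Cauchy--Schwarz: given any factorization $m = m_1 \ast m_2$ with $m_i \in L^2(\R)$, the above identity yields $\widecheck{m} \in L^1(\R)$ together with the bound $\norm{\widecheck{m}}_{L^1} \leq 2\pi \norm{\widecheck{m}_1}_{L^2}\norm{\widecheck{m}_2}_{L^2} = \norm{m_1}_{L^2}\norm{m_2}_{L^2}$. For the reverse inequality and attainment of the minimum, I would produce an explicit optimal factorization from a polar decomposition of $\phi \coloneqq \widecheck{m} \in L^1(\R)$. Setting $\sigma(x) = \phi(x)/|\phi(x)|$ on $\{\phi \neq 0\}$ and $0$ elsewhere, the two functions $\phi_1 \coloneqq |\phi|^{1/2}$ and $\phi_2 \coloneqq \sigma\, |\phi|^{1/2}$ lie in $L^2(\R)$ with equal norms $\norm{\phi_i}_{L^2}^2 = \norm{\phi}_{L^1}$, and satisfy $\phi_1 \phi_2 = \phi$ pointwise. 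Defining $m_i \coloneqq (2\pi)^{-1/2}\, \widehat{\phi}_i \in L^2(\R)$ and undoing the above convolution identity then gives $m_1 \ast m_2 = m$ together with $\norm{m_1}_{L^2}\norm{m_2}_{L^2} = \norm{\widecheck{m}}_{L^1}$, which both completes the equivalence of the two conditions and shows that the infimum is attained by this particular pair.

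The only point calling for a touch of care is that $m$ is a priori only a tempered distribution, whereas the argument manipulates honest functions: however, as soon as $\widecheck{m} \in L^1(\R)$ one knows that $m$ is a bounded continuous function (being the Fourier transform of an $L^1$ function), so every identity above can be read pointwise without distributional subtleties. I do not foresee a real obstacle in the argument beyond this mild point — the whole proof is a quantitative form of the classical identification $\mathcal{F}^{-1}(L^1(\R)) = L^2(\R) \cdot L^2(\R)$, with sharpness furnished precisely by the symmetric pointwise polar factorization $\phi = \phi_1 \phi_2$.
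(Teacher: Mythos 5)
Your proposal is correct and follows essentially the same route as the paper: the convolution--multiplication identity under the (inverse) Fourier transform together with Plancherel and Cauchy--Schwarz gives the inequality, and attainment comes from the symmetric polar factorization $\widecheck{m} = |\widecheck{m}|^{1/2}\cdot\bigl(\sgn \widecheck{m}\, |\widecheck{m}|^{1/2}\bigr)$, exactly as in the paper (which merely splits the phase evenly between the two factors). The normalizing constants in your argument are consistent with the paper's convention, so no adjustments are needed.
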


\begin{remark*} The proof of Lemma~\ref{lem:fourier-L1} is a straightforward consequence of the convolution property of the Fourier transform and Plancherel's theorem. In fact, the first statement is known and can be found, e.g., in \cite[Theorem 1.6.3]{Rud62}. However, we could not find a reference for the variational characterization of $\norm{\widecheck{m}}_{L^1}$ stated in \eqref{eq:variationalL1}. Therefore, we present the simple proof below.
\end{remark*}

\begin{remark*}[Connection with maximal Fourier multipliers] 
	Lemma~\ref{lem:fourier-L1} also allows us to  improve the maximal Fourier mutliplier bound in \cite[Theorem 2.1, Theorem 4.2]{HKRV23} in a natural way. For the details, see Appendix~\ref{sec:app-Fourier}.
\end{remark*}

\begin{proof}[Proof of Lemma~\ref{lem:fourier-L1}] 
Suppose that $\widehat{m} \in L^1(\R)$. Then there exist measurable functions $|\widecheck{m}| : \R \rightarrow [0,\infty)$ and $\theta:\R \rightarrow [0,2\pi)$ such that $\widecheck{m}(x) = |\widecheck{m}|(x) \mathrm{e}^{\ii \theta(x)}$. In particular, if we define $(2\pi)^{\frac12} \widecheck{m_1} = (2\pi)^{\frac12} \widecheck{m_2} \coloneqq \sqrt{|\widecheck{m}|} \mathrm{e}^{\ii \theta(x)/2} \in L^2(\R)$, then $m_1\ast m_2 = m$ by the convolution property of the Fourier transform, and 
\begin{align*}
\norm{\widecheck{m}}_{L^1} = (2\pi) \norm{\widecheck{m_1}\widecheck{m_2}}_{L^1} = (2\pi) \norm{\widecheck{m_1}}_{L^2} \norm{\widecheck{m_2}}_{L^2} = \norm{m_1}_{L^2} \norm{m_2}_{L^2}
\end{align*}
by Plancherel's theorem. The converse implication and the inequality $\norm{\widecheck{m_1}} \leq \norm{m_1}_{L^2} \norm{m_2}_{L^2}$ follows from reversing the previous steps and using the Cauchy-Schwarz inequality. 
\end{proof}

The above characterization is the first key step of our analysis; it allows us to reformulate problem~\eqref{eq:primalR} in the following way:
\begin{lemma}[Reformulation on $L^1(\R)$]\label{lem:L1primal}
Let $\gamma>2$ and define the functional \[F_{\gamma}: L^1(\RR) \to \RR \cup\{+\infty\}, \quad  m\mapsto F_{\gamma}(m) \coloneqq \|\widehat{m} - \exp\|_{L^2_{\gamma}}^2.\] 
Then $F_{\gamma}$ is strictly convex and lower semi-continuous in $L^1(\RR)$ with domain 
\begin{align*}
	\dom F_{\gamma} = \left\{ m\in L^1(\RR): \widehat{m} - \exp \in L^2_{\gamma}(\RR) \right\}.
\end{align*}
Moreover, we have
\begin{align}
	M_\gamma = \inf_{m\in L^1(\R)} \norm{m}_{L^1}^{\gamma-2}  F_\gamma(m), \label{eq:primalL1}
\end{align}	
with $M_\gamma$ defined in equation~\eqref{eq:primalR}.
\end{lemma}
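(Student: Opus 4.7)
The plan is to treat each assertion of the lemma in turn, with the main ingredient for the identity \eqref{eq:primalL1} being Lemma~\ref{lem:fourier-L1} applied to the reciprocal pair $(m,\widehat{m})$. The convexity and lower semi-continuity come essentially for free from standard Fourier-analytic facts, and the bulk of the work is setting up the right change of variables between the decomposition $m_1*m_2$ appearing in \eqref{eq:primalR} and the function $\widehat{m}$ appearing in \eqref{eq:primalL1}.

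For the domain and convexity claims, I would first note that the map $m\mapsto \widehat{m}-\exp$ is affine from $L^1(\RR)$ into the space of tempered distributions, and that the squared norm $\|\cdot\|_{L^2_\gamma}^2$ is strictly convex on $L^2_\gamma(\RR)$. Thus $F_\gamma$ is convex, and if equality in the convexity inequality held for two elements $m,\tilde m\in\dom F_\gamma$, strict convexity of the norm would force $\widehat{m}=\widehat{\tilde m}$ almost everywhere, and injectivity of the Fourier transform on $L^1(\RR)$ would then give $m=\tilde m$. For lower semi-continuity, any sequence $m_n\to m$ in $L^1$ gives $\widehat{m_n}\to\widehat{m}$ uniformly on $\RR$, so pointwise convergence of $|\widehat{m_n}-\exp|^2$ and Fatou's lemma applied with the weight $e^{-\gamma k}\,\dd k$ yield $F_\gamma(m)\le\liminf_n F_\gamma(m_n)$. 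The identification of the domain is immediate from the definition.

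For the identity \eqref{eq:primalL1}, the key observation is that Lemma~\ref{lem:fourier-L1}, applied to $\widehat{m}$ in place of the distribution appearing there, gives
\begin{align*}
 \|m\|_{L^1(\RR)} = \|\widecheck{\widehat{m}}\|_{L^1(\RR)} = \min\bigl\{\,\|n_1\|_{L^2}\,\|n_2\|_{L^2}\colon n_1,n_2\in L^2(\RR),\ n_1*n_2 = \widehat{m}\,\bigr\},
\end{align*}
with the minimum attained. Substituting this characterisation into the right-hand side of \eqref{eq:primalL1} gives
\begin{align*}
 \inf_{m\in L^1(\RR)} \|m\|_{L^1}^{\gamma-2} F_\gamma(m)
 = \inf_{m\in L^1(\RR)}\, \min_{\substack{n_1,n_2\in L^2\\ n_1*n_2 = \widehat{m}}} \bigl(\|n_1\|_{L^2}\|n_2\|_{L^2}\bigr)^{\gamma-2}\,\|n_1*n_2-\exp\|_{L^2_\gamma}^2,
\end{align*}
and since for every $n_1,n_2\in L^2(\RR)$ the convolution $n_1*n_2$ is the Fourier transform of the $L^1$ function $2\pi\widecheck{n_1}\widecheck{n_2}$, the combined infimum ranges over all $(n_1,n_2)\in L^2(\RR)^2$, which is precisely the expression for $M_\gamma$ in \eqref{eq:primalR}.

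The only mild subtlety I expect is the direction in which infinities are handled: one has to check that if $\widehat{m}\notin\exp+L^2_\gamma(\RR)$ then both sides of the identity automatically agree at the value $+\infty$, and that when $m\notin L^1(\RR)$ in the expression $\widehat{m}=n_1*n_2$ the product $\|m\|_{L^1}^{\gamma-2}F_\gamma(m)$ is correctly set to $+\infty$. This is an easy bookkeeping step relying only on the fact that for $n_1,n_2\in L^2(\RR)$ the convolution is continuous and bounded, so no hidden pathology arises.
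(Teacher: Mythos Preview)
Your proposal is correct and follows essentially the same approach as the paper: strict convexity from the strictly convex $L^2_\gamma$ norm composed with an affine map, lower semi-continuity via uniform convergence of Fourier transforms plus Fatou, and the identity \eqref{eq:primalL1} from Lemma~\ref{lem:fourier-L1} after swapping the roles of real and Fourier space. The paper's proof is terser (it says only that the reformulation is ``immediate from Lemma~\ref{lem:fourier-L1} after interchanging the role of real and reciprocal (Fourier) space''), but your expanded version, including the bookkeeping on infinities, matches it step for step.
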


\begin{remark}[Notation]\label{rem:Egamma}
	In Sections \ref{sec:holomorphic} and \ref{sec:optimisers}, we will mostly be working with the functional
\begin{align}
    \mathcal{E}_\gamma(m) \coloneqq \norm{m}_{L^1}^{\gamma-2}  F_\gamma(m) = \norm{m}_{L^1(\R)}^{\gamma-2} \norm{\widehat{m} -\exp}_{L^2_\gamma}^2, \label{eq:Efunctional}
\end{align}
with domain
\begin{align}\label{eq:Edomain}
	\dom \mathcal{E}_\gamma = \dom F_{\gamma} = \{ m \in L^1(\R) : \widehat{m} - \exp \in L^2_{\gamma} \}.
\end{align}
To simplify notation, we write $\exp_{\alpha}: \RR \to \RR$, $x\mapsto \exp_{\alpha}(x):= \mathrm{e}^{\alpha x}$ for $\alpha \in \RR$.
\end{remark}

\begin{proof}[Proof of Lemma~\ref{lem:L1primal}]
	The strict convexity of $F_\gamma$ follows directly from the strict convexity of the norm in $L^2_\gamma$ squared. Moreover, the lower semi-continuity follows from Fatou's lemma by observing that convergence in $L^1(\R)$ implies uniform convergence of the Fourier transform. The reformulation of \eqref{eq:primalR} in \eqref{eq:primalL1} is immediate from Lemma~\ref{lem:fourier-L1} after interchanging the role of real and reciprocal (Fourier) space by considering $m$ as a function in $L^1(\R)$.
\end{proof}

The functional $F_\gamma$ has the following scaling property: for any $\alpha >0$ and $m \in L^1(\R)$, there holds 
\begin{align}
    F_\gamma(m_\alpha) = \alpha^{\gamma-2} F_\gamma(m) \quad\mbox{where}\quad m_\alpha(x) \coloneqq \alpha^{-1-\ii x} m(x) . \label{eq:scaling}
\end{align}
As a consequence, Problem~\eqref{eq:primalL1} can be reformulated in several ways.
\begin{lemma}[Scaling property]\label{lem:scaling} For any $p,q >0$ and $a>0$ we have
\begin{align*}
    \inf_{m \in L^1(\R)} \left\{ F_\gamma(m)^p + a \norm{m}_{L^1}^q \right\}= C(\gamma,p,q,a) M_\gamma^{\frac{pq}{(\gamma-2)p+q}},
\end{align*}
where $M_\gamma$ is defined in \eqref{eq:primalR} and
\begin{align*}
    C(\gamma,p,q,a) = \biggr(\frac{(\gamma -2)p}{q}\biggr)^{\frac{q}{(\gamma-2)p+q}} \biggr(\frac{q}{(\gamma-2)p} + 1\biggr) a^{\frac{(\gamma-2)p}{(\gamma-2)p+q}}.
\end{align*}
Moreover, we have
\begin{align}
    M_\gamma = \inf_{\substack{m\in L^1(\R)\\ \norm{m}_{L^1} \leq 1}}  F_\gamma(m) \label{eq:primalX2}.
\end{align}
In particular, if a minimizer of \eqref{eq:primal} exists, it is unique up to the 
transformation $m_\alpha(x) \coloneqq m(x)\alpha^{-\ii x-1}$, $\alpha >0$.
\end{lemma}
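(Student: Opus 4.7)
The plan is to exploit the scaling identity \eqref{eq:scaling}: writing $m_\alpha(x) = \alpha^{-1-\ii x} m(x)$ gives $F_\gamma(m_\alpha) = \alpha^{\gamma-2} F_\gamma(m)$ and, from $|m_\alpha(x)| = \alpha^{-1}|m(x)|$, also $\norm{m_\alpha}_{L^1(\R)} = \alpha^{-1} \norm{m}_{L^1(\R)}$. Together with the strict convexity of $F_\gamma$ already established in Lemma~\ref{lem:L1primal}, this reduces the entire lemma to elementary calculus and a one-line convexity argument.

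For the main identity, since $m \mapsto m_\alpha$ is a bijection of $L^1(\R)$ for each fixed $\alpha > 0$, we may replace $m$ by $m_\alpha$ inside the infimum to obtain
\[
\inf_{m \in L^1(\R)} \{F_\gamma(m)^p + a \norm{m}_{L^1}^q\} = \inf_{m \in L^1(\R)} \inf_{\alpha > 0} \{\alpha^{(\gamma-2)p} F_\gamma(m)^p + a \alpha^{-q} \norm{m}_{L^1}^q\}.
\]
The inner minimization over $\alpha$ is routine: setting $r = (\gamma-2)p$, $s = q$, $X = F_\gamma(m)^p$ and $Y = a\norm{m}_{L^1}^q$, the strictly convex function $\alpha \mapsto \alpha^r X + \alpha^{-s} Y$ attains its minimum on $(0,\infty)$ at $\alpha^{r+s} = sY/(rX)$, with value $Y^{r/(r+s)} X^{s/(r+s)} (r/s)^{s/(r+s)} (r+s)/r$. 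Recognising $(r/s)^{s/(r+s)} (r+s)/r = \left(\frac{(\gamma-2)p}{q}\right)^{q/((\gamma-2)p+q)}\left(\frac{q}{(\gamma-2)p}+1\right)$, and regrouping the $X$- and $Y$-powers as $a^{(\gamma-2)p/((\gamma-2)p+q)} (F_\gamma(m) \norm{m}_{L^1}^{\gamma-2})^{pq/((\gamma-2)p+q)}$, yields exactly $C(\gamma,p,q,a) \cdot (F_\gamma(m) \norm{m}_{L^1}^{\gamma-2})^{pq/((\gamma-2)p+q)}$; taking the infimum over $m$ and invoking the definition of $M_\gamma$ from \eqref{eq:primalL1} finishes this step.

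The alternative formulation $M_\gamma = \inf_{\norm{m}_{L^1} \leq 1} F_\gamma(m)$ is obtained by the same rescaling trick: given any $m \neq 0$, choosing $\alpha = \norm{m}_{L^1}$ produces $m_\alpha$ with unit $L^1$ norm satisfying $F_\gamma(m_\alpha) = \norm{m}_{L^1}^{\gamma-2} F_\gamma(m)$, which shows $M_\gamma = \inf_{\norm{m}_{L^1} = 1} F_\gamma(m) \geq \inf_{\norm{m}_{L^1} \leq 1} F_\gamma(m)$; the reverse inequality is immediate from $\norm{m}_{L^1}^{\gamma-2} \leq 1$ on the closed unit ball (and $m = 0$ causes no issue since $F_\gamma(0) = \norm{\exp}_{L^2_\gamma}^2 = +\infty$ for $\gamma > 2$).

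For uniqueness, suppose $m_1, m_2$ are two minimizers of \eqref{eq:primal}; both are nonzero by the preceding observation, and their rescalings $\widetilde m_i := (m_i)_{\norm{m_i}_{L^1}}$ to unit $L^1$ norm are minimizers of $F_\gamma$ on the convex set $\{\norm{m}_{L^1} \leq 1\}$. Strict convexity of $F_\gamma$ on this convex set forces $\widetilde m_1 = \widetilde m_2$, so $m_1$ and $m_2$ belong to the same orbit under $m \mapsto m_\alpha$. The only place that demands care is the exponent bookkeeping in the $\alpha$-minimization, which is mechanical once the substitutions $r = (\gamma-2)p$, $s = q$ are made.
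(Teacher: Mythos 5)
Your proof is correct and follows essentially the same route as the paper: exploit the scaling $F_\gamma(m_\alpha)=\alpha^{\gamma-2}F_\gamma(m)$, exchange the infimum with an explicit one-dimensional minimization in $\alpha$, and deduce uniqueness from the strict convexity of $F_\gamma$ together with the convexity of the unit ball. The only difference is that you spell out the derivation of \eqref{eq:primalX2} and the uniqueness argument, which the paper dispatches with ``similar arguments'' and a one-line appeal to strict convexity.
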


\begin{proof} From \eqref{eq:scaling} we have
\begin{align}
    \inf_{m \in L^1(\R)}\{ F_\gamma(m)^p + a \norm{m}_{L^1}^q \} &= \inf_{m \in L^1(\R)} \inf_{\alpha>0} \{F_\gamma(m_\alpha)^p + a \norm{m_\alpha}_{L^1}^q \} \nonumber\\
    &= \inf_{m \in L^1(\R)} \inf_{\alpha>0} \{\alpha^{(\gamma-2)p} F_\gamma(m)^p + \alpha^{-q} a \norm{m}_{L^1}^q\}. \label{eq:scaling-proof}
\end{align}
Minimizing the function 
$\alpha\mapsto f_m(\alpha) \coloneqq \alpha^{(\gamma-2)p} F_\gamma(m)^p + \alpha^{-q} a \norm{m}_{L^1}^q$ 
yields
\begin{align*}
    \min_{\alpha >0} f_m(\alpha) = \biggr(\frac{(\gamma -2)p}{q}\biggr)^{\frac{q}{(\gamma-2)p+q}} \biggr(\frac{q}{(\gamma-2)p} + 1\biggr) a^{\frac{(\gamma-2)p}{(\gamma-2)p+q}} \biggr(\norm{m}_{L^1}^{\gamma-2} F_\gamma(m)\biggr)^{\frac{pq}{(\gamma-2)p+q}},
\end{align*}
which together with \eqref{eq:scaling-proof} completes the proof. Equation \eqref{eq:primalX2} follows from similar arguments. Moreover, the uniqueness of the minimizer (provided it exists) follows from the strict convexity of $F_\gamma$ and the convexity of $\norm{\cdot}_{L^1}$.
\end{proof}

\subsection{Duality}
We now observe that eq.~\eqref{eq:primalX2} can be re-written as
\begin{align*}
    M_\gamma = \inf_{m \in L^1(\R)} \{ F_\gamma(m) + G(m) \},
\end{align*}
where $G:L^1(\RR) \to \{0, +\infty\}$ is the characteristic function (in the convex analysis terminology) of the unit ball in $L^1$, i.e.\
\begin{align*}
	G(m) = \begin{cases}
		0, & \text{if}\quad  \|m\|_{L^1(\RR)}\leq 1,\\
		+\infty, & \text{otherwise}.
	\end{cases}
\end{align*} 
The idea is now to apply the Fenchel-Rockafellar duality theorem to obtain a dual formulation of our problem. For this, we start with the following extension of Plancherel's theorem.

\begin{lemma}[Plancherel's theorem]\label{lem:parseval} Let $1\leq p \leq \infty$, then for any $g \in L^p(\R)$ with $\widehat{g} \in L^2_{-\gamma}(\R)$ and $m \in L^{\frac{p}{p-1}}(\R)$ with $\widehat{m} \in L^2_\gamma(\R)$, there holds
\begin{align}
 \left\langle g, m \right\rangle \coloneq \int_\R \overline{g(x)} m(x) \mathrm{d} x= \frac{1}{2\pi} \int_\R \overline{\widehat{g}(k)} \widehat{m}(k) \mathrm{d} k= \frac{1}{2\pi} \left\langle \widehat{g}, \widehat{m}\right\rangle. \label{eq:parseval}
\end{align}
\end{lemma}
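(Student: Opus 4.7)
The plan is to extend the classical $L^2$ Plancherel identity to this wider class of pairs by a symmetric Fourier cutoff. First, I would observe that both sides are absolutely convergent: the left-hand side by H\"older's inequality (given $g \in L^p$ and $m \in L^{p'}$), and the right-hand side by the weighted Cauchy--Schwarz bound
\begin{align*}
\int_{\R} |\widehat{g}(k)\widehat{m}(k)|\, \mathrm{d}k \leq \|\widehat{g}\|_{L^2_{-\gamma}}\|\widehat{m}\|_{L^2_\gamma} < \infty.
\end{align*}
Thus only the equality itself needs work.

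Next I would fix a real, even cutoff $\chi \in C_c^\infty(\R)$ with $\chi(0) = 1$ and $0\leq \chi \leq 1$, set $\chi_R(k) := \chi(k/R)$, and let $\psi_R := \mathcal{F}^{-1}\chi_R$. Since $\chi$ is real and even, $\psi_R$ is a real even Schwartz function, and by scaling $\|\psi_R\|_{L^1}$ is independent of $R$, so $\{\psi_R\}_{R>0}$ is an $L^1$-bounded approximate identity. Defining $g_R := g * \psi_R$ and $m_R := m * \psi_R$, the convolution theorem gives $\widehat{g_R} = \widehat{g}\chi_R$ and $\widehat{m_R} = \widehat{m}\chi_R$, which are compactly supported and lie in $L^2(\R)$ thanks to the weighted-$L^2$ assumptions on $\widehat{g}$ and $\widehat{m}$ (equivalent to ordinary $L^2$ on any bounded interval). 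By Plancherel, $g_R, m_R \in L^2(\R)$, and by Young's inequality $g_R \in L^p$, $m_R \in L^{p'}$ with norms uniform in $R$. The classical $L^2$ Plancherel identity applied to $g_R, m_R$ then yields
\begin{align*}
\langle g_R, m_R\rangle = \frac{1}{2\pi}\int_{\R} \overline{\widehat{g}(k)}\,\widehat{m}(k)\, \chi_R(k)^2\, \mathrm{d}k,
\end{align*}
and passing $R\to\infty$ on the right by dominated convergence (with integrable majorant $|\widehat{g}\widehat{m}|$) produces exactly $\tfrac{1}{2\pi}\langle \widehat{g}, \widehat{m}\rangle$.

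The remaining task is to show $\langle g_R, m_R\rangle \to \langle g, m\rangle$. From the splitting
\begin{align*}
\langle g_R, m_R \rangle - \langle g, m \rangle = \langle g_R - g, m_R\rangle + \langle g, m_R - m\rangle,
\end{align*}
both terms vanish by H\"older together with $g_R \to g$ in $L^p$ and $m_R \to m$ in $L^{p'}$, \emph{provided} $1 < p < \infty$. The hard part will be the endpoint cases $p\in\{1,\infty\}$, where one of these $L^p$ convergences fails in norm. I would handle these by a further splitting, e.g.\ for $p = \infty$, $p' = 1$: the second term above is controlled by $\|m_R - m\|_{L^1}\to 0$ together with $g \in L^\infty$, while for the first term I would write $\langle g_R - g, m_R\rangle = \langle g_R - g, m\rangle + \langle g_R - g, m_R - m\rangle$, where the first piece tends to zero by weak-$*$ convergence $g_R \to g$ in $L^\infty$ tested against the fixed $L^1$ function $m$, and the second is bounded by $2\|g\|_{L^\infty}\|m_R - m\|_{L^1} \to 0$. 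The case $p = 1$ is completely analogous, and this closes the proof.
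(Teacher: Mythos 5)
Your proof is correct and follows essentially the same route as the paper: truncate in Fourier space by a smooth, compactly supported cutoff (equivalently, mollify $g$ and $m$), apply the classical $L^2$ Plancherel identity to the truncated pair, and pass to the limit, using the integrable majorant $|\widehat{g}\,\widehat{m}|$ on the Fourier side and approximate-identity convergence (with weak-$*$ convergence at the endpoints $p\in\{1,\infty\}$) on the physical side. The paper's only cosmetic difference is that it uses two nested cutoffs at scales $\epsilon$ and $\epsilon/2$ and strong convergence in the dual pair $L^2_{\gamma}$, $L^2_{-\gamma}$ in place of your symmetric cutoff plus dominated convergence.
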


\begin{proof} 
Let $\widehat{\phi} \in C_c^\infty(\R)$ be a real-valued function satisfying $\widehat{\phi}(x) = 1$ for $|x| \leq 1/2$ and $\widehat{\phi}(x) = 0$ for $|x|\geq 1$, and set $\widehat{\phi}_\epsilon(x) = \widehat{\phi}(\epsilon x)$. From the definition of the Fourier transform we have $\widehat{\phi}_\epsilon \widehat{g} = \widehat{\phi_\epsilon \ast g}$ for any $g \in \mathcal{S}'(\R)$. Moreover, if $g \in L^p(\R)$ for some $1\leq p \leq \infty$ with $\widehat{g} \in L^2_{-\gamma}(\RR)$  for some $\gamma \in \R$, then by dominated convergence and the approximate identity property of $\phi_\epsilon$, respectively, we have
\begin{align*}
    \widehat{\phi}_\epsilon \widehat{g}  \ra \widehat{g} \quad \mbox{strongly in $L^2_{-\gamma}$}\quad \mbox{and}\quad  \phi_\epsilon \ast g \rightarrow g \quad \mbox{strongly in $L^p(\R)$ (or weak-$\ast$ for $p=\infty$)} 
\end{align*}
as $\epsilon \downarrow 0$. From the assumptions on $\widehat{\phi}$, we see that $\widehat{\phi}_\epsilon = \widehat{\phi}_\epsilon \widehat{\phi}_{\epsilon/2}$ and $\widehat{\phi}_\epsilon \widehat{g} \in L^2(\R)$. Similar properties hold with $g$ replaced by $m$. Hence, using the above convergence and Plancherel's theorem for $L^2(\R)$ functions, we conclude that
\begin{align*}
    \frac{1}{2\pi} \inner{\widehat{g}, \widehat{m}} = \lim_{\epsilon \downarrow 0}\frac{1}{2\pi} \inner{\widehat{\phi}_\epsilon \widehat{g}, \widehat{\phi}_{\epsilon/2} \widehat{m}} = \lim_{\epsilon \downarrow 0} \inner{\phi_\epsilon \ast g, \phi_{\epsilon/2} \ast m} = \inner{g,m}.
\end{align*}
\end{proof}

\begin{remark}
	Recall that the Fourier transform of a function $g\in L^{p}(\RR)$, $2<p\leq  \infty$ has to be understood in the sense of tempered distributions. Since the Fourier transform of a tempered distribution is itself a tempered distribution, the condition $\widehat{g} \in L^2_{-\gamma}(\RR)$ means that $\widehat{g}$ can be identified with a function that lies in $L^2_{-\gamma}(\RR)$, which puts a one-sided (exponential) decay constraint on the Fourier transform $\widehat{g}$. This allows us to define certain integrals of $\widehat{g}$ against exponentially growing functions. 
\end{remark}

\begin{lemma}\label{lem:explimit}
Let $\gamma>2$ and $g \in L^\infty(\R)$ with $\widehat{g} \in L^2_{-\gamma}(\R)$. Then the limit
\begin{align}
 \lim_{\epsilon \downarrow 0} \int_\R \phi(\epsilon k) \widehat{g}(k) \mathrm{e}^k \mathrm{d} k ,\quad \phi \in C^\infty_c(\R), \quad \phi(0) = 1, \label{eq:explimit}
\end{align}
exists and is independent of $\phi$. We can therefore define 
\begin{align}\label{eq:exp-duality}
	\left\langle \exp, \widehat{g} \right\rangle = \int_{\RR} \widehat{g}(k) \mathrm{e}^k \mathrm{d} k \coloneqq \lim_{\epsilon \downarrow 0} \int_\R \phi(\epsilon k) \widehat{g}(k) \mathrm{e}^k \mathrm{d} k 
\end{align}
for any $\phi \in C^\infty_c(\R)$ such that $\phi(0) = 1$. 
\end{lemma}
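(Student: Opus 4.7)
The plan is to split the regularized integral at $k=0$ via a smooth cutoff and treat the two pieces differently. Fix $\chi_-\in C^\infty(\RR)$ with $\chi_-\equiv 1$ on $(-\infty,-1]$ and $\chi_-\equiv 0$ on $[0,\infty)$, and write
\[
I_\epsilon(\phi) := \int_\RR \phi(\epsilon k)\,\widehat g(k)\,\mathrm{e}^k\,\mathrm{d}k = I_\epsilon^{(+)}(\phi)+I_\epsilon^{(-)}(\phi)
\]
according to the weights $(1-\chi_-(k))\,\mathrm{e}^k$ and $\chi_-(k)\,\mathrm{e}^k$. On the support of the first, which lies in $[-1,\infty)$, Cauchy--Schwarz against the $L^2(\RR)$ function $\widehat g\,\mathrm{e}^{\gamma k/2}$ yields
\[
\int_{-1}^\infty |\widehat g(k)|\,\mathrm{e}^k\,\mathrm{d}k \leq \norm{\widehat g}_{L^2_{-\gamma}}\Bigl(\int_{-1}^\infty \mathrm{e}^{(2-\gamma)k}\,\mathrm{d}k\Bigr)^{\!1/2}<\infty
\]
(crucially using $\gamma>2$), and dominated convergence gives $I_\epsilon^{(+)}(\phi)\to \int_\RR (1-\chi_-(k))\,\widehat g(k)\,\mathrm{e}^k\,\mathrm{d}k$, a $\phi$-independent limit.

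For the delicate piece $I_\epsilon^{(-)}$, absolute integrability of $\widehat g\,\mathrm{e}^k$ on $(-\infty,0)$ can fail, so I pass to the Fourier side. The key structural fact is that $h(k):=\chi_-(k)\,\mathrm{e}^k$ lies in $\mathcal S(\RR)$ (it is $C^\infty$, supported in $(-\infty,0]$, with exponential decay of all derivatives at $-\infty$), hence $F_\epsilon:=\phi(\epsilon\,\cdot)\,h\in C_c^\infty(\RR)\subset\mathcal S(\RR)$ for every $\epsilon>0$. Since $g\in L^\infty\subset \mathcal S'$ and the $L^2_{-\gamma}$ function $\widehat g$ represents its tempered-distribution class against $C_c^\infty$ test functions, Fourier duality becomes the honest Lebesgue identity
\[
I_\epsilon^{(-)}(\phi) = \int_\RR \widehat g(k)\,F_\epsilon(k)\,\mathrm{d}k = \int_\RR g(x)\,\widehat{F_\epsilon}(x)\,\mathrm{d}x,
\]
and the plan is to pass to the limit on the right-hand side by dominated convergence in $x$. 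Pointwise convergence $\widehat{F_\epsilon}(x)\to \widehat h(x)$ is immediate by dominated convergence in $k$, with dominator $\norm{\phi}_{L^\infty}\,h\in L^1$; to produce an $\epsilon$-uniform $L^1(\mathrm{d}x)$ majorant I combine the trivial bound $|\widehat{F_\epsilon}(x)|\leq \norm{F_\epsilon}_{L^1}\leq \norm{\phi}_{L^\infty}\norm{h}_{L^1}$ with $n$-fold integration by parts. Leibniz and $(\phi(\epsilon\,\cdot))^{(j)}(k)=\epsilon^j\phi^{(j)}(\epsilon k)$ give, uniformly for $\epsilon\in(0,1]$,
\[
|x|^n\,|\widehat{F_\epsilon}(x)| \leq \norm{F_\epsilon^{(n)}}_{L^1} \leq \sum_{j=0}^n \binom{n}{j}\,\epsilon^j\,\norm{\phi^{(j)}}_{L^\infty}\,\norm{h^{(n-j)}}_{L^1} \leq C_n,
\]
so the choice $n=2$ yields the integrable majorant $C\min(1,|x|^{-2})$ and dominated convergence delivers $I_\epsilon^{(-)}(\phi)\to \int_\RR g(x)\,\widehat h(x)\,\mathrm{d}x$, again independent of $\phi$.

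The main technical obstacle is this $\epsilon$-uniform Schwartz-seminorm control on $F_\epsilon$; the other ingredients (Fourier duality, dominated convergence, Cauchy--Schwarz) are routine. Summing the two pieces proves that $\lim_{\epsilon\downarrow 0}I_\epsilon(\phi)$ exists and equals $\int_\RR (1-\chi_-(k))\,\widehat g(k)\,\mathrm{e}^k\,\mathrm{d}k + \int_\RR g(x)\,\widehat h(x)\,\mathrm{d}x$, an expression that depends only on $g$ and the auxiliary cutoff. Independence from $\chi_-$, required for the definition of $\inner{\exp,\widehat g}$ to be intrinsic, can then be established by applying Lemma~\ref{lem:parseval} to the difference of two admissible cutoffs, which is compactly supported and thus handled directly by the classical Plancherel identity.
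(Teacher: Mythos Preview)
Your proof is correct and follows the same overall strategy as the paper---split $\mathrm{e}^k$ into a piece controlled by the hypothesis $\widehat g\in L^2_{-\gamma}$ and a piece whose Fourier transform is integrable, then pair the latter with $g\in L^\infty$---but the implementations differ. The paper takes the explicit decomposition $\mathrm{e}^k=p(k)+q(k)$ with $p(k)=\mathrm{e}^{-|k|}$ and $q(k)=(\mathrm{e}^k-\mathrm{e}^{-k})\mathbf 1_{(0,\infty)}$; since $\widehat p(x)=\tfrac{2}{1+x^2}\in L^1$ and $\widehat{\phi_\epsilon}$ is an approximate identity, the $p$-piece converges by $\widehat{\phi_\epsilon}*\widehat p\to\widehat p$ in $L^1$, while $q\in L^2_\gamma$ gives the other piece directly. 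Your route replaces $p$ by the Schwartz function $h=\chi_-\,\mathrm{e}^k$ and replaces the approximate-identity argument by uniform Schwartz-seminorm control of $F_\epsilon=\phi(\epsilon\cdot)h$ via Leibniz and integration by parts. The paper's choice is slightly slicker (no seminorm bookkeeping, just an explicit $L^1$ Fourier transform), while yours is more flexible (it never needs a closed form for $\widehat h$). Your final remark on independence from $\chi_-$ is harmless but unnecessary: since the limit you computed equals $\lim_{\epsilon\downarrow 0}I_\epsilon(\phi)$ and the latter does not involve $\chi_-$, the value is automatically intrinsic.
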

\begin{proof}
We decompose $\mathrm{e}^k = p(k) + q(k)$, where $p(k) \coloneqq \mathrm{e}^{-|k|}$ and $q(k) \coloneqq \left( \mathrm{e}^{k} - \mathrm{e}^{-k} \right)\1_{(0,\infty)}(k)$. Note that $\widehat{p}(x) = \frac{2}{1+x^2} \in L^1(\RR)$ and $q \in L^2_\gamma$ for $\gamma>2$. 

Let $\phi_\epsilon(k) \coloneqq \phi(\epsilon k)$ for $k\in\RR$, then $\widehat{\varphi_{\epsilon}}$ is an approximate identity in $L^1(\RR)$, in particular $\widehat{\phi_\epsilon} \ast \widehat{p} \ra \widehat{p}$ in $L^1(\R)$, and by the classical Parseval identity, we have 
\begin{align*}
	\int_{\RR} \varphi_{\epsilon}(k) p(k) \widehat{g}(k)\,\mathrm{d}k 
	= \frac{1}{2\pi} \int_{\RR} \left(\widehat{\phi}_\epsilon \ast \widehat{p}\right)(x) g(x) \mathrm{d}x 
	\stackrel{\epsilon\to 0}{\longrightarrow} \frac{1}{2\pi} \int_{\RR} \widehat{p}(x) g(x)\,\mathrm{d}x = \frac{1}{\pi} \int_{\RR} \frac{g(x)}{1+x^2}\,\mathrm{d}x. 
\end{align*}
The limit is independent of the choice of $\varphi$ and, since $g\in L^{\infty}(\RR)$, it is finite. 
Moreover, $\phi_\epsilon q \ra q$ in $L^2_{\gamma}$ as $\epsilon \downarrow 0$ by dominated convergence, from which it follows that 
\begin{align*}
	\int_{\RR} \varphi_{\epsilon}(k) q(k) \widehat{g}(k) \stackrel{\epsilon\to 0}{\longrightarrow} \int_{\RR} q(k) \widehat{g}(k) \,\mathrm{d}k,
\end{align*}
where the limit is finite since $q\in L^2_{\gamma}$ and $\widehat{g}\in L^{2}_{-\gamma}$, and independent of $\varphi$ (as long as $\varphi(0) = 1$). 
\end{proof}

We can now compute the dual of $F_\gamma$. 

\begin{proposition}[Dual of $F_\gamma$] \label{lem:dualFgamma} The Fenchel conjugate $F_\gamma^\ast : L^\infty(\R) \rightarrow  \R \cup \{+\infty\}$ of $F_\gamma$ is given by
\begin{align*}
F_\gamma^\ast(g) &= \sup_{m\in \dom F_{\gamma}} \left( \Re \int_{\RR} \overline{g(x)} m(x) \,\mathrm{d}x - F_{\gamma}(m)\right) \\
&= \begin{dcases} 
\frac{1}{16 \pi^2} \norm{\widehat{g}}_{L^2_{-\gamma}}^2 + \frac{1}{2\pi} \mathrm{Re} \int_\R \widehat{g}(k)  \mathrm{e}^k \mathrm{d} k, \quad &\mbox{if $\widehat{g} \in L^2_{-\gamma}$,} \\
+\infty \quad&\mbox{otherwise,} \end{dcases}
\end{align*}
where $\int_\R \widehat{g}(k) \mathrm{e}^k \mathrm{d}k$ is understood in the sense of \eqref{eq:exp-duality}.
\end{proposition}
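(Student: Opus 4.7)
The strategy is to compute the supremum defining $F_\gamma^*(g)$ directly, treating first the case $\widehat g \in L^2_{-\gamma}$ and then the opposite one. The central subtlety is that, for $m \in \dom F_\gamma$, we have $\widehat m \in L^\infty(\RR)$ but $\widehat m \notin L^2_\gamma(\RR)$ in general (only the difference $\widehat m - \exp$ lies in $L^2_\gamma$), so Lemma~\ref{lem:parseval} does not apply directly. I would therefore first establish a Plancherel-type identity adapted to $\dom F_\gamma$: for every $g \in L^\infty(\RR)$ with $\widehat g \in L^2_{-\gamma}(\RR)$ and every $m \in \dom F_\gamma$,
\begin{equation*}
  \int_\RR \overline{g(x)}\,m(x)\,\mathrm{d}x \;=\; \frac{1}{2\pi}\int_\RR \overline{\widehat g(k)}\bigl(\widehat m(k)-\mathrm{e}^k\bigr)\mathrm{d}k \;+\; \frac{1}{2\pi}\,\overline{\langle \exp,\widehat g\rangle},
\end{equation*}
where the first integral is absolutely convergent by Cauchy--Schwarz (with weights $\mathrm{e}^{\pm\gamma k/2}$) and the second is in the sense of Lemma~\ref{lem:explimit}. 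I would prove this by regularization: set $\phi_\epsilon(k) = \phi(\epsilon k)$ with $\phi \in C_c^\infty(\RR)$, $\phi(0) = 1$, and introduce $m_\epsilon \in L^1(\RR)$ with $\widehat{m_\epsilon} = \phi_\epsilon \widehat m$ (obtained by convolving $m$ with the inverse Fourier transform of $\phi_\epsilon$). Since $\widehat{m_\epsilon}$ is compactly supported, Lemma~\ref{lem:parseval} gives $\int \overline g\, m_\epsilon\,\mathrm{d}x = \frac{1}{2\pi}\int \overline{\widehat g}\,\phi_\epsilon \widehat m\,\mathrm{d}k$; decomposing $\widehat m = (\widehat m - \exp) + \exp$ and passing to $\epsilon \downarrow 0$ (using dominated convergence for the $L^2_\gamma$ piece and Lemma~\ref{lem:explimit} for the $\exp$ piece, together with $m_\epsilon \to m$ in $L^1$ on the left) yields the identity.

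With this identity and $u \coloneqq \widehat m - \exp \in L^2_\gamma$, the functional $\Re\langle g, m\rangle - F_\gamma(m)$ becomes $\frac{1}{2\pi}\Re\int \overline{\widehat g}\,u\,\mathrm{d}k + \frac{1}{2\pi}\Re\langle \exp,\widehat g\rangle - \|u\|_{L^2_\gamma}^2$. Completing the square in $u$ against the pointwise ideal $a(k) \coloneqq \frac{1}{4\pi}\widehat g(k)\,\mathrm{e}^{\gamma k}$, which lies in $L^2_\gamma$ precisely because $\widehat g \in L^2_{-\gamma}$, gives
\begin{equation*}
  \frac{1}{2\pi}\Re\int_\RR \overline{\widehat g(k)}\,u(k)\,\mathrm{d}k - \|u\|_{L^2_\gamma}^2 \;=\; \frac{1}{16\pi^2}\|\widehat g\|_{L^2_{-\gamma}}^2 - \|u - a\|_{L^2_\gamma}^2 \;\leq\; \frac{1}{16\pi^2}\|\widehat g\|_{L^2_{-\gamma}}^2,
\end{equation*}
with equality if and only if $u = a$. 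This yields the upper bound $F_\gamma^*(g) \leq \frac{1}{16\pi^2}\|\widehat g\|_{L^2_{-\gamma}}^2 + \frac{1}{2\pi}\Re\langle \exp,\widehat g\rangle$.

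The matching lower bound requires constructing a sequence $m_n \in \dom F_\gamma$ with $u_n \coloneqq \widehat{m_n} - \exp \to a$ in $L^2_\gamma$; this approximation step is the main obstacle. The difficulty is that $a + \exp$ is \emph{not} the Fourier transform of any $L^1$ function (it grows like $\mathrm{e}^k$ at $+\infty$), so $a$ does not lie in the range of $m \mapsto \widehat m - \exp$ on $\dom F_\gamma$, only in its $L^2_\gamma$-closure. I would resolve this by approximating $a + \exp$ by Schwartz functions $w_n$---smoothly truncating the $\mathrm{e}^k$ growth at $+\infty$ and using that both $a$ and $\exp$ already decay at $-\infty$---so that $w_n - \exp \to a$ in $L^2_\gamma$; since $w_n \in \mathcal{S}(\RR) \subset \mathcal{F}(L^1(\RR))$, setting $m_n$ to be the inverse Fourier transform of $w_n$ produces $m_n \in L^1 \cap \dom F_\gamma$ with $\widehat{m_n} - \exp \to a$ and objective values tending to the upper bound. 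Finally, when $\widehat g \notin L^2_{-\gamma}$, a variant of the same construction with $u_n \coloneqq \lambda_n \widehat g\,\mathrm{e}^{\gamma k}\,\psi_n$ for cutoffs $\psi_n \uparrow 1$ chosen so that $\int |\widehat g|^2 \mathrm{e}^{\gamma k}\psi_n\,\mathrm{d}k \to +\infty$ and optimally tuned $\lambda_n > 0$ drives $\Re\langle g, m_n\rangle - F_\gamma(m_n) \to +\infty$, giving $F_\gamma^*(g) = +\infty$.
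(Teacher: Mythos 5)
Your treatment of the case $\widehat g\in L^2_{-\gamma}$ is essentially sound, and it is a mild streamlining of the paper's route: the regularized Plancherel identity you set up is exactly the limiting computation the paper performs (there only for one fixed $m_0\in\dom F_\gamma$, after splitting $m=m_0+h$ and completing the square in $\widehat h$), while you complete the square directly in $u=\widehat m-\exp$ against $a=\frac{1}{4\pi}\widehat g\exp_{\gamma}$ and then approximate. One caveat in your approximation step: you cannot justify it by saying that ``both $a$ and $\exp$ already decay at $-\infty$''. Since $\gamma>2$, the function $\exp\1_{(-\infty,0)}$ does \emph{not} lie in $L^2_\gamma$, so an admissible approximant $w_n$ must actually match $\mathrm{e}^k$ on the left half-line up to an error that is small in $L^2_\gamma$, not merely be small there; this is easily repaired, e.g.\ by taking $w_n(k)=\mathrm{e}^{k}\chi(k/n)+q_n(k)$ with $\chi$ a smooth cutoff equal to $1$ on $(-\infty,1]$ and vanishing on $[2,\infty)$, and $q_n\in C_c^\infty$ approximating $a-(\mathrm{e}^{-|\cdot|}-\exp)$-type corrections in $L^2_\gamma$, but as written the justification is off.

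The genuine gap is in the case $\widehat g\notin L^2_{-\gamma}$. For $g\in L^\infty(\RR)$ the Fourier transform is a priori only a tempered distribution, and ``otherwise'' in the statement includes $\widehat g$ that are not functions at all, e.g.\ $g\equiv 1$ with $\widehat g=2\pi\delta_0$. Your blow-up construction $u_n=\lambda_n\,\widehat g\,\mathrm{e}^{\gamma k}\psi_n$ presupposes that $\widehat g$ is a locally square-integrable function: otherwise the quantity $\int|\widehat g|^2\mathrm{e}^{\gamma k}\psi_n\,\mathrm{d}k$ is meaningless and $u_n$ is not an element of $L^2_\gamma$, hence cannot be realized (even approximately) as $\widehat{m_n}-\exp$ with $m_n\in\dom F_\gamma$. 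So your argument only yields $F_\gamma^\ast(g)=+\infty$ for $\widehat g$ that are functions lying outside $L^2_{-\gamma}$, not in general. The paper closes exactly this point by a contrapositive functional-analytic argument: if $F_\gamma^\ast(g)<\infty$, then the linear functional $\widehat h\mapsto\langle g,h\rangle-2\langle(\widehat{m_0}-\exp)\exp_{-\gamma},\widehat h\rangle$, defined on the dense subspace $C_c^\infty(\RR)\subset L^2_\gamma$, must be bounded, and the Riesz representation theorem then forces $\widehat g$ to be represented by a function in $L^2_{-\gamma}$. You should replace your cutoff construction by this (or an equivalent duality) argument; note that your completed-square inequality, read on test functions with $\widehat h\in C_c^\infty$, already contains the boundedness statement you need.
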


\begin{proof} 
Note that $\dom F_{\gamma} \neq \emptyset$, since $m(x) = \pi^{-1} (1+x^2)^{-1}$ is integrable with $\widehat{m}(k) = \mathrm{e}^{-|k|}$, so $m(k)-\mathrm{e}^k \in L^2_{\gamma}$ for any $\gamma>2$ (see the proof of Lemma~\ref{lem:explimit}).

To compute the dual, we fix some $m_0 \in \dom F_\gamma$ and split $m \in \dom F_{\gamma}$ in $m=m_0 + h$ with $h\in L^1(\RR)$ such that $\widehat{h} \in L^2_{\gamma}$. Then 
\begin{align*}
	F_\gamma^\ast(g) 
	&= \Re \int_{\RR} \overline{g(x)}m_0(x)\,\mathrm{d}x - F_{\gamma}(m_0) + \sup_{h\in L^1: \widehat{h}\in L^2_{\gamma}} \left( \Re \int_{\RR} \overline{g(x)}h(x)\,\mathrm{d}x + F_{\gamma}(m_0) - F_{\gamma}(m_0+h) \right).
\end{align*}
Note that $|\re \langle g, m_0 \rangle | \leq \|g\|_{L^{\infty}} \|m_0\|_{L^1} < \infty$ and $F_{\gamma}(m_0) < \infty$ since $m_0 \in \dom F_{\gamma}$. Hence, $F_{\gamma}^*$ is finite if and only if 
\begin{align*}
	&\sup_{h\in L^1: \widehat{h}\in L^2_{\gamma}} \left( \Re \langle g, h \rangle - 2 \re \left\langle (\widehat{m_0} - \exp) \exp_{-\gamma}, \widehat{h} \right\rangle - \|\widehat{h}\|_{L^2_{\gamma}}^2 \right) \\
	&= \sup_{h\in L^1: \widehat{h}\in L^2_{\gamma}} \sup_{\alpha\in\CC} \left( \Re \alpha \langle g, h \rangle - 2 \re \alpha \left\langle (\widehat{m_0} - \exp) \exp_{-\gamma}, \widehat{h} \right\rangle - |\alpha|^2 \|\widehat{h}\|_{L^2_{\gamma}}^2 \right)\\
	&= \sup_{h\in L^1: \widehat{h}\in L^2_{\gamma}} \frac{1}{4} \frac{\left
	|\langle g,h\rangle - 2 \left\langle (\widehat{m_0} - \exp) \exp_{-\gamma}, \widehat{h} \right\rangle \right|^2}{\|\widehat{h}\|_{L^2_{\gamma}}^2} < \infty.
\end{align*}
Since $C_c^\infty(\R) \subset \{ \widehat{h} \in L^2_\gamma : h \in L^1\}$ and the former is dense in $L^2_\gamma$, the above quotient is finite if and only if the linear functional $\ell: C_{c}^{\infty}(\R) \to \RR$ given by 
\begin{align*}
	\widehat{h} \mapsto \ell(\widehat{h}) \coloneqq \langle g,h\rangle - 2 \left\langle (\widehat{m_0} - \exp) \exp_{-\gamma}, \widehat{h} \right\rangle
\end{align*}
extends to a bounded linear functional on $L^2_{\gamma}(\RR)$. In this case, the Riesz representation theorem implies that there exists a unique $G \in L^2_{-\gamma}(\RR)$ such that $\ell(\widehat{h}) = \langle G, \widehat{h} \rangle$ for all $\widehat{h}\in L^2_{\gamma}(\R)$. In particular, since $(\widehat{m_0} - \exp) \exp_{-\gamma} \in L^2_{-\gamma}$ (recall that $\widehat{m_0} - \exp \in L^2_{\gamma}$), it follows that 
\begin{align*}
	\langle g, h \rangle = \left\langle G + 2 (\widehat{m_0} - \exp) \exp_{-\gamma}), \widehat{h} \right\rangle \quad \text{for all } \widehat{h} \in L^2_{\gamma},
\end{align*}
from which we conclude that $\widehat{g} \in L^2_{-\gamma}(\RR)$. We may therefore apply Parseval's identity (Lemma~\ref{lem:parseval}) to obtain 
\begin{align*}
	\sup_{h\in L^1: \widehat{h}\in L^2_{\gamma}} \frac{1}{4} \frac{\left
	|\langle g,h\rangle - 2 \left\langle (\widehat{m_0} - \exp) \exp_{-\gamma}, \widehat{h} \right\rangle \right|^2}{\|\widehat{h}\|_{L^2_{\gamma}}^2}
	&= \sup_{h\in L^1: \widehat{h}\in L^2_{\gamma}} \frac{1}{4} \frac{\left|\left\langle \frac{1}{2\pi} \widehat{g} - 2 (\widehat{m_0} - \exp) \exp_{-\gamma}, \widehat{h} \right\rangle\right|^2}{\|\widehat{h}\|_{L^2_{\gamma}}^2} \\
	&= \frac{1}{4} \left\| \frac{1}{2\pi} \widehat{g} - 2 (\widehat{m_0} - \exp) \exp_{-\gamma} \right\|_{L^2_{-\gamma}}^2.
\end{align*}
It follows that 
\begin{align}
	F^*_{\gamma}(g) 
	&= \re \langle g, m_0\rangle - \|\widehat{m_0}-\exp\|_{L^2_{\gamma}}^2 +  \left\| \frac{1}{4\pi} \widehat{g} - (\widehat{m_0} - \exp) \exp_{-\gamma} \right\|_{L^2_{-\gamma}}^2 \nonumber \\
	&= \frac{1}{(4\pi)^2} \|\widehat{g}\|_{L^2_{\gamma}}^2 + \re \langle g, m_0\rangle - \frac{1}{2\pi} \re \inner{\widehat{g}, \widehat{m_0} - \exp}. \label{eq:someid}
\end{align}
To complete the proof, we observe that by dominated convergence, for any $\phi \in C_c^\infty(\R)$ with $\phi(0) = 1$, setting $\phi_\epsilon \coloneqq \phi(\epsilon\cdot)$, there holds  
\begin{align*}
	&\frac{1}{2\pi} \Re \int_{\RR} \overline{\widehat{g}(k)} (\widehat{m}_0(k) - \mathrm{e}^{k}) \,\mathrm{d}k 
	= \lim_{\epsilon\downarrow 0} \frac{1}{2\pi} \Re \int_{\RR} \overline{\widehat{g}(k)} \phi_{\epsilon}(k)(\widehat{m}_0(k) - \mathrm{e}^{k}) \,\mathrm{d}k \\
	&= \lim_{\epsilon\downarrow 0} \Re \int_{\RR} \overline{g(x)} \widecheck{\phi}_{\epsilon}*m_0(x)\,\mathrm{d}x - \lim_{\epsilon\downarrow 0} \frac{1}{2\pi} \int_{\RR} \overline{\widehat{g}(k)} \phi_{\epsilon}(k) \mathrm{e}^k \,\mathrm{d}k \\
	&= \Re \int_{\RR} \overline{g(x)} m_0(x)\,\mathrm{d}x - \frac{1}{2\pi} \Re \int_{\RR} \overline{\widehat{g}(k)} \mathrm{e}^{k}\,\mathrm{d}k.
\end{align*}
With this and \eqref{eq:someid}, we can conclude that 
\begin{align*}
	F^*_{\gamma}(g) =
		\frac{1}{(4\pi)^2} \|\widehat{g}\|_{L^2_{\gamma}}^2 + \frac{1}{2\pi} \re \left\langle \widehat{g}, \exp \right\rangle, \quad \text{if } \widehat{g} \in L^2_{-\gamma}(\RR),
\end{align*}
and $F^*_{\gamma}(g) = + \infty$ otherwise.
\end{proof}

From the above lemma and the Fenchel-Rockafellar duality theorem we obtain
\begin{lemma}[Duality]\label{lem:dualproblem} The strong duality
\begin{align}
	M_\gamma = \inf_{\substack{m\in L^1(\R)\\ \norm{m}_{L^1} \leq 1}}  F_\gamma(m) &= - \min_{\substack{g \in L^\infty \\ \widehat{g} \in L^2_{-\gamma}}} \biggr\{\norm{g}_{L^\infty} + \frac{1}{16\pi^2} \norm{\widehat{g}}_{L^2_{-\gamma}}^2 + \frac{1}{2\pi} \mathrm{Re}\int_{\RR} \widehat{g}(k) \mathrm{e}^k \,\mathrm{d}k \biggr\}\label{eq:dual}
\end{align}
holds. Moreover, there exists a unique minimizer for the problem on the right-hand side.
\end{lemma}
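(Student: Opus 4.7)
My plan is to express problem~\eqref{eq:primalX2} as the unconstrained infimum $\inf_{m\in L^1(\RR)}\{F_\gamma(m) + G(m)\}$, where $G: L^1(\RR) \to \{0,+\infty\}$ is the indicator of the closed unit ball, and to apply the Fenchel--Rockafellar duality theorem with respect to the real duality pairing $(g,m)\mapsto \re\int_\R \overline{g(x)}m(x)\,\mathrm{d}x$ between $L^\infty(\RR)$ and $L^1(\RR)$. Both $F_\gamma$ and $G$ are proper, convex, and lower semi-continuous: for $F_\gamma$ this is Lemma~\ref{lem:L1primal}, and for $G$ it follows since the unit ball is convex and closed. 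To invoke strong duality I need a point in $\dom F_\gamma$ at which $G$ is continuous; the rescaled Poisson kernel $m_0(x)=\frac{1}{2\pi(1+x^2)}$ serves this purpose, as $\|m_0\|_{L^1} = \tfrac12 < 1$ places $m_0$ in the open unit ball (where $G\equiv 0$ is continuous) while $\widehat{m_0}(k) = \tfrac{1}{2}\mathrm{e}^{-|k|}$ gives $\widehat{m_0} - \exp \in L^2_\gamma$ for $\gamma>2$, so $m_0\in \dom F_\gamma$.

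With the hypotheses verified, Fenchel--Rockafellar yields the attained identity
\[
M_\gamma = -\min_{g\in L^\infty(\RR)}\{F_\gamma^*(g) + G^*(-g)\}.
\]
A direct computation using the isometric duality $(L^1)^*\cong L^\infty$ (align the phase of $m$ to $-\overline g$ and localize where $|g|$ is close to its essential supremum) gives
\[
G^*(-g) = \sup_{\|m\|_{L^1}\leq 1} \re\langle -g, m\rangle = \|g\|_{L^\infty},
\]
while Proposition~\ref{lem:dualFgamma} supplies $F_\gamma^*(g)$, whose finiteness already forces the constraint $\widehat g \in L^2_{-\gamma}(\RR)$. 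Substituting the two expressions reproduces the claimed identity~\eqref{eq:dual}, with the minimum attained as a consequence of Fenchel--Rockafellar.

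For uniqueness of the dual minimizer, I rely on strict convexity. The linear term $\tfrac{1}{2\pi}\re\int_\R \widehat g(k)\mathrm{e}^k\,\mathrm{d}k$ is well defined by Lemma~\ref{lem:explimit} and linear in $g$, the $L^\infty$ norm is convex, and the quadratic term $\tfrac{1}{16\pi^2}\|\widehat g\|_{L^2_{-\gamma}}^2$ is strictly convex in $\widehat g \in L^2_{-\gamma}(\RR)$. Since the Fourier transform is injective on $\mathcal{S}'(\RR)$, and $L^\infty(\RR)\subset \mathcal{S}'(\RR)$, the map $g\mapsto \widehat g$ is injective, and composing a strictly convex function with an injective linear map preserves strict convexity; adding the convex $\|\cdot\|_{L^\infty}$ and a continuous linear functional then leaves the full dual objective strictly convex on its domain, so its minimizer is unique. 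The only non-routine step is the verification of the Fenchel--Rockafellar qualification condition (and the careful bookkeeping with the real-linear pairing); everything else is a direct substitution from Proposition~\ref{lem:dualFgamma}.
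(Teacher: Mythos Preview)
Your overall strategy is exactly the paper's: write $M_\gamma=\inf\{F_\gamma+G\}$, apply Fenchel--Rockafellar, compute $G^*(-g)=\|g\|_{L^\infty}$, and read off $F_\gamma^*$ from Proposition~\ref{lem:dualFgamma}; uniqueness then comes from the strict convexity of $\|\widehat g\|_{L^2_{-\gamma}}^2$. The only real task is the qualification condition, and here your concrete choice fails.

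The rescaled Poisson kernel $m_0(x)=\frac{1}{2\pi(1+x^2)}$ has $\widehat{m_0}(k)=\tfrac12 e^{-|k|}$, so for $k<0$ one gets $\widehat{m_0}(k)-e^k=-\tfrac12 e^k$ and hence
\[
\int_{-\infty}^0 \bigl|\widehat{m_0}(k)-e^k\bigr|^2 e^{-\gamma k}\,\mathrm{d}k=\tfrac14\int_{-\infty}^0 e^{(2-\gamma)k}\,\mathrm{d}k=+\infty
\]
for $\gamma>2$. Thus $m_0\notin\dom F_\gamma$, and your continuity point does not lie in the domain. The issue is structural: membership in $\dom F_\gamma$ forces $\widehat m(k)$ to match $e^k$ to high order as $k\to-\infty$, so scalar multiples of a fixed element are typically \emph{not} in $\dom F_\gamma$. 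The full Poisson kernel $\pi^{-1}(1+x^2)^{-1}$ is in $\dom F_\gamma$ (since $e^{-|k|}-e^k$ vanishes for $k\le0$), but it sits on the unit sphere, not the open ball. The paper circumvents this by using the internal scaling $m_\alpha(x)=m(x)\alpha^{-1-\ii x}$ from \eqref{eq:scaling}, which keeps $m_\alpha\in\dom F_\gamma$ while making $\|m_\alpha\|_{L^1}=\alpha^{-1}\|m\|_{L^1}<1$ for $\alpha$ large; this supplies the required interior point. Replacing your $m_0$ by such an $m_\alpha$ (or, equivalently, by $\pi^{-1}\frac{1}{1+(x-\ii\log\alpha/ \text{anything})^2}$ type shifts) repairs the argument with no other changes.
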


\begin{proof} By the Fenchel-Rockafellar duality theorem \cite[Theorem 31.1]{Roc72}, we have
\begin{align*} 
	\inf_{\substack{m\in L^1(\R)\\ \norm{m}_{L^1} \leq 1}}  F_\gamma(m) 
	= \inf_{m \in L^1(\R)} \{F_\gamma(m) + G(m) \}
	= - \min_{g \in L^\infty(\R)} \{ F_\gamma^\ast (g) + G^\ast(-g) \}.
\end{align*}
where the minimizer 
of the dual problem exists provided that both $F_\gamma$ and $G$ are convex and lower semi-continuous in $L^1(\R)$, and that there exists at least one point of continuity of $G$ in the domain of $F_\gamma$. 
The lower-semicontinuity and convexity can be directly checked for $G$ and are proved in Proposition~\ref{lem:dualFgamma} for $F_\gamma$. 
To verify the last condition, note that $m_\alpha(x) = m(x)\alpha^{-\ii x-1} \in \dom F_\gamma$ for any $\alpha >0$ and $m \in  \rm{dom}\, F_\gamma$. So for $\alpha$ big enough, we have $\norm{m_\alpha}_{L^1} < 1$, hence $m_\alpha$ is a continuity point of $G$ in the domain of $F_\gamma$. 
The uniqueness of the minimizer follows from the strict convexity of $\norm{\cdot}_{L^2_{-\gamma}}^2$. 
It remains to calculate 
\begin{align*}
	G^*(g) &= \sup_{m\in L^1} \left( \Re \int_{\RR}  \overline{g(x)}m(x)\,\mathrm{d}x - G(m) \right) 
	= \sup_{\substack{m\in L^1 \\ \|m\|_{L^1}\leq 1}}  \Re \int_{\RR}  \overline{g(x)} m(x)\,\mathrm{d}x 
	= \|g\|_{L^{\infty}},
\end{align*}
which completes the proof.
\end{proof}

As for the primal problem, we can derive a scale invariant version of the dual problem. 
\begin{theorem}[Scale invariant dual problem]\label{thm:scaleinvariantdual} Let $M_\gamma$ be defined as in \eqref{eq:primal}, then we have
\begin{align}
    M_\gamma = \frac{4(\gamma-2)^{\gamma-2}}{(2\pi)^{\gamma-2} \gamma^\gamma} \biggr(\max_{\substack{g \in L^\infty(\R) \\ \widehat{g} \in L^2_{-\gamma}}} \mathcal{E}^\ast_\gamma(g)\biggr)^{\gamma},\quad\mbox{where} \quad \mathcal{E}^\ast_\gamma(g) \coloneqq  \frac { \re \int_\R \widehat{g}(k) \mathrm{e}^k \mathrm{d} k }{\norm{\widehat{g}}_{L^2_{-\gamma}}^{\frac{2}{\gamma}} \norm{g}_{L^\infty}^{1-\frac{2}{\gamma}}}. \label{eq:scaleinvariantdual}
\end{align}
Moreover, the maximizer exists and is unique up to re-scaling and translating in Fourier space, i.e., up to the transformation $g_{\alpha,\beta}(x) = \beta \alpha^{\ii x} g(x)$ with $\beta >0$ and $\alpha >0$.
\end{theorem}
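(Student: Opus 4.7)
The plan is to derive~\eqref{eq:scaleinvariantdual} directly from Lemma~\ref{lem:dualproblem} by exploiting a two-parameter scaling invariance of the dual functional, in the same spirit as the passage from the primal problem to \eqref{eq:primal} carried out in Lemma~\ref{lem:scaling}. Given an admissible test function $g$, I parametrize its orbit by $g_{t,\alpha}(x)\coloneqq t\,\alpha^{\ii x}g(x)$ with $t\in\RR$ and $\alpha>0$; since $\widehat{g_{t,\alpha}}(k)=t\,\widehat{g}(k-\log\alpha)$, a direct change of variables $k\mapsto k-\log\alpha$ yields
\begin{equation*}
\|g_{t,\alpha}\|_{L^\infty}=|t|\|g\|_{L^\infty},\quad \|\widehat{g_{t,\alpha}}\|_{L^2_{-\gamma}}^2=t^2\alpha^\gamma\|\widehat{g}\|_{L^2_{-\gamma}}^2,\quad \re\int_\RR\widehat{g_{t,\alpha}}(k)\mathrm{e}^k\,\mathrm{d}k=t\alpha\,\re\int_\RR\widehat{g}(k)\mathrm{e}^k\,\mathrm{d}k.
\end{equation*}
Writing $A=\|g\|_{L^\infty}$, $B=\|\widehat{g}\|_{L^2_{-\gamma}}^2$ and $C=\re\int_\RR\widehat{g}(k)\mathrm{e}^k\,\mathrm{d}k$, Lemma~\ref{lem:dualproblem} then becomes
\begin{equation*}
M_\gamma = -\inf_{g}\,\inf_{\substack{t\in\RR\\ \alpha>0}}\left\{\,|t|A+\frac{t^2\alpha^\gamma}{16\pi^2}B+\frac{t\alpha}{2\pi}C\right\}.
\end{equation*}

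The next step is an elementary two-variable minimization performed pointwise in $g$. Since $M_\gamma>0$, the infimum is strictly negative, forcing $C\neq 0$ for any near-minimizer; up to replacing $g$ by $-g$ (the $t=-1$ case of the scaling) I may assume $C>0$, and then only the branch $t=-s$ with $s>0$ can yield negative values. The resulting quadratic $s\mapsto sA+s^2\alpha^\gamma B/(16\pi^2)-s\alpha C/(2\pi)$ is minimized at $s^\ast=(4\pi\alpha C-8\pi^2 A)/(\alpha^\gamma B)$, with value $-(2\pi A-\alpha C)^2/(\alpha^\gamma B)$, provided $\alpha C>2\pi A$ so that $s^\ast>0$. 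Minimizing $\alpha\mapsto -(2\pi A-\alpha C)^2/(\alpha^\gamma B)$ over $(2\pi A/C,\infty)$ is then a one-variable calculus exercise whose unique interior critical point is $\alpha^\ast=2\pi A\gamma/(C(\gamma-2))$, which satisfies $\alpha^\ast C>2\pi A$ precisely because $\gamma>2$. Inserting this value and using $16\pi^2/(2\pi)^\gamma=4/(2\pi)^{\gamma-2}$, a short algebraic simplification gives
\begin{equation*}
-\inf_{t,\alpha}\Bigl\{|t|A+\tfrac{t^2\alpha^\gamma}{16\pi^2}B+\tfrac{t\alpha}{2\pi}C\Bigr\}=\frac{4(\gamma-2)^{\gamma-2}}{(2\pi)^{\gamma-2}\gamma^\gamma}\cdot\frac{C^\gamma}{B\,A^{\gamma-2}}=\frac{4(\gamma-2)^{\gamma-2}}{(2\pi)^{\gamma-2}\gamma^\gamma}\,\mathcal{E}_\gamma^\ast(g)^\gamma,
\end{equation*}
where the last equality is immediate from the definition of $\mathcal{E}_\gamma^\ast$. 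Taking the supremum over $g$ (and using monotonicity of $x\mapsto x^\gamma$ on $[0,\infty)$) delivers~\eqref{eq:scaleinvariantdual}.

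Existence and uniqueness of the maximizer then follow directly from the corresponding properties for the dual problem in Lemma~\ref{lem:dualproblem}: along each two-parameter orbit $\{\beta\alpha^{\ii x}g:\alpha,\beta>0\}$ the inner optimization above pinpoints a unique representative realizing the dual value, so the unique dual minimizer $g^\ast$ selects a unique orbit, and that orbit consists precisely of the maximizers of $\mathcal{E}_\gamma^\ast$. This simultaneously proves existence of a maximizer and its uniqueness modulo the stated symmetry $g\mapsto\beta\alpha^{\ii x}g$ with $\alpha,\beta>0$ (the positive real scalar $\beta$ arises as $-t^\ast>0$). I do not anticipate any substantive obstacle beyond careful bookkeeping of signs and of the feasibility condition $\alpha C>2\pi A$; at its core, the theorem is a purely computational corollary of Lemma~\ref{lem:dualproblem}, completely analogous to how Lemma~\ref{lem:scaling} reduced the primal problem to its scale-invariant form.
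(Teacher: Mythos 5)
Your proof is correct and follows essentially the same route as the paper: starting from Lemma~\ref{lem:dualproblem}, reduce along the two-parameter orbit $g\mapsto \beta\alpha^{\ii x}g$ and perform the elementary inner minimization, which yields exactly the constant in \eqref{eq:scaleinvariantdual}. The only cosmetic differences are that you handle the sign of $\re\int_\R \widehat{g}(k)\mathrm{e}^k\,\mathrm{d}k$ by allowing negative real scalars (rather than the paper's phase-multiplication trick) and minimize sequentially in $s$ and $\alpha$ instead of locating the joint critical point, and that you spell out how existence and uniqueness of the maximizer follow from the unique dual minimizer, a step the paper leaves implicit.
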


\begin{proof} First, note that up to multiplying $g$ by a phase $\mathrm{e}^{\ii\theta}$, $\theta \in \R$, the term $\mathrm{Re} \int_\R \widehat{g}(k) \mathrm{e}^k \mathrm{d}k$ appearing in the dual problem~\eqref{eq:dual} can be replaced by $-|\int_\R \widehat{g}(k) \mathrm{e}^k \mathrm{d}k|$. Next, by defining $g_{\alpha,\beta}(x) = \beta g(x) \alpha^{\ii x}$ and noticing that $\widehat{g}_{\alpha,\beta}(k) = \beta \widehat{g}(k-\log \alpha)$, we find 
\begin{align*}
    M_\gamma &= - \inf_{g} \inf_{\alpha>0,\beta >0} \biggr\{ \norm{g_{\alpha,\beta}} +  \frac{1}{16 \pi^2} \norm{\widehat{g}_{\alpha,\beta}}_{L^2_{-\gamma}}^2 - \frac{1}{2\pi} \biggr|\int_\R \widehat{g}_{\alpha,\beta}(k) \mathrm{e}^k \mathrm{d}k\biggr| \biggr\} \\
    &= -\inf_{g} \inf_{\alpha>0, \beta>0} \biggr\{\beta \norm{g}_{L^\infty} + \beta^2 \alpha^\gamma \frac{1}{16\pi^2} \norm{\widehat{g}}_{L^2_{-\gamma}}^2 - \beta \alpha \frac{1}{2\pi} \biggr|\int_\R \widehat{g}(k) \mathrm{e}^k \mathrm{d}k\biggr| \biggr\}.
\end{align*}
Hence our task reduces to finding the minimizer of the function 
\begin{align*}
	f(\alpha,\beta) = \beta c_1 + \beta^2 \alpha^\gamma c_2  - \beta \alpha c_3\quad\mbox{ with $c_1,c_2 > 0$ and $c_3\geq 0$.}
\end{align*}
This can be done by finding the (unique) critical point of $f$, which is given by
\begin{align*}
\alpha_0 = \frac{c_1}{c_3} \frac{\gamma}{\gamma-2}, \quad\quad \beta_0 = \frac{c_3^\gamma}{c_1^{\gamma-1} c_2} .
\end{align*}
Eq.~\eqref{eq:scaleinvariantdual} then follows by substituting back the values of $c_1, c_2, c_3$ and evaluating $f$ at $(\alpha_0,\beta_0)$. \end{proof}

\section{Reformulation in the complex plane}
\label{sec:holomorphic}
In this section we first derive properties of the mixed Hardy-type spaces of holomorphic functions on the strip introduced in Definition~\ref{def:pq-hardy} in the introduction. It turns out that their (non-tangential) limits along the real axis correspond to the domains of the primal and dual problems. This allows us to interpret our problem as a variant of the variational problem associated with the classical Hadamard three lines lemma, as stated in Theorem~\ref{thm:threelines}. 
The connection with holomorphic functions will also be useful to characterize the primal and dual optimizers in Section~\ref{sec:optimisers}.

\subsection{Mixed Hardy-like spaces on the strip}
The main result of this section is that the space $\{f\in L^p(\R) : \widehat{f} \in L^2_2(\RR)\}$ can be identified with the boundary values of functions in $\mathbb{H}^{p,2}(S)$.

To prove this, we need the following lemma.

\begin{lemma}[Existence of boundary values in $\mathbb{H}^{p,2}(S)$]\label{lem:existenceboundary} Let $h\in \mathbb{H}^{p,2}(S)$, then there exist $h_0 \in L^p(\R)$ with $\widehat{h_0} \in L^2_2(\RR)$ and $h_1 \in L^2(\R)$ such that
\begin{align}
 \lim_{y\downarrow 0} \inner{h_y, \phi} = \inner{h_0, \phi} \quad\mbox{and}\quad \lim_{y\uparrow 1} \inner{h_y,\phi} = \inner{h_1,\phi} \label{eq:boundaryvalues}
\end{align}
for any Schwartz function $\phi \in \mathcal{S}(\R)$. Moreover, for any $\phi \in C_c^\infty(\R)$, there holds
\begin{align}
    \inner{h_y, \widehat{\phi}_{1-y}} = \inner{h_0, \widehat{\phi}_1} \quad \mbox{for any $0 \leq y \leq 1$,} \label{eq:cauchytheorem}
\end{align}
where $\widehat{\phi}_y(k) = \widehat{\phi}(k+\ii y) = \int_\R \mathrm{e}^{-\ii x(k+\ii y)} \phi(x) \mathrm{d} x$. In particular, if $h_0 =0$, then $h = 0$.
\end{lemma}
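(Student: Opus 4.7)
The plan is to first establish identity \eqref{eq:cauchytheorem} via a contour integration argument, and then to use it together with weak compactness to extract the boundary values $h_0, h_1$ and deduce all the claimed properties.

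To prove \eqref{eq:cauchytheorem}, I would introduce the auxiliary function $F_\phi(z) \coloneqq h(z)\,\overline{\widehat\phi(\bar z + i)}$ on $S$. This is holomorphic in $z$ (the conjugate of an anti-holomorphic function being holomorphic), and satisfies $F_\phi(x+iy) = h(x+iy)\,\overline{\widehat\phi(x+i(1-y))}$. Applying Cauchy's theorem to rectangles $[-R,R]\times[y_1,y_2]\subset S$ and letting $R\to\infty$, the vertical edges vanish thanks to the Paley--Wiener decay $|\widehat\phi(\pm R + i(1-y))| = O(R^{-N})$ for every $N$ (uniformly in $y\in[y_1,y_2]$), \emph{provided} $|h(\pm R + iy)|$ is at most polynomially bounded. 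Such a pointwise bound is not directly available from the $\mathbb{H}^{p,2}$-norm, but follows from the mean value property for holomorphic functions on small disks contained in $S$, combined with the $L^p+L^2$ estimate $\|f_y\|_{L^p}\leq C(1-y)$, $\|g_y\|_{L^2}\leq Cy$ for a near-optimal decomposition $h_y = f_y + g_y$; this yields $\sup_{x\in\R}|h(x+iy)|\leq C(y_0)$ uniformly for $y\in[y_0,1-y_0]$. Taking complex conjugates in the resulting identity $\int_\R F_\phi(x+iy_1)\,\mathrm{d}x = \int_\R F_\phi(x+iy_2)\,\mathrm{d}x$ gives the constancy of $y\mapsto \inner{h_y, \widehat\phi_{1-y}}$ on $(0,1)$.

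Next, the boundary values are extracted by weak compactness. As $y\downarrow 0$, $\|g_y\|_{L^2}\to 0$ and $\|f_y\|_{L^p}\leq C$ remains bounded, so Banach--Alaoglu yields a subsequential weak (respectively weak-$\ast$ for $p=\infty$) limit $h_0\in L^p$. Passing to the limit in the constant $y\mapsto\inner{h_y, \widehat\phi_{1-y}}$ along this subsequence, using $\widehat\phi_{1-y}\to\widehat\phi_1$ strongly in every $L^q$ (since $\widehat\phi_{1-y}=\widehat{\mathrm{e}^{(1-y)\cdot}\phi}$ converges in Schwartz topology), yields $\inner{h_0, \widehat\phi_1}$. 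Since $\{\widehat\phi_1 : \phi\in C_c^\infty\}$ is Schwartz-dense in $\mathcal{S}(\R)$ and $h_0\in L^p\subset \mathcal{S}'(\R)$, the limit $h_0$ is uniquely determined as a tempered distribution (independently of the subsequence), the full family $h_y$ converges, and \eqref{eq:boundaryvalues} together with \eqref{eq:cauchytheorem} at $y=0$ follow. A symmetric argument at $y\uparrow 1$ produces $h_1\in L^2$ and \eqref{eq:cauchytheorem} at $y=1$. Comparing \eqref{eq:cauchytheorem} at the two endpoints gives $\inner{h_0, \widehat\phi_1} = \inner{h_1, \widehat\phi}$ for all $\phi\in C_c^\infty$, which after translation through the distributional Fourier transform reads $\widehat{h_0}(k) = \mathrm{e}^k \widehat{h_1}(k)$ as distributions on $\R$; since $\widehat{h_1}\in L^2$ by Plancherel, this gives $\|\widehat{h_0}\|_{L^2_2}^2 = \|\widehat{h_1}\|_{L^2}^2 < \infty$, hence $\widehat{h_0}\in L^2_2$.

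The final uniqueness assertion follows directly from \eqref{eq:cauchytheorem}: if $h_0=0$, then $\inner{h_y, \widehat\phi_{1-y}}=0$ for every $\phi\in C_c^\infty$ and $y\in(0,1)$. Since $\widehat\phi_{1-y} = \widehat{\mathrm{e}^{(1-y)\cdot}\phi}$ ranges over Fourier transforms of all $C_c^\infty$ functions, and the latter are Schwartz-dense in $\mathcal{S}(\R)$, the tempered distribution $h_y$ annihilates every Schwartz function, so $h_y\equiv 0$ as an element of $\mathcal{S}'$; by holomorphicity (or analytic continuation from one horizontal line), $h\equiv 0$ on $S$. The main obstacle is the rigorous justification of the contour argument --- in particular, the $R\to\infty$ vanishing of the vertical edges --- which requires bridging the gap between the $L^p+L^2$ integrability data of the $\mathbb{H}^{p,2}$ decomposition and pointwise control of $h$ on horizontal lines via the mean value property; once this is in place, all remaining steps are routine applications of weak compactness, density, and the distributional Fourier transform.
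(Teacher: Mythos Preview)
Your overall strategy --- Cauchy's theorem on rectangles (with vertical edges controlled via the mean value property), followed by weak-compactness extraction of the boundary values and a Parseval computation to get $\widehat{h_0}\in L^2_2$ --- is precisely the paper's approach, and for $1<p\leq\infty$ your sketch is correct and essentially complete.

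There is, however, a genuine gap at $p=1$. You write that ``Banach--Alaoglu yields a subsequential weak (respectively weak-$\ast$ for $p=\infty$) limit $h_0\in L^p$'', but for $p=1$ this fails: the closed unit ball of $L^1(\R)$ is not weakly sequentially compact. What you actually get is a weak-$\ast$ limit in $\mathcal{M}(\R)=(C_0(\R))^\ast$, i.e.\ a finite Radon measure $h_0$, and there is no a priori reason why it should be absolutely continuous. Your subsequent Parseval argument does go through for measures and yields $\widehat{h_0}=\mathrm{e}^{(\cdot)}\widehat{h_1}\in L^2_2$, but this alone does not force $h_0\in L^1$: a nontrivial step is needed to pass from ``finite measure with $\widehat{h_0}\in L^2_2$'' to ``absolutely continuous''.

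The paper closes this gap with a separate lemma (their Lemma~\ref{lem:abscont}) that rests on the F.\ and M.\ Riesz theorem: one splits $\widehat{h_0}=\widehat{h_0}\,\1_{\{k\leq 0\}}+\widehat{h_0}\,\1_{\{k>0\}}$, observes that the first piece lies in $L^1\cap L^2$ (since $\widehat{h_0}\in L^2_2$), subtracts off its inverse Fourier transform, and is left with a finite measure that is analytic (Fourier transform supported in $[0,\infty)$), hence absolutely continuous by F.\ and M.\ Riesz. This is the missing idea in your proposal; without it the case $p=1$ --- which is exactly the case needed for the primal problem \eqref{eq:primalholomorphic} --- is incomplete. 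Contrary to your last paragraph, the contour argument is routine; the real obstacle is this regularity upgrade at $p=1$.
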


The proof of the above lemma is an adaptation of standard arguments used in the theory of Hardy spaces (see, e.g., \cite[Lemma 11.3]{Mas09} and \cite{Koo98,Gar06}). For convenience of the reader, we present the details in Appendix~\ref{sec:app-hardy}.

\begin{theorem}[Boundary values in $\mathbb{H}^{p,2}(S)$]\label{thm:holomorphicextension} Let $1\leq p \leq \infty$. For any $v\in L^p(\R)$ with $\widehat{v} \in L^2_{2}(\RR)$ there exists a unique function $h\in \mathbb{H}^{p,2}(S)$ such that $h_0 = v$. Moreover, there holds $\widehat{h}_1(k) = \widehat{v}(k) \mathrm{e}^{-k}$ for almost every $k\in\RR$. 

Conversely, for any $h \in \mathbb{H}^{p,2}(S)$ we have $h_0\in L^p(\R)$ and $\widehat{h_0} \in L^2_{2}(\RR)$.
\end{theorem}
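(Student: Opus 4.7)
The theorem has two directions: the converse (boundary values of $h\in\mathbb{H}^{p,2}(S)$ lie in $L^p$ with Fourier transform in $L^2_2$) and existence (given such a datum $v$, a unique $h$ exists with $h_0=v$). Uniqueness of $h$ is immediate from the last statement of Lemma~\ref{lem:existenceboundary}, so the task is to prove the converse and to exhibit at least one $h$ realizing the prescribed boundary trace.

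For the converse direction, Lemma~\ref{lem:existenceboundary} provides boundary traces $h_0\in L^p(\RR)$ and $h_1\in L^2(\RR)$ together with the Cauchy-type identity~\eqref{eq:cauchytheorem}. Combining the latter with the elementary computation $\widehat{\phi}(x\pm\ii) = \widehat{\mathrm{e}^{\pm\,\cdot}\phi}(x)$ for $\phi\in C_c^{\infty}(\RR)$ and the distributional Parseval pairing $\langle f,\widehat{g}\rangle = \langle \widehat{f},g\rangle$ yields the Fourier relation $\widehat{h_1}(k) = \mathrm{e}^{-k}\widehat{h_0}(k)$ as tempered distributions. Since $\widehat{h_1}\in L^2(\RR)$, this identity holds pointwise almost everywhere, so
\begin{align*}
    \|\widehat{h_0}\|_{L^2_2}^2 = \int_\RR |\mathrm{e}^{k}\widehat{h_1}(k)|^2 \mathrm{e}^{-2k}\,\mathrm{d}k = \|\widehat{h_1}\|_{L^2}^2 = 2\pi \|h_1\|_{L^2}^2 < \infty,
\end{align*}
proving $\widehat{h_0}\in L^2_2$ and, specializing to $v = h_0$, the stated formula $\widehat{h_1}(k) = \widehat{v}(k)\mathrm{e}^{-k}$.

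For the existence direction, given $v\in L^p(\RR)$ with $\widehat{v}\in L^2_2$, set $w := (\mathrm{e}^{-\cdot}\widehat{v})^{\vee}\in L^2(\RR)$; this is well-defined since $\mathrm{e}^{-k}\widehat{v}(k)\in L^2$, and by the converse it must coincide with $h_1$. I would construct $h$ on the strip via the Cauchy-type representation
\begin{align*}
    h(z) \coloneqq \frac{1}{2\pi\ii}\int_\RR \frac{v(t)}{t-z}\,\mathrm{d}t - \frac{1}{2\pi\ii}\int_\RR \frac{w(t)}{t+\ii - z}\,\mathrm{d}t,\qquad z\in S,
\end{align*}
with a mild Fourier-side regularization at the borderline exponents $p=1,\infty$ where standard Cauchy integrals are delicate. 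Each integrand is holomorphic on its natural domain (the first on $\CC\setminus\RR$, the second on $\CC\setminus(\RR+\ii)$), so $h$ is holomorphic on $S$. The Plemelj--Sokhotski jump relations, together with the compatibility $\widehat{w}=\mathrm{e}^{-\cdot}\widehat{v}$, yield the correct boundary values $h_0=v$ and $h_1=w$, the latter compatibility being exactly what makes the unwanted jump contributions of the two Cauchy integrals cancel.

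The principal obstacle is then verifying that the constructed $h$ actually belongs to $\mathbb{H}^{p,2}(S)$---that is, exhibiting a decomposition $h_y = f_y + g_y$ with $\|f_y\|_{L^p}\leq C(1-y)$ and $\|g_y\|_{L^2}\leq Cy$ uniformly in $y\in(0,1)$. Natural candidates are the two Cauchy integrals themselves (with appropriate sign adjustments): as $z=x+\ii y$ approaches $\RR+\ii$, the kernel $1/(t+\ii-z)$ develops a singularity that confines $g_y$ near the top boundary, forcing it to be of size $O(y)$ near $y=0$; symmetrically, $1/(t-z)$ confines $f_y$ near the bottom boundary and forces $\|f_y\|_{L^p} = O(1-y)$ near $y=1$. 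The required quantitative bounds follow from the classical $L^p$-boundedness of the Cauchy integral for $1<p<\infty$; the endpoint exponents $p=1,\infty$ require finer $H^1$--BMO duality or explicit Fourier-side truncation arguments, and this endpoint analysis is where I expect the technically most delicate step to lie.
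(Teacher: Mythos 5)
Your converse direction and your uniqueness argument are fine (both are essentially the paper's: uniqueness is the last statement of Lemma~\ref{lem:existenceboundary}, and the relation $\widehat{h_1}=\mathrm{e}^{-k}\widehat{h_0}$ follows from \eqref{eq:cauchytheorem} plus Parseval). Your candidate extension is also, at heart, the same function the paper builds: on the Fourier side your two Cauchy integrals are exactly the positive- and negative-frequency halves of $\frac{1}{2\pi}\int_\R \widehat{v}(k)\mathrm{e}^{-yk}\mathrm{e}^{\ii kx}\,\mathrm{d}k$, so holomorphy and $h_0=v$ can be made to work (modulo the fact that for $p=\infty$ the first Cauchy integral diverges and needs the modified kernel you only gesture at, and $p=\infty$ is precisely the case needed for the dual problem).

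The genuine gap is the step you yourself identify as the crux, namely membership $h\in\mathbb{H}^{p,2}(S)$, and the decomposition you propose provably fails. Taking $f_y$ to be the Cauchy integral of the bottom datum $v$ and $g_y$ that of the top datum $w$, you get on the Fourier side $\widehat{f_y}=\widehat{v}\,\mathrm{e}^{-yk}\1_{\{k>0\}}$ and $\widehat{g_y}=\widehat{v}\,\mathrm{e}^{-yk}\1_{\{k<0\}}$. As $y\downarrow 0$, $g_y$ converges in $L^2$ to the anti-analytic half of $v$, which is generically nonzero, so $\|g_y\|_{L^2}=O(y)$ is false; symmetrically, as $y\uparrow 1$, $f_y$ converges to the analytic extension of the positive-frequency half of $v$ at height $1$, so $\|f_y\|_{L^p}=O(1-y)$ is false (concretely: take $\widehat{v}$ supported in $[1,2]$, then $g_y\equiv 0$, $f_y=h_y\to h_1\neq 0$). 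Uniform $L^p$ boundedness of the Cauchy integral cannot produce the required factors $1-y$ and $y$, and at the endpoints $p=1,\infty$ the Riesz projection is not even bounded, which is fatal for this route since $p=1$ and $p=\infty$ are the cases the paper actually uses. The paper avoids singular integrals altogether by splitting the \emph{multiplier} rather than the data: $\mathrm{e}^{-yk}=\widehat{p_y}(k)+q_y(k)$ with regime-dependent choices (Poisson kernel for $y<\tfrac12$, $\widehat{p_y}(k)=(\mathrm{e}^{-yk}-\mathrm{e}^{-k})\1_{\{k\geq0\}}$ for $y\geq\tfrac12$) engineered so that $\|p_y\|_{L^1}\lesssim 1-y$ and $\|q_y\exp\|_{L^\infty}\lesssim y$; then Young's inequality and Cauchy--Schwarz give the $\mathbb{H}^{p,2}$ bound for all $1\leq p\leq\infty$ at once. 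Some device of this kind (a $y$-dependent, frequency-dependent splitting with quantitative smallness, not the analytic/anti-analytic splitting) is what is missing from your argument.
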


\begin{proof} 
Let $v\in L^p(\R)$ with $\widehat{v} \in L^2_{2}(\RR)$. We have to show that $v$ has a unique holomorphic extension to the strip $S$ with $h_1 \in L^2(\RR)$. In fact, the uniqueness of such an extension is immediate from the last statement in Lemma~\ref{lem:existenceboundary}, so we just need to prove its existence. 
For this, the idea is to consider $\widehat{h_y}(k):= \widehat{v}(k) \,\mathrm{e}^{-y k}$ for $k\in\RR$ and $y\in(0,1)$, and construct $h\in \mathbb{H}^{p,2}(S)$ via Fourier inversion $\frac{1}{2\pi} \int_{\RR} \widehat{v}(k) \,\mathrm{e}^{\ii k z}\,\mathrm{d}k$. 
Since $\widehat{v} \in L^2_{2}(\R)$, some care has to be taken. To this end, we split the function $\mathrm{e}^{\mathrm{i} k z}$ into two parts
\begin{align*}
	\mathrm{e}^{\mathrm{i} k z} = \mathrm{e}^{\mathrm{i} k x} \mathrm{e}^{-k y} =  \mathrm{e}^{\mathrm{i} k x} \left(\widehat{p_y}(k) + q_y(k) \right)
\end{align*}
for some suitably chosen functions $p_y,q_y: \RR \to \CC$, and set 
\begin{align*}
	\widetilde{h}(x+\ii y):= (v*p_y)(x) + \frac{1}{2\pi} \int_{\RR} q_y(k) \widehat{v}(k) \mathrm{e}^{\mathrm{i} k x} \,\mathrm{d}k, \quad y\in (0,1).
\end{align*}
Since $v\in L^p(\RR)$, if $p_y \in L^1(\RR)$, Young's inequality implies that $\|v*p_y\|_{L^p} \leq \|v\|_{L^p} \|p_y\|_{L^1}$. On the other hand, since $\widehat{v}\in L^2_2(\RR)$, the second expression is a well-defined $L^2$-function if $q_y \exp \in L^{\infty}(\RR)$. 
In particular, we obtain 
\begin{align*}
	\|\widetilde{h}\|_{\mathbb{H}^{p,2}} \leq \sup_{0<y<1} \left( \frac{1}{1-y} \|v\|_{L^p} \|p_y\|_{L^1} + \frac{1}{y} \|\widehat{v}\|_{L^2_2} \|q_y \exp\|_{L^{\infty}}\right) .
\end{align*}
Hence, in order to obtain $\|\widetilde{h}\|_{\mathbb{H}^{p,2}} < +\infty$, we have to be able to choose $p_y, q_y$ in such a way that
\begin{align}\label{eq:conditions}
	\|p_y\|_{L^1} \lesssim 1-y \quad \text{and} \quad \|q_y \exp\|_{L^{\infty}} \lesssim y
\end{align}
for $y\in (0,1)$. A possible choice is given by 
\begin{align*}
q_y(k) \coloneqq 
\begin{cases}  
	(\mathrm{e}^{-yk}-\mathrm{e}^{yk}) 1_{\{k\leq 0\}} , &\text{for } y<1/2,\\
	\mathrm{e}^{-yk} 1_{\{k\leq 0\}} + \mathrm{e}^{-k} 1_{\{k >0\}}, & \text{for } y\geq 1/2,
\end{cases} 
\end{align*}
and 
\begin{align*}
\widehat{p_y}(k) \coloneqq 
\begin{cases}
	\mathrm{e}^{-y|k|}, &\text{for } y<1/2,\\
	(\mathrm{e}^{-yk} - \mathrm{e}^{-k})1_{\{k\geq 0\}}, & \text{for } y\geq 1/2.
\end{cases}
\end{align*}
Then $\mathrm{e}^{\ii xk} (\widehat{p}_y(k) + q_y(k)) = \mathrm{e}^{\ii zk}$ for any $z\in S$, and  
\begin{align*}
 p_y(x) = \begin{dcases} \frac{1}{\pi} \frac{y}{y^2+x^2}, \quad &\mbox{for $y<1/2$,}\\
 \frac{1}{2\pi} \frac{1-y}{(1-\ii x)(y-\ii x)}, \quad &\mbox{for $y\geq 1/2$.} \end{dcases}
\end{align*}
In particular, for any $0<y<\frac{1}{2}$ there holds $\norm{p_y}_{L^1(\R)} = 1$ and $\norm{q_y \exp}_{L^\infty(\R)} \lesssim y$, while for $\frac{1}{2} \leq y <1$ we have $\norm{p_y}_{L^1(\R)} \lesssim 1-y$ and $\norm{q_y \exp}_{L^\infty(\R)} \lesssim 1$. Therefore, conditions \eqref{eq:conditions} are fulfilled. 

Note that by construction, the function $\widetilde{h}$ attains the function $v$ as boundary value as $y \downarrow 0$ in a distributional sense. Indeed, for any Schwartz function $\phi$ we find that
\begin{align*}
	\inner{\widetilde{h}_y, \phi} &=  \inner{p_y \ast v, \phi} + \frac{1}{2\pi} \int_\R \overline{q_y \widehat{v}(k)} \widehat{\phi}(k) \mathrm{d}k 
	= \frac{1}{2\pi} \inner{\widehat{v} (\widehat{p_y} + q_y), \widehat{\phi}} 
	= \frac{1}{2\pi} \inner{\widehat{v} \mathrm{e}^{-y (\cdot)}, \widehat{\phi}} \\
	&\stackrel{y\downarrow0}{\longrightarrow} \frac{1}{2\pi} \inner{\widehat{v}, \widehat{\phi}} = \inner{v, \phi}.
\end{align*}

It remains to show that $\widetilde{h}:S \to \CC$ is a holomorphic function. For this, we perform a different decomposition $\mathrm{e}^{\ii zk} = \overline{\widehat{r}_z(k)} + \overline{s_z(k)}$ with 
\begin{align*}
	\widehat{r}_z(k) \coloneqq \mathrm{e}^{-\ii \overline{z} |k|} \quad\mbox{and}\quad s_z(k) \coloneqq \left( \mathrm{e}^{-\ii \overline{z} k} - \mathrm{e}^{\ii \overline{z} k} \right) \1_{\{k\leq 0\}}.
\end{align*}
Then $r_z(\xi) = \frac{1}{\pi} \frac{\ii \overline{z}}{\xi^2-\overline{z}^2}$ for $\xi \in \RR$, in particular $r_z \in L^1(\RR) \cap L^{\infty}(\RR)$ for any $z\in S$. In view of Lemma~\ref{lem:parseval} we then define the function 
\begin{align*}
	h(z) := \int_\R \overline{r_z(\xi)} v(\xi) \mathrm{d}\xi + \frac{1}{2\pi} \int_\R \overline{s_z(k)} \widehat{v}(k) \mathrm{d}k.
\end{align*}
Note that the second term is well-defined because $\widehat{v}\in L^2_{2}(\RR)$ and 
\begin{align*}
	\int_{\RR} |s_z(k)|^2 \mathrm{e}^{2k}\,\mathrm{d}k 
	\lesssim \int_{-\infty}^0 \mathrm{e}^{2(1-y)k}\,\mathrm{d}k + \int_{-\infty}^0 \mathrm{e}^{2(1+y)k}\,\mathrm{d}k < \infty
\end{align*}
for any $0<y<1$, hence $s_z \in L^2_{-2}(\RR)$ for any $z\in S$.

Since $z\mapsto \overline{\widehat{r_z}(k)}, \overline{s_z(k)}$ are holomorphic functions on $S$ for any $k\in\RR$, the function $h$ is holomorphic on $S$ by Morera's and Fubini's theorem.

We now use an approximation argument similar to the one used in the proof of Lemma~\ref{lem:parseval} to show that $h_y(x) = \widetilde{h}_y(x)$ for a.e. $x\in \R$ and any $0<y <1$, which completes the proof. Let $\widehat{\phi} \in C_c^\infty(\R)$ be a mollifier satisfying $\widehat{\phi}(0) = 1$ and set $\widehat{v}_\epsilon(x) = \widehat{v}(x)\widehat{\phi}(\epsilon k)$. Since $\widehat{v}_\epsilon$ has compact support, the function
\begin{align}
    h^\epsilon(z) \coloneqq \frac{1}{2\pi} \int_{\R} \widehat{v}_\epsilon(k) \mathrm{e}^{\ii zk} \mathrm{d}k
\end{align}
is analytic. From the two different splittings $\mathrm{e}^{\ii zk} = \mathrm{e}^{\ii xk} (\widehat{p}_y + q_y)(k) = \overline{\widehat{r}_z(k)} + s_z(k)$ and Plancherel's identity, it satisfies
\begin{align}
    h^\epsilon(z) = (p_y \ast v_\epsilon)(x) + \frac{1}{2\pi} \int_{\R} q_y(k) \widehat{v}_\epsilon \mathrm{e}^{\ii kx} \mathrm{d} x = \int_{\R} \overline{r_z(\xi)} v_\epsilon(\xi) \mathrm{d} \xi + \frac{1}{2\pi} \int_{\R} s_z(k) \widehat{v}_\epsilon(k) \mathrm{d} k.
\end{align}
Since $v_\epsilon = v \ast \phi_\epsilon \ra v$ strongly in $L^p(\R)$ (or weak-$\ast$ for $p=\infty$) as $\epsilon \downarrow 0$, we have
\begin{align*}
    p_y \ast v_\epsilon \ra p_y \ast v \quad \mbox{strongly in $L^p(\R)$ (or weak-$\ast$ for $p=\infty$)} \quad \mbox{and} \quad \inner{r_z, v_\epsilon} \ra \inner{r_z, v}
\end{align*}
for any $0<y<1$ and any $z \in S$, respectively. On the other hand, since $\widehat{v}_\epsilon \ra \widehat{v}$ in $L^2_2(\R)$, we have
\begin{align*}
    \widehat{v}_\epsilon q_y \ra \widehat{v} q_y \quad \mbox{in $L^2(\R)$}\quad \mbox{ and }\quad \inner{s_z, \widehat{v}_\epsilon} \ra \inner{s_z, \widehat{v}},
\end{align*}
again for any $0<y<1$ and $z \in S$. From these convergence results (and dominated convergence), we find that
\begin{align*}
    \inner{h_y,\psi} = \lim_{\epsilon \downarrow 0} \inner{h_y^\epsilon, \psi} = \inner{\widetilde{h}_y, \psi} \quad \mbox{for any Schwartz function $\psi \in \mathcal{S}(\R)$.}
\end{align*}
from which we infer that $h_y(x) = \widetilde{h}_y(x)$ for a.e. $x\in \R$.

The converse implication follows directly from Lemma~\ref{lem:existenceboundary}.
\end{proof}

\subsection{The variational problem over holomorphic functions}\label{sec:varpro-hol}

We can now use the previous lemma to reformulate our primal and dual problems over certain Hardy-like spaces.
In particular, we will present the proof of Theorem~\ref{thm:threelines}.

\begin{proposition}[Primal problem over holomorphic functions]\label{prop:holomorphicprimal} Let $M_\gamma$ be the value defined in \eqref{eq:primal} and let 
\begin{align}
	p_\gamma(z) \coloneqq \frac{1}{\pi} \frac{\frac{2}{\gamma}}{(\frac{2}{\gamma})^2 + z^2},  \label{eq:polefunctiondef} 
\end{align}
then the primal problem~\eqref{eq:primalR} is equivalent to
\begin{align}
	M_\gamma &=  \frac{4\pi}{\gamma} \inf \biggr\{ \norm{m_0}_{L^1(\R)}^{\gamma-2} \norm{m_1}_{L^2(\R)}^2  : m-p_\gamma \in \mathbb{H}^{1,2}(S)\biggr\}, \label{eq:primalholomorphic}
\end{align}
where $m_0(x) = m(x)$ and $m_1(x) = m(x+\ii)$. Moreover, the minimizer (provided it exists) is unique up to the transformation $m(z) \mapsto \mathrm{e}^{\alpha(\ii x+2/\gamma)} m(z)$ for $\alpha \in \R$.
\end{proposition}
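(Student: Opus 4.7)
The plan is to establish a norm-respecting bijection between the domain of \eqref{eq:primalL1} and the class of meromorphic functions $m$ on $\overline{S}$ with $m-p_\gamma\in\mathbb{H}^{1,2}(S)$, under which the two functionals match up to the factor $\tfrac{4\pi}{\gamma}$. To keep notation straight I denote by $n\in L^1(\RR)$ (with $\widehat n - \exp \in L^2_\gamma$) the variable of \eqref{eq:primalL1} and by $m$ the variable of the proposition, and I set $n(x):=\tfrac{2}{\gamma}m_0(2x/\gamma)$. Then $\widehat{n}(k)=\widehat{m_0}(\gamma k/2)$, $\|n\|_{L^1}=\|m_0\|_{L^1}$, and a change of variables in $k$-space yields
\begin{align*}
	\|\widehat{n}-\exp\|_{L^2_{\gamma}}^2 = \tfrac{2}{\gamma}\,\|\widehat{m_0}-\exp_{2/\gamma}\|_{L^2_{2}}^2,
\end{align*}
reducing the primal problem to the strip of width $1$. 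Since $\widehat{p_\gamma}(k)=\mathrm{e}^{-|k|\cdot 2/\gamma}$, the difference $\exp_{2/\gamma}-\widehat{p_\gamma}$ vanishes on $(-\infty,0]$ and equals $2\sinh(2k/\gamma)$ for $k>0$, which is in $L^2_2$ precisely because $\gamma>2$ (the weight $\mathrm{e}^{-2k}$ dominates $\mathrm{e}^{4k/\gamma}$ at $+\infty$). Hence $\widehat{m_0}-\exp_{2/\gamma}\in L^2_2$ is equivalent to $\widehat{m_0}-\widehat{p_\gamma}\in L^2_2$, and together with $m_0-p_\gamma\in L^1$, Theorem~\ref{thm:holomorphicextension} produces a unique $h\in\mathbb{H}^{1,2}(S)$ with $h_0=m_0-p_\gamma$; setting $m:=h+p_\gamma$ gives the desired meromorphic extension, providing the bijection $n\leftrightarrow m$.

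The central computation is the upper boundary value $m_1(x)=m(x+\ii)$. Theorem~\ref{thm:holomorphicextension} gives $\widehat{h_1}(k)=\widehat{h_0}(k)\mathrm{e}^{-k}$, so only $\widehat{p_\gamma(\cdot+\ii)}$ has to be determined. Shifting the integration contour from $\RR$ to $\RR+\ii$ crosses the pole $z=2\ii/\gamma\in S$ of $p_\gamma$; applying the residue theorem to a rectangular contour whose vertical sides vanish at infinity (since $|p_\gamma(R+\ii y)|\lesssim R^{-2}$ uniformly in $y\in[0,1]$), I obtain
\begin{align*}
	\widehat{p_\gamma}(k)-\mathrm{e}^k\,\widehat{p_\gamma(\cdot+\ii)}(k) = 2\pi\ii\cdot\mathrm{Res}_{z=2\ii/\gamma}\bigl[p_\gamma(z)\,\mathrm{e}^{-\ii k z}\bigr] = \mathrm{e}^{2k/\gamma}.
\end{align*}
Summing the two contributions yields $\widehat{m_1}(k)=(\widehat{m_0}(k)-\mathrm{e}^{2k/\gamma})\mathrm{e}^{-k}$, and Plancherel's theorem gives $\|m_1\|_{L^2}^2=\tfrac{1}{2\pi}\|\widehat{m_0}-\exp_{2/\gamma}\|_{L^2_2}^2$. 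Combining with the rescaling identity yields $\|\widehat{n}-\exp\|_{L^2_\gamma}^2=\tfrac{4\pi}{\gamma}\|m_1\|_{L^2}^2$, and, together with $\|n\|_{L^1}=\|m_0\|_{L^1}$, this is precisely \eqref{eq:primalholomorphic}. The uniqueness up to the transformation $m(z)\mapsto\mathrm{e}^{\alpha(\ii z+2/\gamma)}m(z)$ will follow by transporting the scaling uniqueness of Lemma~\ref{lem:scaling} through the bijection.

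The main technical point is the residue computation: because the contour shift by $\ii$ crosses the pole of $p_\gamma$ inside $S$, a residual term $\mathrm{e}^{2k/\gamma}$ appears which cannot be absorbed into $\widehat{h_0}(k)\mathrm{e}^{-k}$. It is precisely this term that forces the pole function $p_\gamma$ to enter the statement and that fixes the correspondence between the exponential weight $\gamma$ in \eqref{eq:primalL1} and the width-$1$ strip $S$; everything else reduces to Fourier/Plancherel manipulations and a direct invocation of Theorem~\ref{thm:holomorphicextension}.
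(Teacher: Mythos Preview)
Your proof is correct and follows essentially the same route as the paper: the same rescaling $n(x)=\tfrac{2}{\gamma}m_0(2x/\gamma)$ to pass from the weight $\gamma$ to the weight $2$, the same observation that $\widehat{p_\gamma}-\exp_{2/\gamma}\in L^2_2$ reduces the domain condition to $m_0-p_\gamma\in L^1$ with $\widehat{m_0-p_\gamma}\in L^2_2$, the same invocation of Theorem~\ref{thm:holomorphicextension} to extend $m_0-p_\gamma$ holomorphically, and the same residue computation for $\widehat{p_\gamma(\cdot+\ii)}$ producing the extra term $\mathrm{e}^{2k/\gamma}$. Your presentation is slightly more streamlined in that you organize the argument explicitly as a bijection and verify the constant $\tfrac{4\pi}{\gamma}$ in one line at the end, but the mathematical content is identical.
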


\begin{remark}
\begin{enumerate}[label=(\arabic*)]
\item In other words, Proposition~\ref{prop:holomorphicprimal} says that the primal problem consists in finding the  best holomorphic approximation to the simple pole function $(z-\ii \frac{2}{\gamma})^{-1}$ with respect to suitable $L^p$ norms on the boundary of $S$.
\item In the formulation of Proposition~\ref{prop:holomorphicprimal}, the ansatz used for the primal problem in  \cite[Appendix D]{HKRV23} corresponds to
\begin{align*}
h(z)- p_\gamma(z) = \frac{1}{2\pi} \frac{\alpha^p \beta^q}{(\alpha-1-\ii z)^p(\beta-1-\ii z)^q(1+\ii z)}.
\end{align*}
\end{enumerate}
\end{remark}

\begin{proof}[Proof of Proposition~\ref{prop:holomorphicprimal}] 
Let $m\in\dom\mathcal{E}_{\gamma}$, i.e.\ $m\in L^1(\RR)$ such that $\widehat{m} - \exp \in L^2_{\gamma}$. By the change of variables $\widehat{\widetilde{m}}(k)\coloneqq \widehat{m}(\frac{2}{\gamma}k)$, which amounts to $\widetilde{m}(x) = \frac{\gamma}{2} m(\frac{\gamma}{2}x)$, we have that
\begin{align*}
	\dom\mathcal{E}_{\gamma} = \left\{ m \in L^1(\R): \widehat{\widetilde{m}} - \exp_{\frac{2}{\gamma}} \in L^2_2(\RR) \right\}.
\end{align*}

Next, note that $P_0(x) \coloneqq p_{\gamma}(x)$ for $x\in\RR$ satisfies $P_0 \in L^1(\RR)$ with $\widehat{P_0}(k)  = \mathrm{e}^{-\frac{2}{\gamma} |k|}$; in particular, since $\gamma > 2$,
\begin{align*}
	\int_{\RR} \left| \widehat{P_0}(k) - \mathrm{e}^{\frac{2}{\gamma} k}\right|^2 \,\mathrm{e}^{-2k}\,\mathrm{d}k 
	= \int_{0}^{\infty} \left| \mathrm{e}^{-\frac{2}{\gamma}k} - \mathrm{e}^{\frac{2}{\gamma}k} \right|^2 \,\mathrm{e}^{-2k}\,\mathrm{d}k < \infty.
\end{align*}
Hence, we may write 
\begin{align*}
	\dom \mathcal{E}_\gamma = \{m \in L^1(\R) : \widetilde{m} = P_0+v \mbox{ for } v \in L^1(\R), \widehat{v} \in L^2_2(\R)\}.
\end{align*}
Consequently, 
\begin{align}\label{eq:primalhere}
	M_\gamma &= \frac{2}{\gamma} \min_{\widetilde{m} \in L^1} \norm{\widetilde{m}}_{L^1}^{\gamma-2} \norm{\widehat{\widetilde{m}} - \exp_{\frac{2}{\gamma}}}_{L^2_2}^2 
	= \frac{2}{\gamma} \min_{\substack{v \in L^1 \\ \widehat{v} \in L^2_2}} \, \norm{P_0+v}_{L^1}^{\gamma-2} \bignorm{\widehat{P_0} +\widehat{v} -\exp_{\frac{2}{\gamma}}}_{L^2_2}^2.
\end{align}
We now consider the function $P_1(x)\coloneqq p_{\gamma}(x+\ii)$, which by the residue theorem satisfies 
\begin{align*}
	\widehat{P_1}(k) = \widehat{P_0}(k) \,\mathrm{e}^{-k} - 2\pi\ii \mathrm{Res}_{\ii \frac{2}{\gamma}}\left(p_{\gamma}\mathrm{e}^{-\ii k (z-\ii)}\right)
	= \widehat{P_0}(k) \,\mathrm{e}^{-k} - \mathrm{e}^{\frac{2}{\gamma}} \mathrm{e}^{-k}.
\end{align*}
Then by \eqref{eq:primalhere}, we can write 
\begin{align}
	M_{\gamma} 
	&= \frac{2}{\gamma} \min_{\substack{v \in L^1 \\ \widehat{v} \in L^2_2}} \norm{P_0+v}_{L^1}^{\gamma-2} \bignorm{\widehat{P_1}\exp + \exp_{\frac{2}{\gamma}} + \widehat{v} - \exp_{\frac{2}{\gamma}}}_{L^2_2}^2 \nonumber \\
	&=\frac{2}{\gamma} \min_{\substack{v \in L^1 \\ \widehat{v} \in L^2_2}} \norm{P_0+v}_{L^1}^{\gamma-2} \bignorm{\widehat{P_1} + \widehat{v} \exp_{-1}}_{L^2}^2. \label{eq:primalhere2}
\end{align}
Appealing to Theorem~\ref{thm:holomorphicextension}, $v$ is the boundary value $h_0$ of a unique holormophic function $h \in \mathbb{H}^{1,2}(S)$.
Since $\widehat{h_1} = \widehat{v} \exp_{-1}$ for such $h$, we obtain
\begin{align*}
		M_{\gamma} 	
		&=\frac{2}{\gamma} \min_{h \in \mathbb{H}^{1,2}(S)} \norm{P_0+h_0}_{L^1}^{\gamma-2} \bignorm{\widehat{P_1} + \widehat{h_1}}_{L^2}^2
		= \frac{4\pi }{\gamma} \min_{h \in \mathbb{H}^{1,2}(S)} \norm{P_0+h_0}_{L^1}^{\gamma-2} \bignorm{P_1 + h_1}_{L^2}^2,
\end{align*}
where in the last equality we used Plancherel's theorem. This completes the proof of \eqref{eq:primalholomorphic}.
\end{proof}

We can now give the proof of Theorem~\ref{thm:threelines}.
\begin{proof}[Proof of Theorem~\ref{thm:threelines}] 
Similar to the previous proof, note that for any $g \in L^\infty(\R)$ such that $\widehat{g} \in L^2_{-\gamma}(\RR)$ we have that the function $\widetilde{g}(x) \coloneqq \frac{\gamma}{2} g(-\frac{\gamma}{2} x)$ satisfies $\widetilde{g} \in L^\infty(\R)$ and $\widehat{\widetilde{g}} \in L^2_2(\RR)$. Moreover, there holds
\begin{align*}
	\norm{\widetilde{g}}_{L^\infty} 
	= \frac{\gamma}{2} \norm{g}_{L^\infty},
	\quad\norm{\widehat{\widetilde{g}}}_{L^2_2}^2 
	= \frac{\gamma}{2} \norm{\widehat{g}}_{L^2_{-\gamma}}^2,
	\quad\mbox{and}\quad 
	\re \inner{\widehat{\widetilde{g}}, \exp_{-\frac{2}{\gamma}} } 
	= \frac{\gamma}{2} \re \inner{\widehat{g}, \exp}.
\end{align*}
By Theorem~\ref{thm:holomorphicextension}, we can now find a unique $h \in \mathbb{H}^{\infty,2}(S)$ with $h_0 = \widetilde{g}$. The function $h$ satisfies $\widehat{h_1}(k) = \widehat{\widetilde{g}}(k) \mathrm{e}^{-k}$ for  $k\in\RR$, and therefore 
\begin{align*}
    \norm{\widehat{\widetilde{g}}}_{L^2_2}^2 = \|\widehat{h_1}\|_{L^2}^2 = 2\pi \|h_1\|_{L^2}^2.
\end{align*}
Moreover, for any $y\in (0,1)$ we have that $\widehat{h_y}(k) = \widehat{\widetilde{g}}(k) \,\mathrm{e}^{-y k}$ for $k\in\RR$, hence
\begin{align*}
	\re \inner{\widehat{\widetilde{g}}, \exp_{-\frac{2}{\gamma}} } 
	&= \re \int_{\RR} \widehat{\widetilde{g}} \,\mathrm{e}^{-\frac{2}{\gamma}k}\,\mathrm{d}k 
	=   \re \int_{\RR} \widehat{h_{\frac{2}{\gamma}}}(k)\,\mathrm{d}k 
	= 2\pi \re h_{\frac{2}{\gamma}}(0) 
	= 2\pi \re h(\ii \tfrac{2}{\gamma}).
\end{align*}
We can therefore recast \eqref{eq:scaleinvariantdual} into
\begin{align}\label{eq:holomorphicdual-proof}
	M_{\gamma} = \frac{16\pi (\gamma-2)^{\gamma-2}}{\gamma^{\gamma+1}} \left(\max_{h\in\mathbb{H}^{\infty,2}(S)} \frac{\Re h(\ii\tfrac{2}{\gamma})}{\|h_1\|_{L^2}^{\frac{2}{\gamma}} \|h_0\|_{L^{\infty}}^{1-\frac{2}{\gamma}}} \right)^{\gamma}
\end{align}
Up to a phase factor, $\re h(\ii\frac{2}{\gamma})$ can be replaced by $|h(\ii \frac{2}{\gamma})|$ in the maximization problem \eqref{eq:holomorphicdual-proof}, which by translation invariance of $\|h_1\|_{L^2}$ and $\|h_0\|_{L^{\infty}}$ can in turn be replaced by $\norm{h_{\frac{2}{\gamma}}}_{L^\infty}$. By passing from $h: S \to \CC$ to the function $\widetilde{h}: S_-\to \CC$, $z\mapsto \widetilde{h}(z)\coloneqq \overline{h(\overline{z})}$, we obtain \eqref{eq:3linesproblem}.
\end{proof}
\section{Characterization of optimizers} \label{sec:optimisers}

In this section we derive the Euler-Lagrange equation for the primal problem and use it to characterize the primal and dual optimizers. One of the key observations is that the primal and dual optimizers can be seen as analytic/meromorphic extensions of each other.

\subsection{Euler-Lagrange equation and dual-primal optimizer relation}
Let us start with the Euler-Lagrange equation for the functional $\mathcal{E}_{\gamma}$ associated with the primal problem~\eqref{eq:primalL1},
\begin{align*}
    \mathcal{E}_\gamma(m) = \norm{m}_{L^1}^{\gamma-2} \norm{\widehat{m}-\exp}_{L^2_\gamma}^2.
\end{align*}
Formally, the Gateaux derivative of $\mathcal{E}_\gamma$ is given by
\begin{align}
	\mathrm{d}_m \mathcal{E}_\gamma(\delta) 
	= (\gamma-2) \|m\|_{L^1}^{\gamma-3} \|\widehat{m}-\exp\|_{L^2_{\gamma}}^2 \Re \inner{\textrm{sign}\, m,\delta} + 2 \|m\|_{L^1}^{\gamma-2} \re \biggr\langle \frac{\widehat{m} - \exp}{\exp_{\gamma}}, \widehat{\delta}\biggr\rangle,  \label{eq:Euler-Lagrange}
\end{align} 
where $\mathrm{sign}\, z = \frac{z}{|z|}$, $z\in \CC\setminus\{0\}$, is the complex sign function. 
To make this calculation rigorous, however, the critical step is to show that $m(x) \neq 0$ almost everywhere. It turns out that, by using finer results in the theory of Hardy spaces, one can show that this holds for any $m \in \dom \mathcal{E}_\gamma$. This calculation, however, is rather technical and not needed for our purposes in this article. Instead, we show here that $m$ is a minimizer of $\mathcal{E}_\gamma$, provided it satisfies the following Euler-Lagrange equation. Recall from equation \eqref{eq:Edomain} that $\dom \mathcal{E}_\gamma = \{m\in L^1(\RR): \widehat{m} -\exp \in L^2_{\gamma}\}$.
\begin{lemma}[Euler-Lagrange equation] \label{lem:eulerlagrange} Let $m \in \dom \mathcal{E}_\gamma$ satisfy $m(x) \neq 0$ a.e.\ and the Euler-Lagrange equation
\begin{align}
	\widehat{\mathrm{sign}\, m}(k) + c \left(\widehat{m}(k) - \mathrm{e}^k\right)\mathrm{e}^{-\gamma k} = 0 \quad \text{for a.e. } k \in \RR, \label{eq:primaloptimizer}
\end{align}
for some constant $c>0$, then $m$ is a minimizer of the primal problem~\eqref{eq:primalL1}. Moreover, for any $\beta >0$, the function
\begin{align}
  g \coloneqq \beta \, \mathrm{sign}\, m,  \label{eq:primaldualrelation}
\end{align}
is a maximizer of \eqref{eq:scaleinvariantdual}.
\end{lemma}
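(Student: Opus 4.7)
The plan is to verify both claims through the Fenchel--Rockafellar duality framework of Lemma~\ref{lem:dualproblem} together with the scale-invariant reformulation of Theorem~\ref{thm:scaleinvariantdual}. The key observation will be that the EL equation~\eqref{eq:primaloptimizer} forces the candidate dual optimizer $g^\star := -\tfrac{4\pi}{c}\mathrm{sign}\,m \in L^\infty(\RR)$ to satisfy both primal--dual optimality (KKT) conditions for the dual problem~\eqref{eq:dual}.

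I would first use the scale invariance of $\mathcal{E}_\gamma$ (Lemma~\ref{lem:scaling}) to assume $\|m\|_{L^1}=1$; the EL equation~\eqref{eq:primaloptimizer} is preserved under such rescaling with an appropriately modified $c>0$. In this case $\mathcal{E}_\gamma(m)=F_\gamma(m)$ and primal optimality reduces to $m$ minimizing the convex functional $F_\gamma$ (Lemma~\ref{lem:L1primal}) on the $L^1$-unit ball. By convexity, it suffices to check that the Gateaux derivative $dF_\gamma(m;\tilde m - m) \geq 0$ for admissible $\tilde m$. A direct calculation gives
\begin{align*}
	dF_\gamma(m;\delta) = 2\re \int_\RR \overline{(\widehat{m} - \mathrm{e}^k)}\,\widehat{\delta}(k)\,\mathrm{e}^{-\gamma k}\,\mathrm{d}k.
\end{align*}
Substituting the EL equation in the form $(\widehat{m} - \mathrm{e}^k)\mathrm{e}^{-\gamma k} = -\tfrac{1}{c}\widehat{\mathrm{sign}\,m}$ and then applying Plancherel's theorem (Lemma~\ref{lem:parseval}, valid since $\mathrm{sign}\,m \in L^\infty$ and $\widehat{\mathrm{sign}\,m}\in L^2_{-\gamma}$ by the EL itself) transforms the Gateaux derivative into $-\tfrac{4\pi}{c}\re \int_\RR \overline{\mathrm{sign}\,m(x)}(\tilde m - m)(x)\,\mathrm{d}x$, which is non-negative by H\"older's inequality and $\|\tilde m\|_{L^1}\leq \|m\|_{L^1}=1$. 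This proves $\mathcal{E}_\gamma(m) = M_\gamma$.

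For the dual claim, I would observe that the above Gateaux derivative computation actually identifies $\nabla F_\gamma(m)$ with the $L^\infty$ function $g^\star$ (of norm $\|g^\star\|_\infty = 4\pi/c$), so $g^\star \in \partial F_\gamma(m)$. Moreover, the identity $\overline{g^\star(x)}\,m(x) = -\tfrac{4\pi}{c}|m(x)|$ gives $\re\int \overline{g^\star}\,m\,\mathrm{d}x = -\|g^\star\|_\infty\|m\|_{L^1} = -\|g^\star\|_\infty$, which together with $\|m\|_{L^1}=1$ is precisely the condition $-g^\star \in \partial G(m)$ for $G$ the characteristic function of the unit $L^1$-ball. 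By the Fenchel--Rockafellar duality theorem used in the proof of Lemma~\ref{lem:dualproblem}, these KKT conditions show that $g^\star$ attains the minimum in~\eqref{eq:dual}.

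Finally, the phase-rotation step in the proof of Theorem~\ref{thm:scaleinvariantdual} turns the minimizer $g^\star$ of~\eqref{eq:dual} (for which $\re\int \widehat{g^\star}\mathrm{e}^k\,\mathrm{d}k < 0$) into the maximizer $-g^\star = \tfrac{4\pi}{c}\mathrm{sign}\,m$ of the scale invariant dual~\eqref{eq:scaleinvariantdual}; by the manifest invariance $\mathcal{E}^\ast_\gamma(\lambda g) = \mathcal{E}^\ast_\gamma(g)$ for $\lambda>0$, $g = \beta\,\mathrm{sign}\,m$ is therefore a maximizer for any $\beta > 0$. The main technical hurdle will be the rigorous identification of $\nabla F_\gamma(m)$ with the bounded function $g^\star$, which hinges on applying the Plancherel extension of Lemma~\ref{lem:parseval} to the admissible test directions $\delta = \tilde m - m$ (for which $\widehat{\delta} \in L^2_\gamma$), combined with the a~priori information $\widehat{\mathrm{sign}\,m}\in L^2_{-\gamma}$ coming from the EL equation itself.
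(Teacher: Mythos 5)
Your proposal is correct in substance and, for the primal half, essentially coincides with the paper's argument: both hinge on the observation that \eqref{eq:primaloptimizer} forces $\widehat{\mathrm{sign}\,m}\in L^2_{-\gamma}$, so that the extended Plancherel identity (Lemma~\ref{lem:parseval}) converts the quadratic term's derivative into the pairing $\re\inner{\mathrm{sign}\,m,\delta}$, after which convexity does the rest. The paper packages this as criticality of the penalized functional $\norm{m}_{L^1}+\tfrac{c}{4\pi}\norm{\widehat m-\exp}_{L^2_\gamma}^2$ followed by Lemma~\ref{lem:scaling}, whereas you normalize $\norm{m}_{L^1}=1$ (legitimate, since \eqref{eq:primaloptimizer} is preserved under the rescaling with $c\mapsto c\alpha^{1-\gamma}$) and verify first-order optimality of $F_\gamma$ on the unit ball via \eqref{eq:primalX2}; these are equivalent. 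For the dual half your route is genuinely different: you identify $g^\star=-\tfrac{4\pi}{c}\mathrm{sign}\,m$ as an element of $\partial F_\gamma(m)$ with $-g^\star\in\partial G(m)$ and invoke the Fenchel--Young equality conditions to conclude that $g^\star$ minimizes \eqref{eq:dual}, while the paper instead solves for $c$ explicitly, computes $\norm{\widehat g}_{L^2_{-\gamma}}$, $\inner{\widehat g,\exp}$ and $\norm{g}_{L^\infty}$ for $g=\beta\,\mathrm{sign}\,m$, and checks directly that the scale-invariant dual objective equals $M_\gamma$. Your KKT argument is cleaner and avoids the explicit constant, but it proves optimality only for the non-scale-invariant problem \eqref{eq:dual}.

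The one step you assert without justification is the passage from there to the statement actually claimed, namely that the \emph{real, positive} multiple $\beta\,\mathrm{sign}\,m$ maximizes \eqref{eq:scaleinvariantdual}. Two things need to be supplied. First, the transfer itself: since a global minimizer of \eqref{eq:dual} in particular minimizes over its own orbit under $g\mapsto \mathrm{e}^{\ii\theta}\beta\alpha^{\ii x}g$, the computation in the proof of Theorem~\ref{thm:scaleinvariantdual} shows it maximizes the quotient with $\bigl|\int\widehat g\,\mathrm{e}^k\,\mathrm{d}k\bigr|$ in the numerator; this is easy but should be said. Second, and more importantly, to replace the modulus by the real part for $-g^\star$ you need that $\int\widehat{g^\star}(k)\,\mathrm{e}^k\,\mathrm{d}k$ is real and negative, not merely that its real part is negative; otherwise only some phase rotation $\mathrm{e}^{\ii\theta}\mathrm{sign}\,m$ would be a maximizer of \eqref{eq:scaleinvariantdual}, which is weaker than the claim. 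Within your framework this is a one-line computation, exactly the one the paper performs: writing $\inner{\widehat{g^\star},\exp}=\inner{\widehat{g^\star},\exp-\widehat m}+\inner{\widehat{g^\star},\widehat m}$, the first term equals $-4\pi\norm{\widehat m-\exp}_{L^2_\gamma}^2$ by \eqref{eq:primaloptimizer} (since $\widehat{g^\star}=4\pi(\widehat m-\exp)\exp_{-\gamma}$) and the second equals $2\pi\inner{g^\star,m}=-2\pi\norm{g^\star}_{L^\infty}\norm{m}_{L^1}$ by Lemma~\ref{lem:parseval}, so the quantity is real. With this addition your argument is complete.
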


\begin{proof} First note that since $m \neq 0$ a.e., we have $\mathrm{sign} \,m \in L^\infty(\R)$; in particular, the Fourier transform $\widehat{\mathrm{sign}\, m}$ is well-defined in the tempered distribution sense. Since $m \in \dom\mathcal{E}_{\gamma}$,  equation~\eqref{eq:primaloptimizer} therefore makes sense as a distributional identity and implies that $\widehat{\mathrm{sign} \, m} \in L^2_{-\gamma}(\R)$, from which we infer that \eqref{eq:primaloptimizer} also holds pointwise for a.e.\ $k\in\RR$. Hence, if we set
\begin{align*}
\mathcal{E}^c_\gamma(m) \coloneqq \norm{m}_{L^1} + \frac{c}{4 \pi} \norm{\widehat{m}-\exp}_{L^2_\gamma}^2
\end{align*}
where $c>0$ is the constant in \eqref{eq:primaloptimizer}, from Plancherel's identity (Lemma~\ref{lem:parseval}) we conclude that
\begin{align*}
    \lim_{\epsilon \downarrow 0} \frac{\mathcal{E}_\gamma^c(m+\epsilon \delta) - \mathcal{E}_\gamma^c(m)}{\epsilon} &= \re \inner{ \mathrm{sign}\, m, \delta} + \frac{c}{2\pi} \re \inner{(\widehat{m} - \exp)\exp_{-\gamma},\widehat{\delta}} \\
    &= \frac{1}{2\pi} \inner{ \widehat{\mathrm{sign} \, m} + c (\widehat{m} - \exp)\exp_{-\gamma}, \widehat{\delta}} = 0.
\end{align*}
In particular, $m$ is a critical point of $\mathcal{E}_\gamma^c$. Consequently, $m$ must be the global minimizer of $\mathcal{E}^c_\gamma$ by the strict convexity of $\mathcal{E}^c_\gamma$, and therefore, a minimizer of $\mathcal{E}_{\gamma}$ by Lemma~\ref{lem:scaling}.

For the statement about the dual optimizer we set $g \coloneqq \beta \mathrm{sign} \, m$ for some $\beta >0$. Then, since $m$ is a primal optimizer and satisfies $m \neq 0$, we have
\begin{align*}
    \mathrm{d}_m  \mathcal{E}_\gamma(\delta) = (\gamma-2) \|m\|_{L^1}^{\gamma-3} \|\widehat{m}-\exp\|_{L^2_{\gamma}}^2 \Re \inner{\textrm{sign}\, m,\delta} + 2 \|m\|_{L^1}^{\gamma-2} \re \biggr\langle \frac{\widehat{m} - \exp}{\exp_{\gamma}}, \widehat{\delta}\biggr\rangle = 0.
\end{align*}
From this equation, Plancherel's identity in the version of Lemma~\ref{lem:parseval}, and \eqref{eq:primaloptimizer}, we conclude that the constant $c>0$ in \eqref{eq:primaloptimizer} is given by
\begin{align}
    c = \frac{4 \pi}{\gamma-2} \frac{\norm{m}_{L^1}}{\norm{\widehat{m} - \exp}_{L^2_\gamma}^2} >0. \label{eq:mconst}
\end{align}
In particular, we have
\begin{align*}
\norm{\widehat{g}}_{L^2_{-\gamma}}^2 &= \beta^2 c^2 \norm{\widehat{m} - \exp}_{L^2_\gamma}^2 = \beta^2 \frac{(4\pi)^2}{(\gamma-2)^2} \frac{\norm{m}_{L^1}^2}{\norm{\widehat{m}-\exp}_{L^2_\gamma}^2}.
\end{align*}
Moreover, there holds 
\begin{align*}
	\beta^{-1} \inner{\widehat{g} , \exp- \widehat{m}} 
	&= - c\inner{(\widehat{m} - \exp)\exp_{-\gamma}, \exp-\widehat{m}} 
	= c \norm{\widehat{m}-\exp}_{L^2_\gamma}^2 
	= \frac{4\pi}{\gamma-2} \norm{m}_{L^1},
\end{align*}
hence
\begin{align*}
	\inner{\widehat{g} , \exp} 
	&= \beta \frac{4\pi}{\gamma-2} \norm{m}_{L^1} + \inner{\widehat{g} , \widehat{m}} 
	= \beta \frac{4\pi}{\gamma-2} \norm{m}_{L^1} + 2\pi \inner{g,m} 
	= \beta \left(\frac{4\pi}{\gamma-2} \norm{m}_{L^1} + 2\pi \|m\|_{L^1}\right)\\
	&= \beta \frac{2\pi \gamma}{\gamma-2} \|m\|_{L^1}.
\end{align*}
So, using that $\norm{g}_{L^\infty} = \beta$, and substituting the above values in the dual problem \eqref{eq:scaleinvariantdual} we find
\begin{align*}
\mathcal{E}^\ast_\gamma(g) 
	= \frac{4 (\gamma-2)^{\gamma-2}}{(2\pi)^{\gamma-2} \gamma^\gamma} \frac{\re \inner{\widehat{g} , \exp}^\gamma}{\norm{\widehat{g}}_{L^2_{-\gamma}}^2 \norm{g}_{L^\infty}^{\gamma-2}} 
	= \norm{m}_{L^1}^{\gamma-2} \norm{\widehat{m}-\exp}_{L^2_\gamma}^2 
	= \mathcal{E}_{\gamma}(m)
	= M_\gamma,
\end{align*}
which implies that $g$ is an optimizer by the strong duality in Theorem~\ref{thm:scaleinvariantdual}.
\end{proof}

In the holomorphic formulation, the above Euler-Lagrange equations can be stated as follows.

\begin{lemma}[Euler-Lagrange - Holomorphic version] \label{lem:holomorphicEulerLagrange} Suppose that $m \in p_\gamma + \mathbb{H}^{1,2}(S)$ can be meromorphically extended to $S_2 \coloneqq \{x + \ii y : 0<y<2\}$ and satisfies the following properties:
\begin{enumerate}[label=(\roman*)]
\item\label{it:mbound} $m$ can be continuously extended to the boundary $\partial S_2$, and the function $m-p_\gamma$ is bounded on $\overline{S_2}$,
\item $m$ satisfies the Euler-Lagrange equation
\begin{align}
	\mathrm{sign}\, m_0 = -c m_2 ,\quad \mbox{for some $c>0$.}  \label{eq:ELholomorphic}
\end{align}
\end{enumerate}
Then $m$ is a minimizer of \eqref{eq:primalholomorphic}. Moreover, for any $\beta \in \C \setminus \{0\}$, the function 
\begin{align}
	h(z) \coloneqq \beta m(z+2\ii) \quad\mbox{ for all $z \in S_{-2}$,} \label{eq:holomorphicprimaldual}
\end{align}
is a maximizer of \eqref{eq:3linesproblem}.
\end{lemma}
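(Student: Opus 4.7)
The strategy is to reduce the holomorphic Euler--Lagrange condition~\eqref{eq:ELholomorphic} to the Fourier-side equation~\eqref{eq:primaloptimizer} of Lemma~\ref{lem:eulerlagrange}, and then transfer optimality through the rescaling used in Proposition~\ref{prop:holomorphicprimal}. The pivotal identity I would establish is
\begin{align*}
\widehat{m_2}(k) = \bigl(\widehat{m_0}(k) - \mathrm{e}^{2k/\gamma}\bigr)\mathrm{e}^{-2k} \quad \text{for a.e.\ } k \in \RR.
\end{align*}
Writing $m = p_\gamma + v$ with $v \in \mathbb{H}^{1,2}(S)$, the hypothesis that $m - p_\gamma$ extends boundedly and holomorphically to $\overline{S_2}$ together with Theorem~\ref{thm:holomorphicextension} applied to the shifted strip $S + \ii$ yields $\widehat{v_2}(k) = \widehat{v_0}(k)\, \mathrm{e}^{-2k}$. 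A contour-shift and residue calculation, exploiting that $\ii \tfrac{2}{\gamma}$ is the only pole of $p_\gamma$ inside $S_2$, gives $\widehat{P_2}(k) = (\widehat{P_0}(k) - \mathrm{e}^{2k/\gamma})\mathrm{e}^{-2k}$; adding the two relations produces the identity above.

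Taking Fourier transforms in~\eqref{eq:ELholomorphic} and substituting this identity produces
\begin{align*}
\widehat{\mathrm{sign}\, m_0}(k) + c\bigl(\widehat{m_0}(k) - \mathrm{e}^{2k/\gamma}\bigr)\mathrm{e}^{-2k} = 0,
\end{align*}
which is precisely the equation~\eqref{eq:primaloptimizer} written in the rescaled variables $m_{\mathrm{orig}}(x) = \tfrac{2}{\gamma} m(\tfrac{2x}{\gamma})$ from the proof of Proposition~\ref{prop:holomorphicprimal}. Since the pole $\ii \tfrac{2}{\gamma}$ of $p_\gamma$ lies strictly inside $S$, the function $m$ is holomorphic in a neighborhood of $\RR$; being non-trivial, the identity principle forces $m_0 \neq 0$ almost everywhere. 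Lemma~\ref{lem:eulerlagrange} then implies that $m_{\mathrm{orig}}$ -- equivalently $m$ -- minimizes~\eqref{eq:primalholomorphic}.

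For the dual statement, Lemma~\ref{lem:eulerlagrange} supplies the dual optimizers $g_{\mathrm{orig}} = \beta_0\, \mathrm{sign}\, m_{\mathrm{orig}}$ with $\beta_0 > 0$. Tracking $g_{\mathrm{orig}}$ through the chain of bijections in the proof of Theorem~\ref{thm:threelines} -- the rescaling $\tilde g(x) = \tfrac{\gamma}{2} g_{\mathrm{orig}}(-\tfrac{\gamma}{2} x)$, the $\mathbb{H}^{\infty,2}(S)$-extension from Theorem~\ref{thm:holomorphicextension}, and the conjugation-reflection $z \mapsto \overline{\tilde h(\bar z)}$ -- while substituting the EL relation $\mathrm{sign}\, m(y) = -c\, m(y + 2\ii)$ and the elementary identity $-\bar z + 2\ii = -\overline{z + 2\ii}$, one finds that the three-lines optimizer takes the form $C\, \tilde m(z + 2\ii)$ with $\tilde m(w) \coloneqq \overline{m(-\bar w)}$ and some $C \in \CC \setminus \{0\}$. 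Invariance of the primal functional under $m \mapsto \tilde m$ (direct verification on the real line using $|\widehat{\tilde m}(k) - \exp| = |\widehat{m}(k) - \exp|$) combined with the uniqueness clause of Proposition~\ref{prop:holomorphicprimal} yields $\tilde m(z) = \mathrm{e}^{\alpha(\ii z + 2/\gamma)} m(z)$ for some $\alpha \in \RR$. Inserting this and absorbing the resulting exponential factor into the symmetry $h_{\alpha,\beta,\omega}$ of~\eqref{eq:3linesproblem} gives a three-lines maximizer of the claimed form $h(z) = \beta m(z + 2\ii)$ for every $\beta \in \CC\setminus\{0\}$. It remains to verify $h \in \mathbb{H}^{\infty,2}(S_-)$: holomorphy on $S_-$ holds because the shifted pole $\ii(\tfrac{2}{\gamma}-2)$ has imaginary part in $(-2,-1)$ and therefore lies outside $S_-$; the boundary value $h_0 = \beta m_2$ is bounded since $|m_2| = 1/c$ by~\eqref{eq:ELholomorphic}; and $h_{-1} = \beta m_1 \in L^2(\RR)$ follows from Proposition~\ref{prop:holomorphicprimal}.

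The main technical obstacle will be the residue calculation for $\widehat{P_2}$ together with the justified application of Theorem~\ref{thm:holomorphicextension} to the shifted strip $S + \ii$, both of which hinge on the assumption that $m - p_\gamma$ is bounded on $\overline{S_2}$. A secondary subtlety is the closing step of the dual argument, where the reflection symmetry of the primal functional must be invoked in an essential way to cast the dual optimizer in the stated form.
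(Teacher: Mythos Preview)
Your primal argument is essentially the paper's, packaged differently. Both reduce to a contour shift across $S_2$ that picks up the residue of $p_\gamma$ at $\ii\tfrac{2}{\gamma}$ and lands on \eqref{eq:primaloptimizer} in the rescaled variables. One technical slip: invoking Theorem~\ref{thm:holomorphicextension} on the shifted strip $S+\ii$ would require $\widehat{v_1}\in L^2_2$, i.e.\ $\widehat{v_0}\in L^2_4$, which is not part of the hypotheses. The paper avoids this by not computing $\widehat{m_2}$ as a function at all; instead it pairs against $\widehat{\phi}$ with $\phi\in C_c^\infty(\R)$ and pushes the contour using only boundedness of $m-p_\gamma$ on $\overline{S_2}$ together with the rapid decay of $\widehat{\phi}$ (as in \eqref{eq:cauchytheorem}). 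Your residue computation $\widehat{P_2}(k)=(\widehat{P_0}(k)-\mathrm{e}^{2k/\gamma})\mathrm{e}^{-2k}$ is correct and matches the paper's \eqref{eq:el-3}.

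For the dual statement your route is correct but considerably more circuitous than the paper's. You start from $g_{\mathrm{orig}}=\beta_0\,\mathrm{sign}\,m_{\mathrm{orig}}$, push it forward through the rescaling, the holomorphic extension, and the conjugation--reflection of Theorem~\ref{thm:threelines}, arrive at $C\,\tilde m(z+2\ii)$ with $\tilde m(w)=\overline{m(-\bar w)}$, and then close the loop via the reflection symmetry of $\mathcal{E}_\gamma$ together with the uniqueness clause of Proposition~\ref{prop:holomorphicprimal}. The paper goes in the opposite (and much shorter) direction: it takes the proposed $h(z)=\beta m(z+2\ii)$, reads off its boundary value $h_0=\beta m_2=-\tfrac{\beta}{c}\,\mathrm{sign}\,m_0$ directly from \eqref{eq:ELholomorphic}, rescales to $g(x)=\tfrac{2}{\gamma}h_0(\tfrac{2}{\gamma}x)$, and observes that this is a positive multiple of $\mathrm{sign}\,m_{\mathrm{orig}}$, which Lemma~\ref{lem:eulerlagrange} already certifies as a maximizer of \eqref{eq:scaleinvariantdual}. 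The reflection--uniqueness detour is not needed.
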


\begin{proof} Let $\phi \in C_c^\infty(\R)$, and notice that $\widehat{\phi}$ is an entire function by the Paley-Wiener theorem. By assumption, $m$ has a meromorphic extension to $S_2$, which we also denote by $m$, and this extension is holomorphic in $S_{(1,2)} = \{x+\ii y \in \C: 1<y<2\}$. In particular, the function $z \mapsto \overline{m(\overline{z}+2\ii)} \widehat{\phi}(z)$ is holomorphic in the strip $S$ and continuous up to the boundary. Moreover, this function is bounded by assumption~\ref{it:mbound} and has integrable decay as $|x| \ra \infty$. Consequently, we can apply the Cauchy integral theorem (c.f.\ \eqref{eq:cauchytheorem}) and use \eqref{eq:ELholomorphic} to obtain
\begin{align}\label{eq:el-1}
\inner{\mathrm{sign}\, m_0, \widehat{\phi}_0}  = -c \inner{m_2 , \widehat{\phi}_0} = -c \inner{m_1, \widehat{\phi}_1} = -c\inner{(m-p_\gamma)_1,\widehat{\phi}_1} -c \inner{(p_\gamma)_1, \widehat{\phi}_1}
\end{align}
for any $\phi \in C_c^\infty(\R)$. Note that since $m-p_{\gamma} \in \mathbb{H}^{1,2}(S)$ and by assumption~\ref{it:mbound}, the function $z \mapsto \overline{(m-p_\gamma)(\overline{z}+\ii)} \widehat{\phi}(z)$ is holomorphic in $S$, bounded and continuous on $\overline{S}$, and has integrable decay as $|x| \ra \infty$. Hence, by \eqref{eq:cauchytheorem} applied to the function $m-p_{\gamma}$, there holds
\begin{align}\label{eq:el-2}
    \inner{(m-p_\gamma)_1,\widehat{\phi}_1} =  \inner{(m-p_\gamma)_0,\widehat{\phi}_2} = \inner{m_0, \widehat{\phi}_2} - \inner{(p_\gamma)_0,\widehat{\phi}_2}.
\end{align}
Recalling the definition of $p_\gamma$ in \eqref{eq:polefunctiondef}, we see that the function $f_{\gamma}: z \mapsto \overline{(p_\gamma)(\overline{z}+\ii)} \widehat{\phi}(z)$ is holomorphic on $\CC \setminus\{z_1, z_2\}$ with two isolated poles of first order at $z_{1,2} = \ii (1 \mp \frac{2}{\gamma})$. Hence, the residue theorem implies that 
\begin{align}\label{eq:el-3}
    \inner{(p_\gamma)_1, \widehat{\phi}_1} = \inner{(p_\gamma)_0, \widehat{\phi}_2} + 2\pi \ii\, \mathrm{Res}_{z_1} f_{\gamma} 
    = \inner{(p_\gamma)_0, \widehat{\phi}_2} - \widehat{\varphi}_{2-\frac{2}{\gamma}}(0).
\end{align}
Combining \eqref{eq:el-1}, \eqref{eq:el-2}, and \eqref{eq:el-3}, it follows that
\begin{align}\label{eq:almostel}
    \inner{\mathrm{sign}\, m_0, \widehat{\phi}_0} = - c \inner{m_0, \widehat{\phi}_2} + c \widehat{\varphi}_{2-\frac{2}{\gamma}}(0).
\end{align}

Now note that from~\eqref{eq:ELholomorphic}, we have $\mathrm{sign} \, m_0 \in L^\infty(\R)$, in particular, $\mathrm{sign} \, m_0$ is a tempered distribution and its Fourier transform is well-defined. On the other hand, $m_0 \in L^1(\R)$, so that by the standard Plancherel identity, we can rewrite \eqref{eq:almostel} as
\begin{align*} 
     \inner{\mathrm{sign} \, m_0, \widehat{\phi}_0} = -c \int_\R \left(2\pi\overline{ \widecheck{m}_0(x)} - \mathrm{e}^{-\frac{2}{\gamma}x}\right) \mathrm{e}^{2x} \phi(x) \mathrm{d} x, \quad \mbox{for any $\phi \in C_c^\infty(\R).$}
\end{align*}
Since $C_c^\infty(\R)$ is dense in $\mathcal{S}(\R)$, the (inverse) Fourier transform of $\mathrm{sign} \, m_0$ is uniquely defined by the above equation and satisfies
\begin{align*}
2\pi \widecheck{\mathrm{sign}\, m_0}(x) = - c \left(2\pi \widecheck{m_0}(x) - \mathrm{e}^{-\frac{2}{\gamma}x}\right) \mathrm{e}^{2x} \quad \mbox{for a.e. $x \in \R$.} 
\end{align*}
Therefore, using that $\widecheck{m}_0(x) = \frac{1}{2\pi} \widehat{m}_0(-x)$ for $x\in\RR$, the rescaled function $\widetilde{m}_0(x) = \tfrac{2}{\gamma} m_0(\tfrac{2}{\gamma} x)$ satisfies the Euler--Lagrange  equation~\eqref{eq:primaloptimizer} from Lemma~\ref{lem:eulerlagrange}. Thus, $\widetilde{m}$ is a minimizer of the primal problem~\eqref{eq:primalL1} and from its correspondence with the holomorphic version~\eqref{eq:primalholomorphic}, we conclude that $m$ is a minimizer of the latter.

We now prove the statement on the dual optimizer: First, notice that the dual problem~\eqref{eq:3linesproblem} is invariant under a phase change, in particuar, we can assume $\beta>0$ in \eqref{eq:holomorphicprimaldual} without loss of generality. Next, notice that the re-scaled boundary value $g(x) \coloneqq \tfrac{2}{\gamma} h_0(\tfrac{2}{\gamma} x)$ satisfies
$g(x)= -c \beta \,\mathrm{sign} \, m_0(\tfrac{2}{\gamma} x)$ by \eqref{eq:ELholomorphic}. As $\tfrac{2}{\gamma} m_0(\tfrac{2}{\gamma} x)$ satisfies~\eqref{eq:primaloptimizer}, the function $g$ satifies \eqref{eq:primaldualrelation}, and the proof follows from the statement for the dual optimizer in Lemma~\ref{lem:eulerlagrange}.
\end{proof}

\subsection{Optimizers}

We now give an explicit construction of the optimizers of our variational problem. To this end, we use the following lemma to construct a phase function $\theta$ (on $\R$) whose analytic extension satisfies a logarithmic variant of the Euler-Lagrange equation~\eqref{eq:ELholomorphic}. 
\begin{definition} \label{def:pvdistribution}
	Let $g \in L^1(\RR)$ be an even function. 
	For any $\varphi\in\mathcal{S}(\RR)$, we define
	\begin{align}
		\eta_g(\varphi) &\coloneqq \int_0^\infty \frac{g(k)}{k (\cosh(2k)-1)} (\varphi(k) - \varphi(-k) - 2\varphi'(0)k)\,\mathrm{d}k,
	\intertext{and}
		\widetilde{\eta}_g(\varphi) &\coloneqq \int_0^\infty \frac{g(k)}{k} (\varphi(k) - \varphi(-k))\,\mathrm{d}k,
	\end{align}
\end{definition}

\begin{lemma}\label{lem:principal-value-distributions}
	Let $g\in L^1(\RR)$ be an even function. Then $\eta_g, \widetilde{\eta}_g \in \mathcal{S}'(\RR)$ define tempered distributions. 
	Moreover, for any $|y|\leq 2$, 
	\begin{align*}
		(\eta_g \exp_{-y})(\varphi) \coloneqq \eta_g(\exp_{-y}\varphi), \quad \varphi \in \mathcal{S}(\RR),
	\end{align*}
	is a well-defined tempered distribution. 
\end{lemma}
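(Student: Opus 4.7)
The plan is to split each integral at $k = 1$ and handle small-$k$ and large-$k$ contributions separately, exploiting Taylor cancellations at zero and the exponential decay of $(\cosh(2k)-1)^{-1} \sim 2e^{-2k}$ at infinity.

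For $\eta_g$ I would first bound the integrand on $(0,1)$: using $\cosh(2k) - 1 = 2\sinh^2 k \geq 2k^2$ together with the Taylor remainder identity
\begin{align*}
	\varphi(k) - \varphi(-k) - 2\varphi'(0)k = \frac{1}{2}\int_0^k (k-s)^2 \bigl[\varphi'''(s) + \varphi'''(-s)\bigr]\,\mathrm{d}s,
\end{align*}
one obtains $|\varphi(k) - \varphi(-k) - 2\varphi'(0)k| \leq \tfrac{1}{3}k^3 \|\varphi'''\|_{L^\infty}$, so the integrand is controlled pointwise by $C\|\varphi'''\|_{L^\infty} g(k)$, which is integrable since $g \in L^1(\RR)$. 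On $(1,\infty)$ the bound $(\cosh(2k)-1)^{-1} \leq 4 e^{-2k}$ combined with the crude estimate $|\varphi(k) - \varphi(-k) - 2\varphi'(0)k| \leq C(1+k)(\|\varphi\|_{L^\infty} + \|\varphi'\|_{L^\infty})$ yields an integrand bounded by $C g(k) e^{-2k}$ times a Schwartz seminorm. Combining, $|\eta_g(\varphi)| \leq C \|g\|_{L^1} \sum_{j \leq 3} \|\varphi^{(j)}\|_{L^\infty}$, establishing continuity on $\mathcal{S}(\RR)$. The argument for $\widetilde{\eta}_g$ is analogous but simpler: near zero $|\varphi(k) - \varphi(-k)|/k \leq 2\|\varphi'\|_{L^\infty}$, at infinity $|\varphi(k) - \varphi(-k)|/k \leq 2\|\varphi\|_{L^\infty}/k$, both integrable against $g$.

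For the weighted distribution $\eta_g \exp_{-y}$ with $|y| \leq 2$, I would interpret the definition $(\eta_g \exp_{-y})(\varphi) := \eta_g(\exp_{-y} \varphi)$ by plugging the smooth function $\psi := \exp_{-y}\varphi$ (which is generally not Schwartz) into the integral formula from Definition~\ref{def:pvdistribution}. The near-zero Taylor bound applies verbatim to $\psi$, with $\|\psi^{(j)}\|_{L^\infty(-1,1)}$ controlled by seminorms of $\varphi$ uniformly for $|y| \leq 2$. The new ingredient at infinity is the pointwise bound $|\psi(\pm k)| \leq e^{|y| k} |\varphi(\pm k)|$ for $k > 0$; combined with $(\cosh(2k)-1)^{-1} \leq 4 e^{-2k}$, the exponential factors cancel precisely when $|y| \leq 2$, leaving the integrand bounded by $C g(k)(|\varphi(k)| + |\varphi(-k)|)/k$ plus a lower-order term $C g(k) e^{-2k}(|y \varphi(0)| + |\varphi'(0)|)$, both integrable. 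This delivers continuity of the functional on $\mathcal{S}(\RR)$, with seminorm constants depending continuously on $y$ in $[-2, 2]$.

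The main obstacle is the sharpness of the threshold $|y| \leq 2$: this is exactly the range where the exponential growth $e^{|y| k}$ of $\exp_{-y}$ is absorbed by the exponential decay of $(\cosh(2k)-1)^{-1}$. The estimate is tight at $|y| = 2$, where only the $L^\infty$-boundedness of $\varphi$ at infinity (not its Schwartz decay) is used; outside this range the integrand becomes generically non-integrable and $\eta_g \exp_{-y}$ must be defined (if at all) by a different mechanism.
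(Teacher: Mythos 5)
Your proposal is correct and follows essentially the same route as the paper: split the integral at $k=1$, use a third-order Taylor cancellation of $\psi(k)-\psi(-k)-2\psi'(0)k$ with $\cosh(2k)-1\gtrsim k^2$ near zero, and use the exponential decay of $(\cosh(2k)-1)^{-1}$ to absorb the growth of $\exp_{-y}$ (for $|y|\leq 2$) at infinity, yielding a bound by $\|g\|_{L^1}$ times finitely many seminorms of $\varphi$. The only cosmetic difference is that the paper treats the weighted case $\eta_g\exp_{-y}$ directly and obtains $\eta_g\in\mathcal{S}'(\RR)$ as the special case $y=0$, and likewise dismisses $\widetilde{\eta}_g$ as standard.
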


\begin{remark}
	The tempered distributions $\eta_g$ and $\widetilde{\eta}_g$ can also be defined via the principal value integrals 
	\begin{align*}
		\eta_g(\varphi) &= \lim_{\epsilon\downarrow0} \int_{|k|\geq \epsilon} \frac{g(k)}{k (\cosh(2k)-1)} (\varphi(k)-\varphi'(0)k)\,\mathrm{d}k \\
		&=  \lim_{\epsilon\downarrow0} \int_{\epsilon}^{\infty} \frac{g(k)}{k (\cosh(2k)-1)} (\varphi(k)-\varphi(-k)-2\varphi'(0)k)\,\mathrm{d}k,
	\end{align*}
	and 
	\begin{align*}
		\widetilde{\eta}_g(\varphi) &= \lim_{\epsilon\downarrow0} \int_{|k|\geq \epsilon} \frac{g(k)}{k} \varphi(k)\,\mathrm{d}k 
		= \lim_{\epsilon\downarrow0} \int_{\epsilon}^{\infty} \frac{g(k)}{k} (\varphi(k)-\varphi(-k))\,\mathrm{d}k,
	\end{align*}
	for any Schwartz function $\varphi\in\mathcal{S}(\RR)$. Hence $\widetilde{\eta}_g$ is the Cauchy principal value distribution associated to $k\mapsto\frac{g(k)}{k}$.
\end{remark}

\begin{proof}[Proof of Lemma~\ref{lem:principal-value-distributions}]
	Let $g\in L^1(\RR)$ and $|y| \leq 2$. Then for any $\varphi\in\mathcal{S}(\RR)$ we can estimate
	\begin{align}
		\biggr| \int_{1}^{\infty} \frac{g(k)}{k (\cosh(2k)-1)} &\bigr(\,\mathrm{e}^{-y k}\varphi(k) +  \mathrm{e}^{y k} \varphi(-k) +2y \varphi(0) k - 2\varphi'(0)k\bigr)\,\mathrm{d}k \biggr| \nonumber \\
		&\lesssim \sup_{k\geq 1} \biggr| \frac{\mathrm{e}^{-y k} + \mathrm{e}^{y k} + 2 y k + 2 k}{ k(\cosh(2k)-1)} \biggr| \bigr(\norm{\varphi}_{L^\infty(\R)} + \norm{\varphi'}_{L^\infty(\R)} |\bigr)\norm{g}_{L^1(\R)}.  \nonumber \\
  &\lesssim \bigr(\norm{\varphi}_{L^\infty(\R)} + \norm{\varphi'}_{L^\infty(\R)}\bigr)\norm{g}_{L^1(\R)}. \label{eq:eta-1}
	\end{align}
	Moreover, setting $\psi = \exp_{-y} \varphi$ and noting that the map $k \mapsto \psi(k) - \psi(-k) -2 \psi'(0)k$ is odd, we have 
	\begin{align*}
		\left|\psi(k) - \psi(-k) - 2 \psi'(0) k \right|
		= \left| \int_{-k}^k \int_0^s \int_0^t \psi'''(r)\,\mathrm{d}r\,\mathrm{d}t\,\mathrm{d}s \right| 
		\leq \|\psi'''\|_{L^{\infty}[-1,1]} \frac{k^3}{3}
	\end{align*}
	for any $k \in [0,1]$, hence
	\begin{align}\label{eq:eta-2}
		&\left| \int_0^1 \frac{g(k)}{k (\cosh(2k)-1)} (\psi(k) - \psi(-k) - 2\psi	'(0)k)\,\mathrm{d}k \right| \\
		&\quad \lesssim \sup_{k\in(0,1)} \left| \frac{k^2}{\cosh(2k)-1} \right| \,  \|\psi'''\|_{L^{\infty}[-1,1]} \|g\|_{L^1(\RR)} 
		\lesssim  \|\psi'''\|_{L^{\infty}[-1,1]} \|g\|_{L^1(\RR)}.
	\end{align}
	By the definition of $\psi$, it follows from the Leibniz rule and $|y|\leq 2$ that 
	\begin{align*}
		 \|\psi'''\|_{L^{\infty}[-1,1]} 
		 \leq \sup_{k\in[-1,1]} \mathrm{e}^{-y k} \left| \varphi'''(k) - 3 y \varphi''(k) + 3 y^2 \varphi'(k) - y^3 \varphi(k) \right| 
		 \lesssim \sum_{j=0}^3 \|\varphi^{(j)}\|_{L^{\infty}(\RR)}.
	\end{align*}
	Together with \eqref{eq:eta-2} and \eqref{eq:eta-1}, this shows that $\eta_g \exp_{-y}$ is a tempered distribution for any $|y|\leq 2$, in particular that $\eta_g$ defines a tempered distribution. The easy proof for $\widetilde{\eta}_g$ is standard.
\end{proof}

The phase function $\theta$ can now be defined via inverse Fourier transform of the tempered distribution $\eta_g$, which can be extended to the complex plane owing to the fact that $\eta_g \exp_{-y}$ is a tempered distribution for any $|y|\leq 2$.

\begin{lemma}[Singular principal value phase function]\label{lem:singularPVphasefunction} 
Let $g\in L^1(\RR)$ be an even function with bounded derivatives up to order $3$ on $[-1,1]$. Then the function $z\mapsto \theta_{g}(z):= \frac{1}{2\pi} \eta_g(\mathrm{e}^{\ii z (\cdot)})$ is well-defined and holomorphic on the open strip $S_{(-2,2)} = \{ x+\ii y \in \C: |y| <2 \}$. 
Moreover, it has a continuous extension to the closure of $S_{(-2,2)}$ satisfying
\begin{align}
	\ii\mathrm{Im}\, \theta_{g}(x-2\ii) - \theta_{g}(x) = \frac{1}{2\pi} \widetilde{\eta}_g(\mathrm{e}^{\ii x (\cdot)}), \label{eq:phasefunctionid}
\end{align}
and the estimate
\begin{align}
 &\left|\re \theta_{g}(x+\ii y) - g(0) \frac{|x| y}{2}\right| \lesssim \|g\|_{L^1} +  \sum_{j=0}^3 \|g^{(j)}\|_{L^{\infty}([0,1])}, \label{eq:realthetaest}
\end{align}
 for any $x+\ii y \in \overline{S_{(-2,2)}}$.
\end{lemma}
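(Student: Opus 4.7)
The plan is to work from the integral representation
\[
2\pi\,\theta_g(z) = \int_0^\infty \frac{g(k)}{k(\cosh(2k)-1)}\bigl(\mathrm{e}^{\ii zk} - \mathrm{e}^{-\ii zk} - 2\ii z k\bigr)\,\mathrm{d}k,
\]
obtained by evaluating $\eta_g$ on $\varphi(k) = \mathrm{e}^{\ii z k}$, and to verify each assertion by careful bounds on this integrand.

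\textbf{Holomorphicity and boundary continuity.} First I would establish absolute integrability of the integrand uniformly for $z$ in compact subsets of $\overline{S_{(-2,2)}}$. Near $k=0$, the Taylor identity $\mathrm{e}^{\ii zk} - \mathrm{e}^{-\ii zk} - 2\ii z k = 2\ii(\sin(zk)-zk) = O((zk)^3)$ combined with $k(\cosh(2k)-1) \sim 2k^3$ yields a bound $C(|z|)|g(k)|$ on $[0,1]$. For $k\geq 1$, the numerator is bounded by $C(1+|z|k)\mathrm{e}^{|y|k}$ while the denominator is $\geq c k\mathrm{e}^{2k}/2$, giving the bound $C(|z|)|g(k)|/k\leq C(|z|)|g(k)|$, integrable for $g\in L^1(\RR)$ even at $|y|=2$. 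Morera's and Fubini's theorems yield holomorphicity on the open strip, while dominated convergence gives continuity on the closure.

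\textbf{Identity \eqref{eq:phasefunctionid}.} Since $g$ is real and even, the integrand at $z=x\in\RR$ is purely imaginary, so $\theta_g(x)=\ii\,\mathrm{Im}\,\theta_g(x)$. For $z=x-2\ii$, a direct computation gives
\[
\mathrm{Im}\bigl(\mathrm{e}^{\ii(x-2\ii)k}-\mathrm{e}^{-\ii(x-2\ii)k}-2\ii(x-2\ii)k\bigr)=2\sin(xk)\cosh(2k)-2xk,
\]
and subtracting the corresponding imaginary part at $z=x$, namely $2\sin(xk)-2xk$, leaves $2\sin(xk)(\cosh(2k)-1)$. The factor $\cosh(2k)-1$ then cancels against the denominator, giving
\[
\ii\,\mathrm{Im}\,\theta_g(x-2\ii)-\theta_g(x)=\frac{\ii}{\pi}\int_0^\infty \frac{g(k)\sin(xk)}{k}\,\mathrm{d}k=\frac{1}{2\pi}\widetilde{\eta}_g(\mathrm{e}^{\ii x(\cdot)}).
\]

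\textbf{Real-part estimate \eqref{eq:realthetaest}.} Using the identity
\[
\re(\mathrm{e}^{\ii zk}-\mathrm{e}^{-\ii zk}-2\ii zk)=2(yk-\sinh(yk))+2(1-\cos(xk))\sinh(yk),
\]
I would split $2\pi\,\re\theta_g(x+\ii y)=I_1(y)+I_2(x,y)$. The $x$-independent term $I_1$ is bounded by a constant times $\|g\|_{L^1}+\|g\|_{L^\infty([0,1])}$ using the cubic vanishing of $yk-\sinh(yk)$ at zero and the $1/k$ decay of the quotient for $k\geq 1$. For $I_2$, I would decompose $g=g(0)\chi+(g-g(0)\chi)$ with a smooth cutoff $\chi$ supported in $[-2,2]$ equal to $1$ on $[-1,1]$; the contribution of $g-g(0)\chi$ is bounded by $\|g\|_{L^1}+\sum_{j=0}^3\|g^{(j)}\|_{L^\infty([0,1])}$, exploiting the Taylor bound $|g(k)-g(0)|\leq\tfrac{k^2}{2}\|g''\|_{L^\infty([0,1])}$ (valid since $g$ is even, so $g'(0)=0$). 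For the main $g(0)\chi$ contribution, I would expand $\sinh(yk)/(k(\cosh(2k)-1))=y/(2k^2)+R(y,k)$, where $R(y,\cdot)$ is bounded near zero and $O(1/k)$ at infinity, and then use the explicit evaluation $\int_0^\infty(1-\cos(|x|k))/k^2\,\mathrm{d}k=\pi|x|/2$ to extract the claimed linear-in-$|x|$ leading term; the remainder contributions are all $O(1)$ uniformly in $(x,y)\in\overline{S_{(-2,2)}}$.

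\textbf{Main obstacle.} The technically most delicate step is the real-part estimate: one must isolate the precise $|x|$-growth---which originates entirely from the $1/k^2$ singularity at the origin, and hence depends only on $g(0)$---while absorbing every other contribution into a uniform bound involving only $\|g\|_{L^1}$ and the local $C^3$-norm of $g$ on $[-1,1]$. The splitting $g=g(0)\chi+(g-g(0)\chi)$ combined with the model integral $\int_0^\infty(1-\cos(|x|k))/k^2\,\mathrm{d}k$ is precisely tailored to this purpose.
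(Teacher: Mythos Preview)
Your argument is correct. The holomorphicity/continuity step and the proof of the identity \eqref{eq:phasefunctionid} are essentially the same as in the paper. For the real-part estimate \eqref{eq:realthetaest}, however, you take a genuinely different route.

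The paper keeps $g$ intact and instead localizes the \emph{integrand} via a cutoff $\chi$, introducing the auxiliary function $\psi_y(k)=\chi(k)g(k)\frac{k\sinh(ky)}{\cosh(2k)-1}$; writing $\cos(kx)-1$ as a double integral of $k^2\cos(kxts)$ turns the main term into an integral of $\widehat{\psi_y}$, and the leading behaviour $-|x|\pi\psi_y(0)$ is then extracted via $\int_0^\infty\widehat{\psi_y}=\pi\psi_y(0)$, with the remainder controlled by the decay $|\widehat{\psi_y}(s)|\lesssim(1+s)^{-3}$ coming from the $C^3$ regularity of $g$. Your approach instead splits the \emph{function} $g=g(0)\chi+(g-g(0)\chi)$: the second summand contributes a bounded term because the Taylor estimate $|g(k)-g(0)|\le\tfrac{k^2}{2}\|g''\|_{L^\infty([0,1])}$ (using $g'(0)=0$) cancels the $1/k^2$ singularity of the kernel, while for the first summand the expansion $\sinh(yk)/(k(\cosh(2k)-1))=y/(2k^2)+R(y,k)$ reduces the leading contribution to the explicit model integral $\int_0^\infty(1-\cos(|x|k))/k^2\,\mathrm{d}k=\pi|x|/2$. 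Your route is more elementary---no Fourier decay argument is needed---and in fact only uses $\|g''\|_{L^\infty([0,1])}$ rather than the full $C^3$ bound, so it proves a slightly stronger statement; the paper's Fourier approach, on the other hand, makes the mechanism behind the constant $\psi_y(0)=g(0)y/2$ more structurally transparent.
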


\begin{proof} 

For the estimate \eqref{eq:realthetaest}, note that for $x+ \ii y \in \overline{S_{(-2,2)}}$ we have
\begin{align*}
	\Re \theta_{g}(x+\ii y) &= -\frac{1}{\pi} \int_1^\infty \frac{g(k) (\cos(kx)\sinh(ky)-ky)}{k(\cosh(2k)-1)} \,\mathrm{d}k \\
	&\quad- \frac{1}{\pi} \int_0^1 \frac{g(k) (\cos(kx)\sinh(ky)-ky)}{k(\cosh(2k)-1)} \,\mathrm{d}k.
\end{align*}
Clearly, since $|y| \leq 2$, the first term is bounded by
\begin{align*}
	\frac{1}{\pi} \left| \int_1^\infty \frac{g(k) (\cos(kx)\sinh(ky)-ky)}{k(\cosh(2k)-1)} \,\mathrm{d}k \right| 
	\lesssim \|g\|_{L^1}.
\end{align*}
For the second term, we write 
\begin{align}
	&\frac{1}{\pi} \int_0^1 \frac{g(k)(\cos(kx)\sinh(ky)-ky)}{k(\cosh(2k)-1)} \,\mathrm{d}k \nonumber \\
	&=\frac{1}{\pi} \int_0^1 \frac{g(k) (\cos(kx)-1) \sinh(ky)}{k(\cosh(2k)-1)} \,\mathrm{d}k \nonumber 
	+ \frac{1}{\pi} \int_0^1 \frac{g(k) (\sinh(ky)-ky)}{k(\cosh(2k)-1)}\,\mathrm{d}k \nonumber,
\end{align}
and use that since
\begin{align*}
	\sup_{k\in(0,1)}\left| \frac{\sinh(k y)-ky}{k (\cosh(2k)-1)} \right| \leq \frac{|y|^3}{12},
\end{align*}
the latter expression is bounded by 
\begin{align*}
	\frac{1}{\pi}\left| \int_0^1 \frac{g(k)(\cos(kx)-1) \sinh(ky)}{k(\cosh(2k)-1)}  \,\mathrm{d}k \right| \lesssim \|g\|_{L^1}.
\end{align*}

To estimate the remaining term, we take an even, smooth cut-off function $\chi\in C_c^{\infty}(-1,1)$ with $|\chi| \leq 1$ and $\chi\equiv 1$ on $[-\frac{1}{2}, \frac{1}{2}]$, and split
\begin{align}
	&\frac{1}{\pi} \int_0^1 \frac{g(k) (\cos(kx)-1) \sinh(ky)}{k(\cosh(2k)-1)} \,\mathrm{d}k \nonumber \\
	&= \frac{1}{\pi} \int_0^1 \chi(k) \frac{g(k) \sinh(ky)}{k(\cosh(2k)-1)} (\cos(kx)-1) \,\mathrm{d}k \label{eq:splitting-1}\\
	&\quad + \frac{1}{\pi} \int_0^1 (1-\chi(k)) \frac{g(k) \sinh(ky)}{k(\cosh(2k)-1)} (\cos(kx)-1)\,\mathrm{d}k \label{eq:splitting-2}.
\end{align}
The term \eqref{eq:splitting-2} is again bounded by $|\eqref{eq:splitting-2}| \lesssim \|g\|_{L^{1}}$ (uniformly in $x$), while for \eqref{eq:splitting-1} we have to extract the behaviour in $x$ more carefully. To this end, we write 
\begin{align*}
	\cos(kx) - 1 = - \int_0^1 k x \sin(k x t)\,\mathrm{d}t = -\int_0^1 \int_0^1 k^2 x^2 t \cos(k x t s)\,\mathrm{d}s\,\mathrm{d}t 
\end{align*}
and obtain
\begin{align}
	& \int_0^1 \chi(k) \frac{g(k) \sinh(ky)}{k(\cosh(2k)-1)} (\cos(kx)-1)\,\mathrm{d}k \nonumber \\
	&= - x^2  \int_0^1 \int_0^1 \int_0^1 \chi(k) g(k)\frac{k \sinh(ky)}{\cosh(2k)-1} t \cos(kxts) \,\mathrm{d}s\,\mathrm{d}t \,\mathrm{d}k. \label{eq:splitting-3}
\end{align}
Note that the function $\psi_y: \RR \to \RR$, $\psi_y(k) \coloneqq \chi(k) g(k)\frac{k \sinh(ky)}{\cosh(2k)-1}$ for $k\in\RR$, is even, so that interchanging the order of integration (which is justified by the absolute integrability of the integrand), we may write 
\begin{align*}
	\eqref{eq:splitting-3} 
	&= - x^2 \int_0^1 \int_0^1 \widehat{\psi_y}(|x| t s) t \,\mathrm{d}s\,\mathrm{d}t 
	= - |x| \int_0^1 \int_0^{|x| t} \widehat{\psi_y}(s) \,\mathrm{d}s\,\mathrm{d}t \\
	&= -|x| \int_0^{\infty} \widehat{\psi_y}(s) \,\mathrm{d}s + |x| \int_0^1 \int_{|x| t}^{\infty} \widehat{\psi_y}(s) \,\mathrm{d}s \,\mathrm{d}t
	= -|x| \pi \psi_y(0) + |x| \int_0^1 \int_{|x| t}^{\infty} \widehat{\psi_y}(s) \,\mathrm{d}s \mathrm{d}t.
\end{align*}
At this place the regularity of $g$ comes into play: first, in order to evaluate
\begin{align*}
	\psi_y(0) = \chi(0) g(0) \lim_{k\to 0} \frac{k \sinh(ky)}{\cosh(2k)-1} 
	= g(0) \frac{y}{2},
\end{align*}
and secondly, to extract decay of $\widehat{\psi_y}$. Indeed, $\psi_y$ is a compactly supported $C^2$ function on $(-1,1)$ with bounded third derivative, hence 
\begin{align*}
	|\widehat{\psi_y}(s)| \lesssim \sum_{j=0}^3 \|\psi_y^{(j)}\|_{L^1} (1+s)^{-3}
\end{align*} 
for any $s\geq 0$. Using that $k\mapsto \frac{k \sinh(ky)}{(\cosh(2k)-1)}$ is uniformly bounded on $\RR$ for $|y|\leq 2$, it follows that 
\begin{align*}
	|\widehat{\psi_y}(s)| \lesssim \sum_{j=0}^3 \|g^{(j)}\|_{L^{\infty}([0,1])} (1+s)^{-3}, 
\end{align*}
in particular
\begin{align*}
	|x| \int_0^1 \int_{|x| t}^{\infty} \widehat{\psi_y}(s) \,\mathrm{d}s \mathrm{d}t 
	\lesssim \left( \sum_{j=0}^3 \|g^{(j)}\|_{L^{\infty}([0,1])} \right) |x| \int_0^1 \int_{|x| t}^{\infty} (1+s)^{-3} \,\mathrm{d}s \,\mathrm{d}t
	\lesssim \sum_{j=0}^3 \|g^{(j)}\|_{L^{\infty}([0,1])} .
\end{align*}

Hence, we have shown that 
\begin{align*}
	\left| \Re \theta_g(x+\ii y) - g(0) \frac{|x| y}{2} \right| \lesssim \|g\|_{L^1} +  \sum_{j=0}^3 \|g^{(j)}\|_{L^{\infty}([0,1])}. 
\end{align*}

In order to prove the relation \eqref{eq:phasefunctionid}, note that 
\begin{align*}
	\mathrm{Im}\, \theta_{g}(x-\ii y) 
	&= \frac{1}{\pi} \int_0^\infty \frac{g(k)(\sin(kx)\cosh(ky)-kx)}{k(\cosh(2k)-1)} \,\mathrm{d}k,
\end{align*} 
and
\begin{align*}
	\theta_g(x) 
	= \frac{\ii}{\pi} \int_0^\infty \frac{g(k) (\sin(kx)-kx)}{k(\cosh(2k)-1)}\,\mathrm{d}k.
\end{align*}
It follows that
\begin{align*}
	\ii \mathrm{Im}\,\theta_g(x-2\ii) - \theta_g(x) 
	&=  \frac{\ii}{\pi} \int_0^\infty \frac{g(k)}{k} \,\sin(kx)\,\mathrm{d}k
	= \frac{1}{2\pi} \widetilde{\eta}_g(\mathrm{e}^{\ii x (\cdot)}).
\end{align*}
\end{proof}

We are now in position to construct the optimizers of our variational problem.

\begin{theorem}[Analytic formula for optimizers]\label{thm:dualoptimizer} Up to the transformation $ h_{\alpha,\beta,\kappa}(z) = \beta h(z-\kappa) \mathrm{e}^{\ii \alpha z}$, $\alpha \in \R, \beta \in \C\setminus \{0\}, \kappa \in \R$, 
the unique optimizer $h\in \mathbb{H}^{\infty,2}(S_-)$ of \eqref{eq:3linesproblem} is given by the formula
\begin{align}
	h(z) = B_\gamma(z) \mathrm{e}^{\theta_\gamma(z)}, \quad \mbox{where } \quad B_\gamma(z) = \frac{z-\ii (2-\frac{2}{\gamma})}{z+\ii (2-\frac{2}{\gamma})} \label{eq:dualoptimizer}
\end{align}
is the Blaschke factor (on the upper half plane) with zero at $\ii(2-\frac{2}{\gamma})$, and $\theta_{\gamma}\coloneqq \frac{1}{2\pi} \eta_{g_\gamma}(\mathrm{e}^{\ii z(\cdot)})$ with $\eta_g$ given by Definition~\ref{def:pvdistribution}, and
\begin{align}
	g_\gamma(k) \coloneqq \pi \left(2 \mathrm{e}^{-(2-\frac{2}{\gamma})|k|} + \mathrm{e}^{-\frac{2}{\gamma}|k|} - \mathrm{e}^{-(4-\frac{2}{\gamma})|k|}\right). \label{eq:principalvaluefunction}
\end{align}
\end{theorem}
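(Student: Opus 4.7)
The plan is to verify that the candidate $h$ in \eqref{eq:dualoptimizer} satisfies the holomorphic Euler--Lagrange equation of Lemma~\ref{lem:holomorphicEulerLagrange}, via the primal--dual correspondence $m(w) \coloneqq \beta^{-1} h(w - 2\ii)$ for $w \in S_2$. With this change of variables, the Blaschke factor $B_\gamma$ (which has a zero at $\ii(2-\tfrac{2}{\gamma})$ and a pole at $-\ii(2-\tfrac{2}{\gamma})$) produces, after the shift, a simple pole of $m$ at $w = \ii\tfrac{2}{\gamma} \in S$ together with a zero of $m$ at $\ii(4-\tfrac{2}{\gamma}) \notin \overline{S_2}$. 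I would first fix $\beta \in \CC\setminus\{0\}$ uniquely (up to phase) by matching the residue of $m$ at $\ii\tfrac{2}{\gamma}$ to that of $p_\gamma$, thereby ensuring $m - p_\gamma$ is holomorphic on all of $S_2$.

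To verify that $m$ lies in the admissible class of Lemma~\ref{lem:holomorphicEulerLagrange}, i.e.\ $m - p_\gamma \in \mathbb{H}^{1,2}(S)$ with a bounded continuous extension to $\overline{S_2}$, I would use Lemma~\ref{lem:singularPVphasefunction}. Since $g_\gamma(0) = 2\pi$, the estimate $\Re \theta_\gamma(x+\ii y) = g_\gamma(0) |x| y / 2 + O(1) = \pi |x| y + O(1)$ shows that $|h(x+\ii y)|$ decays exponentially in $|x|$ whenever $y<0$; after the shift by $-2\ii$, this translates into exponential decay of $m_y$ on every horizontal line of $S_2$ and uniform boundedness of $m - p_\gamma$ on $\overline{S_2}$, so all hypotheses of Lemma~\ref{lem:holomorphicEulerLagrange} are satisfied.

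The heart of the proof is verifying the Euler--Lagrange equation $\mathrm{sign}\, m_0 = -c\, m_2$ for some $c>0$. The modulus identity is immediate: since $|B_\gamma| \equiv 1$ on $\RR$ and the integrand defining $\theta_\gamma(x)$ for $x \in \RR$ is purely imaginary, one has $\Re \theta_\gamma|_{\RR} \equiv 0$, hence $|h_0| \equiv 1$ on $\RR$ and $|m_2| \equiv |\beta|^{-1}$, fixing $c = |\beta|$. The phase identity is the main obstacle and reduces to the assertion
\begin{align*}
    \arg B_\gamma(x - 2\ii) - \arg B_\gamma(x) + \Im \theta_\gamma(x - 2\ii) - \Im \theta_\gamma(x) \equiv \pi \pmod{2\pi}.
\end{align*}
By \eqref{eq:phasefunctionid} together with $\theta_\gamma|_{\RR} \subset \ii\RR$, the $\theta_\gamma$-difference equals $\tfrac{1}{\pi}\int_0^\infty \tfrac{g_\gamma(k)}{k} \sin(xk)\,\mathrm{d}k$. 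On the other hand, the standard integral representations $\arg(x \pm \ii b) = \pm \tfrac{\pi}{2} \mp \int_0^\infty \tfrac{\sin(xk)}{k}\mathrm{e}^{-b k}\,\mathrm{d}k$ (for $b>0$) express the Blaschke difference as
\begin{align*}
    \pi + \int_0^\infty \frac{\sin(xk)}{k} \bigl[\mathrm{e}^{-(4-\tfrac{2}{\gamma}) k} - \mathrm{e}^{-\tfrac{2}{\gamma} k} - 2 \mathrm{e}^{-(2-\tfrac{2}{\gamma}) k} \bigr]\,\mathrm{d}k.
\end{align*}
The precise form of $g_\gamma$ in \eqref{eq:principalvaluefunction} has been engineered so that these two integrals cancel term by term against each other, leaving exactly $\pi$ as required.

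With the Euler--Lagrange equation verified, Lemma~\ref{lem:holomorphicEulerLagrange} immediately yields that $h$ is a maximizer of \eqref{eq:3linesproblem}. Uniqueness up to the symmetries $h_{\alpha,\beta,\kappa}$ then follows from the strict convexity of $\mathcal{E}_\gamma$ (Lemma~\ref{lem:L1primal}), which determines the primal minimizer uniquely up to the three-parameter symmetry group of Lemma~\ref{lem:scaling}, pushed forward to the dual side through the correspondences established in Section~\ref{sec:holomorphic}.
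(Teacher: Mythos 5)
Your proposal is correct and takes essentially the same route as the paper: verify the hypotheses of Lemma~\ref{lem:holomorphicEulerLagrange} for $m=\beta^{-1}h(\cdot-2\ii)$ using the estimate of Lemma~\ref{lem:singularPVphasefunction} with $g_\gamma(0)=2\pi$ and residue matching with $p_\gamma$, then check the phase identity via \eqref{eq:phasefunctionid}; your term-by-term cancellation based on $\arctan(x/b)=\int_0^\infty \frac{\sin(xk)}{k}\mathrm{e}^{-bk}\,\mathrm{d}k$ is exactly the paper's computation with the distributional Fourier transforms of $\arctan(1/x)$ and $\sgn$ (with $a(x)=\sgn(-x)$) written out directly. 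The only quibble is your final sentence: uniqueness of the maximizer of \eqref{eq:3linesproblem} is already contained in Theorem~\ref{thm:threelines} (coming from strict convexity of the dual functional in Lemma~\ref{lem:dualproblem}), rather than from strict convexity of the primal $\mathcal{E}_\gamma$ pushed to the dual side, but nothing in your verification depends on this point.
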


\begin{remark}\label{rem:properties-g}
	A simple calculation shows that for $\gamma>2$ the function $g_{\gamma}$ is two times continuously differentiable on $\RR$ with $g_{\gamma}(0) = 2\pi$. The third derivative of $g_{\gamma}$ is continuous away from the origin but has a jump discontinuity at zeros with $g_{\gamma}'''(0+) = 24\pi (2-\frac{2}{\gamma}) = -g_{\gamma}'''(0-)$. By the exponential decay in both directions it follows that $g_{\gamma}^{(j)} \in L^1(\RR) \cap L^{\infty}(\RR)$ for $j\leq 3$. 
\end{remark}

\begin{remark}\label{rem:blaschke}
	Note that for $\alpha>0$, the Blaschke factor $b_{\alpha}(z) \coloneqq \frac{z-\ii \alpha}{z+\ii\alpha}$ has the following properties: 
	\begin{enumerate}[label=(\roman*)]
		\item $b_{\alpha}$ has a simple pole at $-\ii \alpha$ in the lower half plane and a simple zero at $\ii \alpha$ in the upper half plane.
		\item $|b_{\alpha}(x)| = 1$ for $x\in\RR$, while $|b_{\alpha}(z)| \leq 1$ for $z\in \CC_+$; in particular, 
		\begin{align*}
			\sgn b_{\alpha}(x) = b_{\alpha}(x) = \exp\left( -2\ii \arctan \frac{\alpha}{x} \right) \quad \text{for } x\in\RR.
		\end{align*}
		\item For $x,y \in \RR$ such that $x+\ii y \neq \ii \alpha$, there holds
		\begin{align*}
			\sgn b_{\alpha}(x+\ii y) = \begin{cases}
				\exp\left[ \ii \left( \arctan\frac{y-\alpha}{x} - \arctan\frac{y+\alpha}{x}\right)\right], & x\neq 0,\\
				\sgn \frac{y-\alpha}{y+\alpha}, & x=0.
			\end{cases}
		\end{align*}
	\end{enumerate}
\end{remark}

\begin{proof} 
In view of Remark~\ref{rem:properties-g} and Lemma~\ref{lem:singularPVphasefunction}, the function $\theta_{\gamma}$ is holomorphic on $S_{(-2,2)}$ with a continuous extension to the boundary of the strip. Since the Blaschke factor $B_{\gamma}$ is meromorphic on $\CC$ with a single simple pole at $-\ii(2-\frac{2}{\gamma}) \in S_{(-2,2)}$, the function $h$ is meromorphic on  $S_{(-2,2)}$. 
Moreover, by \eqref{eq:realthetaest}, 
\begin{align}
	|\mathrm{e}^{\theta_{\gamma}(x+\ii y)}| = \mathrm{e}^{\re \theta_{\gamma}(x+\ii y)} \lesssim \mathrm{e}^{-\pi |x| |y|} \quad \text{for } y\in(-2,0). \label{eq:expdecay}
\end{align}
Consequently, 
\begin{align}
    h_y \in L^1 \cap L^\infty \quad \mbox{ for any $-2<y<0$, $y \neq -(2-\frac{2}{\gamma})$}, \label{eq:Lpnormsest}
\end{align}
and since $h$ has exactly one simple pole at $-\ii(2-\frac{2}{\gamma})$, there exists $\beta>0$ such that\footnote{In fact, $\beta = \frac{1}{4\pi (2-\frac{2}{\gamma})} \mathrm{e}^{-\theta_{\gamma}(-\ii(2-\frac{2}{\gamma}))}$, which is positive because $\gamma>2$ and $\theta_{\gamma}(-\ii(2-\frac{2}{\gamma})) =\frac{1}{2\pi} \eta_{g_{\gamma}}(\mathrm{e}^{(2-\frac{2}{\gamma})(\cdot)}) \in \RR$.}
\begin{align}\label{eq:betah-bound-1}
	|\beta h(z) - p_\gamma(z+2\ii)| \lesssim 1 \quad \mbox{for $z$ in a neighborhood of } -\ii(2-\tfrac{2}{\gamma}).
\end{align}
From~\eqref{eq:Lpnormsest} and \eqref{eq:betah-bound-1}, we see that the function $m (z) = \beta h(z-2\ii)$ is meromorphic in $S_2$ with $m-p_\gamma \in \mathbb{H}^{1,2}(S)$. Moreover, $m$ extends continuously to the boundary $\partial S_2$, and therefore satisfies assumption~\ref{it:mbound} from Lemma~\ref{lem:holomorphicEulerLagrange}. Consequently, it remains to show that $h$ solves
\begin{align}
	\signum h_{-2} = -h_0. \label{eq:elh}
\end{align}
To this end, note that by Remark~\ref{rem:blaschke},
 \begin{align*}
	&\sgn B_{\gamma}(x-2\ii) = \exp\left(-\ii \arctan\left(\frac{4-\frac{2}{\gamma}}{x}\right) + \ii \arctan\left(\frac{\frac{2}{\gamma}}{x}\right)\right),
 \intertext{and}
     &B_{\gamma}(x) = \exp\left(-2\ii \arctan\left(\frac{2-\frac{2}{\gamma}}{x}\right)\right),
\end{align*}
hence 
\begin{align*}
	\frac{\signum B_\gamma(x-2i)}{B_\gamma(x)} = \mathrm{e}^{\ii f_\gamma(x)},
	\quad \text{where} \quad
	f_\gamma(x) = 2 \arctan\biggr(\frac{2-\frac{2}{\gamma}}{x}\biggr) + \arctan\biggr(\frac{\frac{2}{\gamma}}{x}\biggr) - \arctan\biggr(\frac{4-\frac{2}{\gamma}}{x}\biggr).
\end{align*}
It follows that 
\begin{align*}
	\frac{\signum h_{-2}(x)}{h_0(x)} = \mathrm{e}^{\ii f_{\gamma}(x)} \,\mathrm{e}^{\ii \im \theta_{\gamma}(x+2\ii) - \theta_{\gamma}(x)}.
\end{align*}
Consequently, $h$ satisfies \eqref{eq:elh} if and only if
\begin{align}
	\ii \mathrm{Im}\, \theta_{\gamma}(x-2\ii) - \theta_{\gamma}(x) = -\ii f_\gamma(x) - \ii \pi a(x), \label{eq:thetaequality}
\end{align}
for some $a : \R \rightarrow 2 \Z +1$. 
Since the distributional Fourier transform of $f(x) = \arctan\frac{1}{x}$ is given by 
\begin{align*}
	\widehat{f}(k) =  -\pi \ii \frac{1-\mathrm{e}^{-|k|}}{k} \quad \text{for } k\in\RR,
\end{align*}
we have that 
\begin{align*}
	\widehat{f_{\gamma}}(k) = \ii \frac{g_{\gamma}(k)-2\pi}{k}.
\end{align*}
In particular, taking $a(x) = \signum(-x)$ with distributional Fourier transform $\widehat{a}(k) = 2 \ii \,\mathrm{p.v.} \frac{1}{k}$, 
we have 
\begin{align*}
	\frac{1}{2\pi} \widetilde{\eta}_{g_{\gamma}}(\mathrm{e}^{\ii x (\cdot)}) 
	&= \frac{1}{2\pi} \lim_{\epsilon\downarrow0}\int_{|k|\geq \epsilon} \frac{g_{\gamma}(k)}{k} \,\mathrm{e}^{\ii k x}\,\mathrm{d}k
	= \frac{\ii}{2\pi} \lim_{\epsilon\downarrow0}\int_{|k|\geq \epsilon} \frac{g_{\gamma}(k)}{k} \,\sin(kx)\,\mathrm{d}k  \\
	&=\frac{\ii}{2\pi} \lim_{\epsilon\downarrow0}\int_{|k|\geq \epsilon} \frac{g_{\gamma}(k)-2\pi}{k} \,\sin(kx)\,\mathrm{d}k + \frac{\ii}{2\pi} \lim_{\epsilon\downarrow0}\int_{|k|\geq \epsilon} \frac{2\pi}{k} \,\sin(kx)\,\mathrm{d}k \\
	&= \frac{1}{2\pi} \int_{\RR} \widehat{f_{\gamma}}(k)\,\sin(kx)\,\mathrm{d}k + \frac{1}{2\pi} \widehat{\pi a}(\sin(k (\cdot))) 
	= -\ii f_{\gamma}(x) - \ii \pi a(x).
\end{align*}
It then follows with \eqref{eq:phasefunctionid} that \eqref{eq:thetaequality} holds.
\end{proof}

\section{Further consequences for the CLR and LT bound} \label{sec:asymptotics}

\subsection{Asymptotics of LT and CLR upper bound}
In this section, we prove the asymptotic upper bounds for the LT and CLR constant presented respectively in Corollary~\ref{cor:improvedLTbound} and Table~\ref{tab:CLRbound}, which are based on the following lemma.
\begin{lemma}[Asymptotics for three-line problem] Let $\gamma >2$ and $h$ be the optimizer from Theorem~\ref{thm:optimizer}. Then we have
\begin{align}
    &\lim_{\gamma\downarrow 2} \frac{4\pi (\gamma-2)^{\gamma/2}}{\gamma^{\gamma/2}} \frac{\norm{h_{-2/\gamma}}_{L^\infty(\R)}^\gamma}{\norm{h_0}_{L^\infty(\R)}^{\gamma-2} \norm{h_{-1}}_{L^2(\R)}^2} = 1\label{eq:lowdlimit} \\
    &\lim_{\gamma \ra \infty}  4\pi \frac{\norm{h_{-2/\gamma}}_{L^\infty(\R)}^\gamma}{\norm{h_0}_{L^\infty(\R)}^{\gamma-2} \norm{h_{-1}}_{L^2(\R)}^2}  = 4\pi \mathrm{e}^2\biggr(\int_\R \frac{x^2+9}{x^2+1} e^{2\re \theta_\infty(x-\ii)} \mathrm{d}x\biggr)^{-1} \approx 5.342823,\label{eq:highdlimit}
\end{align}
where
\begin{align*}
    \re \theta_\infty(x-\ii) =  \int_0^\infty \frac{2 e^{-2k} - e^{-4k} + 1}{k (\cosh(2k) - 1)} \bigr(\cos(xk)\sinh(k) - k) \mathrm{d} k.
\end{align*}
\end{lemma}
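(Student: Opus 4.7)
My plan is to compute the three quantities $\|h_0\|_{L^\infty(\R)}$, $\|h_{-2/\gamma}\|_{L^\infty(\R)}$, and $\|h_{-1}\|_{L^2(\R)}$ directly from the explicit product formula $h(z)=B_\gamma(z)\mathrm{e}^{\theta_\gamma(z)}$, exploiting that $|B_\gamma|$ has an elementary closed form and $\Re\theta_\gamma(x+\ii y)$ admits a Fourier-like integral representation with kernel $g_\gamma(k)/(k(\cosh(2k)-1))$. The common first step for both limits is that $|B_\gamma(x)|=1$ on $\R$ (a standard property of Blaschke factors, cf.\ Remark~\ref{rem:blaschke}) and the integrand defining $\Re\theta_\gamma(x)$ is manifestly odd in $k$; hence $\Re\theta_\gamma(x)=0$ and $\|h_0\|_{L^\infty}=1$ uniformly in $\gamma$.

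For the limit $\gamma\to\infty$, I would first show that $\|h_{-2/\gamma}\|_{L^\infty}$ is attained at $x=0$ by separately bounding the two factors at $x=0$: on the one hand,
\[
|B_\gamma(x-2\ii/\gamma)|^2=\frac{x^2+4}{x^2+(2-4/\gamma)^2}
\]
is decreasing in $|x|$ whenever $\gamma>1$; on the other hand, the non-negativity of $g_\gamma$ on $(0,\infty)$ together with $\cos(xk)\leq 1$ yields $\Re\theta_\gamma(x-2\ii/\gamma)\leq \Re\theta_\gamma(-2\ii/\gamma)$. Evaluating at $x=0$ and using $\sinh(2k/\gamma)-2k/\gamma=\mathcal{O}(k^3/\gamma^3)$ gives
\[
\gamma\log|B_\gamma(-2\ii/\gamma)|=-\gamma\log(1-2/\gamma)\to 2,\qquad \gamma\Re\theta_\gamma(-2\ii/\gamma)\to 0,
\]
so $\|h_{-2/\gamma}\|_{L^\infty}^\gamma\to\mathrm{e}^2$. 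For the $L^2$ norm, pointwise convergence of
\[
|h(x-\ii)|^2=\frac{x^2+(3-2/\gamma)^2}{x^2+(1-2/\gamma)^2}\mathrm{e}^{2\Re\theta_\gamma(x-\ii)}
\]
to the claimed integrand follows from the locally uniform convergence $g_\gamma\to g_\infty$; passage to the limit in the integral is by dominated convergence, with the envelope provided by estimate~\eqref{eq:realthetaest} applied at $y=-1$. Since $g_\gamma(0)=2\pi$ is independent of $\gamma$, this yields $\Re\theta_\gamma(x-\ii)\leq -\pi|x|+C$ with a constant $C$ uniform in $\gamma\in[\gamma_0,\infty)$ for any fixed $\gamma_0>2$.

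For the limit $\gamma\downarrow 2$, rather than expanding each factor separately, I would rearrange the three-lines identity from Theorem~\ref{thm:threelines} as
\[
4\pi\frac{\|h_{-2/\gamma}\|_{L^\infty}^\gamma}{\|h_0\|_{L^\infty}^{\gamma-2}\|h_{-1}\|_{L^2}^2} = \frac{1}{4}\frac{\gamma^{\gamma+1}}{(\gamma-2)^{\gamma-2}}M_\gamma
\]
and insert the asymptotic~\eqref{eq:bound-asymptotic} from \cite[Proposition~1.4]{HKRV23}. Multiplying by $(\gamma-2)^{\gamma/2}/\gamma^{\gamma/2}$ and collecting powers leaves the leading contribution $2/\bigl(\gamma^{\gamma/2}(\gamma-2)^{\gamma/2-1}\bigr)$ plus an error of order $(\gamma-2)^{\gamma/2}/\gamma^{\gamma/2}$ coming from the $\mathcal{O}(1)$ term. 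Since $\gamma^{\gamma/2}\to 2$, $(\gamma-2)^{\gamma/2-1}=\exp\bigl(\tfrac{\gamma-2}{2}\log(\gamma-2)\bigr)\to 1$, and the error vanishes, the limit evaluates to $1$.

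The main technical point is ensuring uniformity in $\gamma$ for the dominated convergence argument in $\|h_{-1}\|_{L^2}^2$: the implicit constants in \eqref{eq:realthetaest} depend on $\|g_\gamma\|_{L^1}$ and on $\|g_\gamma^{(j)}\|_{L^\infty([0,1])}$ for $j\leq 3$, and a direct inspection of the explicit expression~\eqref{eq:principalvaluefunction} shows these are uniformly bounded for $\gamma\geq\gamma_0>2$. A secondary delicate point is the inequality pinning down $\|h_{-2/\gamma}\|_{L^\infty}=|h(-2\ii/\gamma)|$, without which a direct asymptotic expansion of the $L^\infty$ norm would be considerably more involved.
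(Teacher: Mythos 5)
Your overall strategy is sound, and it diverges from the paper in two interesting places. For the limit $\gamma\downarrow 2$ you bypass the explicit formula entirely: you rewrite the quantity via Theorem~\ref{thm:threelines} as $\frac14\gamma^{\gamma+1}(\gamma-2)^{2-\gamma}M_\gamma$ and feed in the asymptotic \eqref{eq:bound-asymptotic}, whose $\mathcal{O}(1)$ error is indeed just enough after multiplication by $(1-2/\gamma)^{\gamma/2}\sim\frac{\gamma-2}{2}$. This is correct and shorter, but it outsources the heart of the matter to \cite[Proposition 1.4]{HKRV23} (the paper only quotes \eqref{eq:bound-asymptotic} in a remark, with proof deferred to that article), whereas the paper's computation is self-contained, working directly from \eqref{eq:mod-blaschke} and \eqref{eq:thetagamma} after the rescaling $x\mapsto(1-\frac2\gamma)x$, and it exhibits the cancellation of the constant $\mathrm{e}^{2\re\theta_2(-\ii)}$ between $\norm{h_{-2/\gamma}}_{L^\infty}^{\gamma}$ and $(1-\frac2\gamma)\norm{h_{-1}}_{L^2}^2$. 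For $\gamma\to\infty$ your identification $\norm{h_{-2/\gamma}}_{L^\infty}=|h(-\frac{2\ii}{\gamma})|$ is obtained by a direct pointwise maximization (monotonicity of the Blaschke modulus in $|x|$ plus $g_\gamma\geq 0$ and $\cos(xk)\leq 1$), which is more elementary than the paper's argument via the dual optimality of the rescaled boundary value in \eqref{eq:scaleinvariantdual}; this step of yours is valid for every $\gamma>2$.

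There are, however, two technical points to repair in the $\gamma\to\infty$ part. First, and more seriously, your justification of the uniform dominating function for $\norm{h_{-1}}_{L^2}^2$ rests on the claim that $\norm{g_\gamma}_{L^1}$ is uniformly bounded for $\gamma\geq\gamma_0$; this is false, since the middle term $\pi\mathrm{e}^{-\frac2\gamma|k|}$ in \eqref{eq:principalvaluefunction} contributes $\pi\gamma$ to the $L^1$ norm, so the constant in \eqref{eq:realthetaest} as stated blows up and does not directly give $\re\theta_\gamma(x-\ii)\leq-\pi|x|+C$ with $C$ independent of $\gamma$. The envelope itself is still true, but you must go back into the proof of Lemma~\ref{lem:singularPVphasefunction}: every occurrence of $\norm{g}_{L^1}$ there can be replaced by $\sup_k|g_\gamma(k)|\leq 3\pi$ (for the region $k\geq1$, using the exponential decay of the kernel $\frac{1}{k(\cosh(2k)-1)}$) or by $\norm{g_\gamma}_{L^\infty([0,1])}$ (for $k\leq1$), and these, together with $g_\gamma(0)=2\pi$ and $\norm{g_\gamma^{(j)}}_{L^\infty([0,1])}$, $j\leq3$, are uniformly bounded in $\gamma$. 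Second, the bound $\sinh(\frac{2k}{\gamma})-\frac{2k}{\gamma}=\mathcal{O}(k^3/\gamma^3)$ is only valid for $k\lesssim\gamma$; either split the integral at $k=\frac\gamma2$ as the paper does, or use the global estimate $\sinh u-u\leq\frac{u^3}{6}\mathrm{e}^{u}$, which still yields $\gamma\,\re\theta_\gamma(-\tfrac{2\ii}{\gamma})=\mathcal{O}(\gamma^{-2})$ since $\mathrm{e}^{2k/\gamma}$ is harmless against $\cosh(2k)-1$. With these fixes your argument for \eqref{eq:highdlimit} matches the paper's, and your argument for \eqref{eq:lowdlimit} stands as a legitimate alternative modulo the cited input from \cite{HKRV23}.
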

\begin{proof}
    We start with a few preliminary observations. Note that for $x+\ii y \in \CC\setminus\{-\ii(2-\frac{2}{\gamma})\}$ we have 
    \begin{align}\label{eq:mod-blaschke}
        |B_{\gamma}(x+\ii y)|^2 = \frac{x^2 + \left(2-\frac{2}{\gamma}-y\right)^2}{x^2 + \left(2-\frac{2}{\gamma}+y\right)^2},
    \end{align}
    and $|\mathrm{e}^{\theta_{\gamma}(x+\ii y)}| = \mathrm{e}^{\Re \theta_{\gamma}(x+\ii y)}$ with 
    \begin{align}
        \re \theta_\gamma(x+\ii y) = - \frac{1}{\pi} \int_0^\infty \frac{g_{\gamma}(k)}{k (\cosh(2k) -1)} \bigr(\cos(xk) \sinh(yk)-yk\bigr)\mathrm{d}k. \label{eq:thetagamma}
    \end{align}
    It follows that $\|h_0\|_{L^{\infty}} = 1$, and 
    \begin{align*}
        \|h_{-1}\|_{L^2(\RR)}^2 
        &= \int_{\RR} \frac{x^2 + (3-\frac{2}{\gamma})^2}{x^2 + (1-\frac{2}{\gamma})^2} \mathrm{e}^{2\re \theta_{\gamma}(x-\ii)}\,\mathrm{d}x \\
        &= \frac{1}{1-\frac{2}{\gamma}} \int_{\RR} \frac{(1-\frac{2}{\gamma})^2 x^2 + (3-\frac{2}{\gamma})^2}{x^2 + 1} \mathrm{e}^{2\re \theta_{\gamma}((1-\frac{2}{\gamma})x - \ii)}\,\mathrm{d}x.
    \end{align*}
    In particular, by dominated convergence (recall \eqref{eq:expdecay}), we obtain 
    \begin{align}\label{eq:h-1-gamma-2}
        \left(1-\frac{2}{\gamma}\right) \|h_{-1}\|_{L^2(\RR)}^2 
        \stackrel{\gamma\downarrow 2}{\longrightarrow} 4 \mathrm{e}^{2 \re \theta_2(-\ii)} \int_{\RR} \frac{\mathrm{d}x}{x^2 + 1} 
        = 4\pi \mathrm{e}^{2 \re \theta_2(-\ii)}.
    \end{align}
    Moreover,
    the re-scaled boundary value $g(x) = \tfrac{2}{\gamma} h_0(\tfrac{2}{\gamma} x)$ is an optimizer of \eqref{eq:scaleinvariantdual} by construction (see the proof of Lemma~\ref{lem:holomorphicEulerLagrange}), in particular, $\norm{h_{-\frac{2}{\gamma}}}_{L^\infty} = \tfrac{\gamma}{2} \re \inner{\exp, \widehat{g}} = h(-\tfrac{2}{\gamma} \ii)$. Hence,
    \begin{align}
        \|h_{-\frac{2}{\gamma}}\|_{L^{\infty}} 
        &= |h(-\ii\tfrac{2}{\gamma})| 
        = \frac{1}{1-\frac{2}{\gamma}} \mathrm{e}^{\Re \theta_{\gamma}(-\ii\frac{2}{\gamma})},
    \end{align}
    so that 
    \begin{align}\label{eq:h-gamma-2}
         \left(1-\frac{2}{\gamma}\right)^{\gamma} \|h_{-\frac{2}{\gamma}}\|_{L^{\infty}}^{\gamma} 
         = \mathrm{e}^{\gamma \Re \theta_{\gamma}(-\ii\frac{2}{\gamma})}
         \stackrel{\gamma\downarrow 2}{\longrightarrow} \mathrm{e}^{2 \re \theta_2(-\ii)}.
    \end{align}
    Combining \eqref{eq:h-1-gamma-2} and \eqref{eq:h-gamma-2}, we have
    \begin{align*}
        \lim_{\gamma\downarrow 2} \frac{4\pi (\gamma-2)^{\gamma/2}}{\gamma^{\gamma/2}} \frac{\norm{h_{-2/\gamma}}_{L^\infty(\R)}^\gamma}{\norm{h_0}_{L^\infty(\R)}^{\gamma-2} \norm{h_{-1}}_{L^2(\R)}^2} 
        = \lim_{\gamma\downarrow 2} 4\pi \left(1-\frac{2}{\gamma}\right)^{\frac{\gamma}{2}-1} \frac{(1-\frac{2}{\gamma})^{\gamma}\|h_{-\frac{2}{\gamma}}\|_{L^{\infty}}^{\gamma}}{(1-\frac{2}{\gamma})\|h_{-1}\|_{L^2}^2}
        = 1,
    \end{align*}
    since $\lim_{\gamma\downarrow 2} \left(1-\frac{2}{\gamma}\right)^{\frac{\gamma}{2}-1} = 1$. 

    For the limit $\gamma \to \infty$, first observe that 
    \begin{align*}
        \re \theta_{\gamma}(-\ii \tfrac{2}{\gamma}) = \frac{1}{\pi} \int_{0}^{\infty} \frac{g_{\gamma}(k)}{k (\cosh(2k) -1)} \bigr(\sinh(\tfrac{2}{\gamma}k)-\tfrac{2}{\gamma}k\bigr)\mathrm{d}k,
    \end{align*}
    and since $\sinh(\tfrac{2}{\gamma}k) -\tfrac{2}{\gamma}k \lesssim (\tfrac{2}{\gamma}k)^3$ for $k \leq \tfrac{\gamma}{2}$ and $\sinh(\tfrac{2}{\gamma}k) -\tfrac{2}{\gamma}k \lesssim e^{\frac{2}{\gamma}k}$ for $k\geq \tfrac{\gamma}{2}$, we may bound 
    \begin{align*}
        \int_0^{\frac{\gamma}{2}} \frac{g_{\gamma}(k)}{k (\cosh(2k) -1)} \bigr(\sinh(\tfrac{2}{\gamma}k)-\tfrac{2}{\gamma}k\bigr)\mathrm{d}k 
        \lesssim \left(\frac{2}{\gamma}\right)^3 \int_0^{\infty} g_{\gamma}(k)\,\mathrm{d}k 
        = \left(\frac{2}{\gamma}\right)^3 \frac{\pi \gamma^3}{2\gamma^2-3\gamma+1},
    \end{align*}
    and, since $k\mapsto \mathrm{e}^{\frac{2}{\gamma}k}{k(\cosh(2k)-1)}$ is decreasing on $[\frac{\gamma}{2}, \infty)$, we have
    \begin{align*}
        \int_0^{\frac{\gamma}{2}} \frac{g_{\gamma}(k)}{k (\cosh(2k) -1)} \bigr(\sinh(\tfrac{2}{\gamma}k)-\tfrac{2}{\gamma}k\bigr)\mathrm{d}k 
        &\lesssim \frac{2}{\gamma} \frac{1}{\cosh(\gamma)-1} \int_{0}^{\infty} g_{\gamma}(k)\,\mathrm{d}k \\
        &= \frac{2}{\gamma} \frac{1}{\cosh(\gamma)-1} \frac{\pi \gamma^3}{2\gamma^2-3\gamma+1}.
    \end{align*}
    In particular, $\gamma \Re \theta_{\gamma}(-\ii\frac{2}{\gamma}) \to 0$ as $\gamma \to \infty$. Hence, using that $\left(1-\frac{2}{\gamma}\right)^{\gamma} \to \mathrm{e}^2$ as $\gamma \to \infty$, it follows that
    \begin{align}
        \lim_{\gamma\to \infty} \|h_{-\frac{2}{\gamma}}\|_{L^{\infty}(\RR)}^{\gamma} 
        = \lim_{\gamma\to \infty} \left(1-\frac{2}{\gamma}\right)^{-\gamma} \mathrm{e}^{\gamma \re \theta_{\gamma}(-\ii\frac{2}{\gamma})} 
        = \mathrm{e}^2.  \label{eq:limit3}
    \end{align}
    On the other hand, by dominated convergence we have
    \begin{align*}
        \re \theta_{\gamma}(x-\ii) 
        &= \frac{1}{\pi} \int_{0}^{\infty} \frac{g_{\gamma}(k)}{k (\cosh(2k)-1)} (\cos(x k) \sinh(k) - k)\,\mathrm{d}k  \\
        &\stackrel{\gamma\to\infty}{\longrightarrow} \frac{1}{\pi}\int_{0}^{\infty} \frac{g_{\infty}(k)}{k (\cosh(2k)-1)} (\cos(x k) \sinh(k) - k)\,\mathrm{d}k
        = \re \theta_{\infty}(x-\ii),
    \end{align*}
    with $g_{\infty}(k) = \pi \left( 2 \mathrm{e}^{-2 |k|} + 1 - \mathrm{e}^{-4|k|}\right)$. Therefore,
    \begin{align*}
        \norm{h_{-1}}_{L^2}^2 
        = \int_{\RR} \frac{x^2 + (3-\frac{2}{\gamma})^2}{x^2 + (1-\frac{2}{\gamma})^2} \mathrm{e}^{2\re \theta_{\gamma}(x-\ii)}\,\mathrm{d}x
        \stackrel{\gamma\to\infty}{\longrightarrow} \int_\R \frac{x^2+9}{x^2+1} \mathrm{e}^{2 \re \theta_\infty(x-\ii)} \mathrm{d} x.
    \end{align*}
    Combining this with \eqref{eq:limit3} then yields \eqref{eq:highdlimit}.
    \end{proof}
    
    The proof of the asymptotic bounds displayed in Table~\ref{tab:CLRbound} and equation~\eqref{eq:marginalimprovedLT} are now immediate consequences of Theorem~\ref{thm:threelines}, Theorem~\ref{thm:LTbound}, and equation~\eqref{eq:highdlimit} in the case of the CLR bound, and Theorem~\ref{thm:threelines}, Corollary~\ref{cor:improvedLTbound}, and equations~\eqref{eq:lowdlimit} and \eqref{eq:highdlimit} in the case of the LT bound.

\addtocontents{toc}{\protect\setcounter{tocdepth}{-1}}
\section*{Acknowledgements}
The authors thank Gero Friesecke and Dirk Hundertmark for helpful discussions. 

This work was initiated when TR was a visiting research fellow at TU München, funded by 
\emph{Deutsche Forschungsgemeinschaft (DFG – German Research Foundation) –- Project-ID 195170736 –- TRR109}.
TR would like to thank the Department of Mathematics at TUM for the kind hospitality.

TC acknowledges funding by the 
\emph{Deutsche Forschungsgemeinschaft (DFG – German Research Foundation) –- Project number 442047500} through the Collaborative Research Center "Sparsity and Singular Structures" (SFB 1481).
\addtocontents{toc}{\protect\setcounter{tocdepth}{2}}
\appendix
\section{Connection with maximal Fourier multipliers} \label{sec:app-Fourier}
In this section we show how the variational characterization of the Fourier transform of integrable functions from Lemma~\ref{lem:fourier-L1} can be used to reformulate the maximal Fourier multiplier bound derived in \cite[Theorem 2.1, Theorem 4.2]{HKRV23}.

To this end, let us first recall the setup of \cite{HKRV23}. Let $f, g : \R^d \rightarrow \R_+$ be Lebesgue measurable functions on $\R^d$, $d\geq 1$, $m : \R_+ \rightarrow \R_+$ be a continuous bounded function, and $B_{f,g,m}$ the operator whose integral kernel is given by
\begin{align}
    B_{f,g,m}(x,\eta) = \frac{1}{(2\pi)^{d/2}} \mathrm{e}^{\ii x \cdot \eta} m(f(x) g(\eta)). 
\end{align}
We then define the maximal operator associated to $B_{f,g,m}$ as\footnote{For the precise definition of the maximal operator, in particular measurability issues, we refer to \cite{HKRV23}.}
\begin{align}
    \mathcal{B}_{g,m}(\phi)(x) \coloneqq \sup_{f \geq 0} |B_{f,g,m}(\phi)(x)|, \quad x \in \R^d, \quad \mbox{for $\phi \in \mathcal{S}(\R^d)$.} \label{eq:maximalop}
\end{align}
One of the key results in \cite{HKRV23} concerns the boundedness of the maximal operator $\mathcal{B}_{g,m}$ on $L^2(\RR^d)$. 
\begin{theorem}[Theorem 2.1 in \cite{HKRV23}] \label{thm:maxopbound} Assume that
\begin{align*}
    m(t) = \int_{\R_+} m_1\left(\frac{t}{s}\right) m_2(s) \frac{\mathrm{d}s}{s}
\end{align*}
for some $m_1, m _2 \in L^2(\R_+, \tfrac{\mathrm{d} s}{s})$. Then the maximal operator $\mathcal{B}_{g,m}$ defined in \eqref{eq:maximalop} extends to a bounded (sub-linear) operator on $L^2(\R^d)$ with operator norm bounded by
\begin{align}
     \norm{ \mathcal{B}_{g,m}}_{L^2(\R^d)\to L^2(\R^d)} \leq \norm{m_1}_{L^2(\R_+,\frac{\mathrm{d}s}{s})} \norm{m_2}_{L^2(\R_+, \frac{\mathrm{d} s}{s})}.\label{eq:maximalopbound}
\end{align}
\end{theorem}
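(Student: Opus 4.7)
The plan is to reduce $\mathcal{B}_{g,m}$ to a superposition of elementary \emph{character} operators, each of which has modulus--one multiplier, and then exploit the fact that for such characters the absolute value of the kernel is independent of $f$, making the supremum trivial. First I would pass to the additive group via the logarithmic change of variables $s = \mathrm{e}^\sigma$, setting $M_i(\sigma) := m_i(\mathrm{e}^\sigma)$ and $M(\tau) := m(\mathrm{e}^\tau)$. Then $M = M_1 \ast M_2$ in the standard (additive) convolution on $\R$, and $\|M_i\|_{L^2(\R)} = \|m_i\|_{L^2(\R_+, \mathrm{d}s/s)}$. Since $\widehat{M_i} \in L^2(\R)$ by Plancherel, Cauchy--Schwarz gives $\widehat{M_1}\widehat{M_2} = \widehat{M} \in L^1(\R)$, so Fourier inversion yields
\begin{align*}
	m(t) \;=\; \frac{1}{2\pi} \int_\R \widehat{M_1}(k)\,\widehat{M_2}(k)\, t^{\ii k}\,\mathrm{d}k,
\end{align*}
expressing $m$ as a superposition of the multiplicative characters $t\mapsto t^{\ii k}$ of $\R_+$.

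Substituting this representation into the kernel of $B_{f,g,m}$ and interchanging the order of integration (legitimate for Schwartz $\phi$ since $\widehat{M_1}\widehat{M_2}\in L^1$), I would obtain
\begin{align*}
	B_{f,g,m}(\phi)(x) \;=\; \frac{1}{2\pi} \int_\R \widehat{M_1}(k)\,\widehat{M_2}(k)\, B_{f,g,m_k}(\phi)(x)\,\mathrm{d}k,
\end{align*}
with $m_k(t):= t^{\ii k}$. The crucial observation is that for this character multiplier the kernel factors as
\begin{align*}
    B_{f,g,m_k}(x,\eta) \;=\; \frac{1}{(2\pi)^{d/2}}\, \mathrm{e}^{\ii x\cdot\eta}\, f(x)^{\ii k}\, g(\eta)^{\ii k},
\end{align*}
so that $B_{f,g,m_k}$ is the composition of pointwise multiplication by the unimodular phase $f(\cdot)^{\ii k}$ with an $L^2$-isometry depending only on $g$ and $k$ (namely, Fourier-side multiplication by $g(\eta)^{\ii k}$, which is unitary since $|g^{\ii k}| = 1$). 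Hence $|B_{f,g,m_k}(\phi)(x)|$ is \emph{independent of $f$}, and $\|B_{f,g,m_k}(\phi)\|_{L^2} = \|\phi\|_{L^2}$ uniformly in $k, f, g$.

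Combining the two points, the maximal operator is controlled pointwise by an $f$-independent integral, so Minkowski's inequality in $L^2_x$ together with the isometry property yields
\begin{align*}
    \|\mathcal{B}_{g,m}(\phi)\|_{L^2} \;\leq\; \frac{\|\phi\|_{L^2}}{2\pi} \bignorm{\widehat{M_1}\widehat{M_2}}_{L^1} \;\leq\; \|M_1\|_{L^2}\|M_2\|_{L^2}\|\phi\|_{L^2},
\end{align*}
after one last Cauchy--Schwarz and Plancherel. Reverting to multiplicative notation gives the claimed bound. The main technical obstacles I anticipate are the measurability of $x\mapsto \sup_{f\geq 0}|B_{f,g,m}(\phi)(x)|$ required to make sense of the maximal operator (and to use Minkowski inside a supremum), and the rigorous exchange of integrations in a setting where $B$ is only defined via its kernel; these are discussed in \cite{HKRV23}. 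A pleasant byproduct is that the concluding Cauchy--Schwarz step is \emph{precisely} the sharp inequality from Lemma~\ref{lem:fourier-L1}: the bound can therefore be stated intrinsically as $(2\pi)^{-1}\|\widehat{M}\|_{L^1}$, which is optimal over all factorizations $m = m_1 \ast m_2$.
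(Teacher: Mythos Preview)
Your argument is correct and takes a genuinely different route from the proof the paper refers to (which is not reproduced here but summarized as ``a clever combination of Plancherel's and Fubini--Tonelli's theorem, and the scaling invariance of the Haar measure''). That original argument works directly with the convolution: one rewrites $m(f(x)g(\eta)) = \int_{\R_+} m_1(f(x)/s)\, m_2(s\,g(\eta))\,\tfrac{\mathrm{d}s}{s}$ via the substitution $s\mapsto s\,g(\eta)$, separates the $f$- and $g$-dependence by Cauchy--Schwarz in the $s$-variable (the first factor $\int |m_1(f(x)/s)|^2\,\tfrac{\mathrm{d}s}{s}=\|m_1\|^2$ being independent of $x$ and $f$ by dilation invariance), and closes with Fubini--Tonelli and Plancherel in $x$. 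You instead diagonalize the multiplicative convolution by Fourier transform, reducing to the elementary character multipliers $t\mapsto t^{\ii k}$ for which the $f$-dependence is a pure phase, and then control the superposition by Minkowski. The two proofs are in a sense dual: the \cite{HKRV23} argument applies Cauchy--Schwarz in the physical $s$-variable, yours in the spectral $k$-variable. A concrete payoff of your approach is that the intermediate bound $\|\mathcal{B}_{g,m}\|\le (2\pi)^{-1}\|\widehat{M}\|_{L^1}$ drops out before the final Cauchy--Schwarz, giving Theorem~\ref{thm:maxopboundimproved} directly; the paper instead deduces that improvement by treating Theorem~\ref{thm:maxopbound} as a black box and optimizing the factorization a posteriori via Lemma~\ref{lem:fourier-L1}.
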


The proof of Theorem~\ref{thm:maxopbound} presented in \cite{HKRV23} relies on a clever combination of Plancherel's and Fubini-Tonelli's theorem, and the scaling invariance of the Haar measure $\frac{\mathrm{d}s}{s}$. The bound \eqref{eq:maximalopbound}, however, is not optimal since the convolutional decomposition of $m$ is not unique, and the product of the $L^2$ norms of $m_1$ and $m_2$ -- and hence the bound in \eqref{eq:maximalopbound} -- depends on this choice. In particular, this leaves some room for improvement by optimizing over $m_1$ and $m_2$.

It turns out that minimizing over the functions $m_1$ and $m_2$ yields precisely the $L^1$ norm of the Fourier transform of $m$, as proved in Lemma~\ref{lem:fourier-L1}. From this observation, we obtain the following improved version of Theorem~\ref{thm:maxopbound}.
\begin{theorem}[Improved maximal Fourier multiplier Bound]\label{thm:maxopboundimproved} Let $g: \R^d \rightarrow \R_+$ be a measurable function in $\R^d$, $d\geq 1$, and suppose that $m : \R_+ \rightarrow \R_+$ is a non-negative measurable function whose Fourier transform, defined as
\begin{align*}
    \widehat{m}(\omega) \coloneqq \int_0^{\infty} m(t) \mathrm{e}^{- \ii \log \omega \log t} \frac{\mathrm{d} t}{t},
\end{align*}
satisfies 
\begin{align}
    \norm{\widehat{m}}_{L^1(\R_+, \frac{\mathrm{d}\omega}{\omega})} = \int_0^\infty |\widehat{m}(\omega)| \frac{\mathrm{d} \omega}{\omega} < \infty. \label{eq:FTmultiplicative}
\end{align}
Then the maximal operator $\mathcal{B}_{g,m}$ defined in \eqref{eq:maximalop} extends to a bounded operator on $L^2(\R^d)$ with
\begin{align}
    \norm{\mathcal{B}_{g,m}}_{L^2(\R^d) \to L^2(\R^d)} \leq \frac{1}{2\pi} \norm{\widehat{m}}_{L^1(\R_+, \frac{\mathrm{d} s}{s})}.\label{eq:improvedmaximalopbound}
\end{align}
\end{theorem}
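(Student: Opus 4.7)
The strategy is to reduce Theorem~\ref{thm:maxopboundimproved} to Theorem~\ref{thm:maxopbound} by exploiting the variational characterization of the $L^1$ norm of the Fourier transform from Lemma~\ref{lem:fourier-L1}, after transferring everything from the multiplicative group $\RR_+$ to the additive group $\RR$ via the exponential change of coordinates.

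First, I would set $M: \RR \to \RR_+$, $M(x) \coloneqq m(\mathrm{e}^x)$. The change of variables $t = \mathrm{e}^x$ turns the Haar measure $\frac{\mathrm{d}t}{t}$ on $\RR_+$ into the Lebesgue measure $\mathrm{d}x$ on $\RR$, and the multiplicative Fourier transform reduces to the standard one: $\widehat{m}(\omega) = \widehat{M}(\log \omega)$ for $\omega \in \RR_+$. Consequently, hypothesis \eqref{eq:FTmultiplicative} is equivalent to $\widehat{M} \in L^1(\RR)$, with $\norm{\widehat{m}}_{L^1(\RR_+, \frac{\mathrm{d}\omega}{\omega})} = \norm{\widehat{M}}_{L^1(\RR)}$.

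Next, I would invoke Lemma~\ref{lem:fourier-L1} applied to $M$. Since $\widecheck{M}(x) = \frac{1}{2\pi}\widehat{M}(-x)$ (in the convention of the paper), we have $\widecheck{M} \in L^1(\RR)$ with $\norm{\widecheck{M}}_{L^1} = \frac{1}{2\pi}\norm{\widehat{M}}_{L^1}$. Hence, for every $\epsilon > 0$ there exist $M_1, M_2 \in L^2(\RR)$ with $M_1 \ast M_2 = M$ (additive convolution) and
\begin{align*}
    \norm{M_1}_{L^2(\RR)} \norm{M_2}_{L^2(\RR)} \leq \tfrac{1}{2\pi}\norm{\widehat{M}}_{L^1(\RR)} + \epsilon.
\end{align*}

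Finally, setting $m_j(t) \coloneqq M_j(\log t)$ for $j=1,2$, the same change of variables shows that $m_j \in L^2(\RR_+, \tfrac{\mathrm{d}s}{s})$ with $\norm{m_j}_{L^2(\RR_+, \frac{\mathrm{d}s}{s})} = \norm{M_j}_{L^2(\RR)}$, and that $(m_1 \ast m_2)(t) = (M_1 \ast M_2)(\log t) = M(\log t) = m(t)$, where the first convolution is now the multiplicative one from \eqref{eq:convolution}. Theorem~\ref{thm:maxopbound} thus yields
\begin{align*}
    \norm{\mathcal{B}_{g,m}}_{L^2(\RR^d) \to L^2(\RR^d)} \leq \norm{m_1}_{L^2(\RR_+, \frac{\mathrm{d}s}{s})} \norm{m_2}_{L^2(\RR_+, \frac{\mathrm{d}s}{s})} \leq \tfrac{1}{2\pi}\norm{\widehat{m}}_{L^1(\RR_+, \frac{\mathrm{d}\omega}{\omega})} + \epsilon,
\end{align*}
and letting $\epsilon \downarrow 0$ gives \eqref{eq:improvedmaximalopbound}. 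There is no real obstacle: the key inputs (Lemma~\ref{lem:fourier-L1} and Theorem~\ref{thm:maxopbound}) are already in place, and the whole argument reduces to tracking the factor $\frac{1}{2\pi}$ coming from the difference between the Fourier transform and its inverse in the paper's convention. If one wished, the minimum in Lemma~\ref{lem:fourier-L1} is attained, so the $\epsilon$ is in fact not needed.
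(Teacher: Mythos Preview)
Your proposal is correct and follows essentially the same route as the paper: transfer to the additive group via $M(x)=m(\mathrm{e}^x)$, invoke Lemma~\ref{lem:fourier-L1} to obtain the optimal convolutional factorization, transfer back, and apply Theorem~\ref{thm:maxopbound}. The only small point you leave implicit is that the hypothesis $\widehat{M}\in L^1(\R)$ forces $m$ to be continuous and bounded, which is needed to be in the setting of Theorem~\ref{thm:maxopbound}; the paper notes this explicitly, and as you observe the $\epsilon$ is indeed unnecessary since the minimum in Lemma~\ref{lem:fourier-L1} is attained.
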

\begin{proof}
    First note that the exponential map $\exp : \R \rightarrow \R_+$ is a group isomorphism from the additive group to the multiplicative group, and pushes the Lebesgue measure forward to the measure $\frac{\mathrm{d}s}{s}$. Thus for any $m \in L^p(\R_+, \tfrac{\mathrm{d}s}{s})$, $1 \leq p \leq \infty$, there holds $\norm{\widetilde{m}}_{L^p(\R)} = \norm{m}_{L^p(\R_+, \frac{\mathrm{d}s}{s})}$, where $\widetilde{m}(x) \coloneqq m(\exp(x))$, and 
    \begin{align*}
        \int_{\R_+} m_1\left(\frac{t}{s}\right) m_2(s) \frac{\mathrm{d}s}{s} = \int_{\R} \widetilde{m_1}(\log t - x) \widetilde{m}_2(x) \mathrm{d} x.
    \end{align*}
    Moreover, we have $\widetilde{\widehat{m}}= \widehat{\widetilde{m}}$, where $\widehat{m}$ is defined via \eqref{eq:FTmultiplicative} and $\widehat{\widetilde{m}}$ is the standard Fourier transform of $\widetilde{m}$ on $\R$. In particular, the variational characterization of $\norm{\widehat{m}}_{L^1}$ in Lemma~\ref{lem:fourier-L1} can be transfered to the multiplicative group, i.e., 
    \begin{align*}
        \inf \biggr\{ \norm{m_1}_{L^2(\R_+,\frac{\mathrm{d} s}{s})} \norm{m_2}_{L^2(\R_+,\frac{\mathrm{d}s}{s})} : m_1*m_2 = m \biggr\} 
        = \|\widecheck{m}\|_{L^1(\R_+,\frac{\mathrm{d} s}{s})}
        = \frac{1}{2\pi} \norm{\widehat{m}}_{L^1(\R_+,\frac{\mathrm{d} s}{s})}.
    \end{align*}
To complete the proof, we can now apply Theorem~\ref{thm:maxopbound} to $\mathcal{B}_{g,m}$ (note that $m$ is continuous as its Fourier transform lies in $L^1(\R)$), optimize over $m_1$ and $m_2$, and use the above characterization.
\end{proof}

\section{Some technical details on the spaces \texorpdfstring{$\mathbb{H}^{p,q}$}{Hp,q}} \label{sec:app-hardy}
In this section, we give the proof of Lemma~\ref{lem:existenceboundary} on the existence of boundary values in the sense of distributions for functions in $\mathbb{H}^{p,2}(S)$.
\begin{lemma}[Uniform estimate]\label{lem:uniformest} Let $1\leq p \leq \infty$ and $h \in \mathbb{H}^{p,2}(S)$, then we have
\begin{align*}
|h(z)| \lesssim  \norm{h}_{\mathbb{H}^{p,2}} \bigr(y^{-\frac{1}{p}} + (1-y)^{-\frac12}\bigr) \quad \text{for any} \quad  z=x+\ii y \in S.
\end{align*}
\end{lemma}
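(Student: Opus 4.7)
The plan is to combine the subharmonicity of $|h|$ (immediate from holomorphicity of $h$) with the mean value inequality on a disk $B_r(z) \subset S$ of maximal admissible radius $r = \tfrac{1}{2}\min(y, 1-y)$, which guarantees $\overline{B_r(z)} \subset S$. This yields
\begin{align*}
	|h(z)| \leq \frac{1}{\pi r^2} \int_{B_r(z)} |h(w)|\,\mathrm{d}A(w) = \frac{1}{\pi r^2} \int_{y-r}^{y+r} \int_{I_{y'}} |h_{y'}(x')|\,\mathrm{d}x'\,\mathrm{d}y',
\end{align*}
where $I_{y'}$ is the horizontal cross-section of $B_r(z)$ at height $y'$, an interval of length at most $2r$.

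The next step is to estimate each slice using the decomposition built into the definition of $\mathbb{H}^{p,2}(S)$. For any $\epsilon>0$ and each $y'\in[y-r,y+r]$, I would pick $f_{y'}\in L^p(\RR)$ and $g_{y'}\in L^2(\RR)$ with $f_{y'}+g_{y'}=h_{y'}$ and $\tfrac{\norm{f_{y'}}_{L^p}}{1-y'}+\tfrac{\norm{g_{y'}}_{L^2}}{y'}\leq \norm{h}_{\mathbb{H}^{p,2}}+\epsilon$; H\"older's inequality on the bounded interval $I_{y'}$ then gives
\begin{align*}
	\int_{I_{y'}}|h_{y'}|\,\mathrm{d}x' \leq (2r)^{1-1/p}\norm{f_{y'}}_{L^p} + (2r)^{1/2}\norm{g_{y'}}_{L^2} \leq \bigl[(2r)^{1-1/p}(1-y')+(2r)^{1/2}y'\bigr](\norm{h}_{\mathbb{H}^{p,2}}+\epsilon).
\end{align*}
Sending $\epsilon\downarrow 0$, integrating over $y'\in[y-r,y+r]$ (the interval is centered at $y$, so $\int(1-y')\,\mathrm{d}y' = 2r(1-y)$ and $\int y'\,\mathrm{d}y' = 2ry$), and dividing by $\pi r^2$ gives
\begin{align*}
	|h(z)| \lesssim \norm{h}_{\mathbb{H}^{p,2}}\left[\frac{1-y}{r^{1/p}}+\frac{y}{\sqrt{r}}\right].
\end{align*}

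Finally, I would substitute $r=\tfrac{1}{2}\min(y,1-y)$ and consider the two cases: for $y\leq 1/2$, $r\sim y$ so the first term is $\lesssim y^{-1/p}$ while the second is bounded by $y^{1/2}\leq 1$; for $y\geq 1/2$, $r\sim 1-y$ so the second term is $\lesssim (1-y)^{-1/2}$ while the first is bounded by $(1-y)^{1-1/p}\leq 1$ (using $p\geq 1$). These combine to the stated estimate.

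The main conceptual content is the choice of radius $r$ matching the distance to the boundary, together with the Hölder bookkeeping on each slice that correctly tracks the different scalings of the $L^p$ and $L^2$ pieces. The only mildly delicate issue is that the decomposition $(f_{y'}, g_{y'})$ need not depend measurably on $y'$, but this is harmless because the slicewise bound depends on $y'$ only through the elementary functions $(1-y')$ and $y'$, which integrate explicitly.
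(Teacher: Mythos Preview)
Your proof is correct and follows essentially the same approach as the paper: mean value inequality on a disk of radius comparable to the distance to $\partial S$, decomposition $h_{y'}=f_{y'}+g_{y'}$ from the definition of $\mathbb{H}^{p,2}(S)$, and H\"older's inequality. The paper applies H\"older on the full two-dimensional disk before using Fubini, while you slice first and apply one-dimensional H\"older on each horizontal segment, but the resulting bounds and the concluding case-split are the same up to constants.
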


\begin{proof}
First note that by the definition of $\mathbb{H}^{p,2}(S)$ in \eqref{eq:hardynorm}, we can find two functions $f, g: S \rightarrow \C$ such that $f+g = h$ and
\begin{align}
	\norm{f_y}_{L^p(\R)} \lesssim \norm{h}_{\mathbb{H}^{p,2}} (1-y)
	\quad\mbox{and}\quad 
	\norm{g_y}_{L^2(\R)} \lesssim \norm{h}_{\mathbb{H}^{p,2}} y, \label{eq:fgdecomposition}
\end{align}
for $y\in (0,1)$.
Consequently, by the mean value property of holomorphic functions and H\"older's inequality,  for any $z = x+iy \in S$ and $\rho < \min\{y, 1-y\}$ we have
\begin{align}
	|h(z)|   
	&= \frac{1}{\pi \rho^2} \int_0^\rho \biggr| \int_0^{2\pi} h(z+r\mathrm{e}^{it}) \mathrm{d}t \biggr| r\mathrm{d} r \leq \frac{1}{\pi \rho^2} \int_{B_\rho(z)} |f(u)| \mathrm{d}u + \frac{1}{\pi \rho^2} \int_{B_\rho(z)} |g(u)| \mathrm{d}u \nonumber \\
	&\leq \frac{(\pi \rho^2)^{1-\frac{1}{p}}}{\pi \rho^2} \biggr(\int_{y-\rho}^{y+\rho} \norm{f_w}_{L^p(\R)}^p \mathrm{d}w\biggr)^{\frac{1}{p}} + \frac{(\pi \rho^2)^{\frac12}}{\pi \rho^2} \biggr(\int_{y-\rho}^{y+\rho} \norm{g_w}_{L^2}^2 \mathrm{d} w \biggr)^{\frac12}\nonumber \\
	&\lesssim \norm{h}_{\mathbb{H}^{p,2}} \left[\frac{1}{\rho^{\frac{2}{p}}}\biggr(\int_{y-\rho}^{y+\rho} (1-w)^p \mathrm{d}w\biggr)^{\frac{1}{p}} + \frac{1}{\rho} \biggr(\int_{y-\rho}^{y+\rho} w^2\mathrm{d}w \biggr)^{\frac{1}{2}}\right]. \label{eq:hardyest}
\end{align}
Hence by taking the limit $\rho \ra \min\{y,(1-y)\}$, the first term in \eqref{eq:hardyest} can be bounded by $y^{-\frac{1}{p}}$ and the second term by $(1-y)^{-\frac12}$. 
\end{proof}

For the proof of Lemma~\ref{lem:existenceboundary}, we shall need the following regularity result.

\begin{lemma}[Absolute continuity of measures with exponentially weighted Fourier transform]\label{lem:abscont} Let $\mu \in \mathcal{M}(\R)$ and suppose that $\widehat{\mu} \in L^2_{\gamma}$ for some $\gamma \in \R$. Then $\mu$ is absolutely continuous with respect to the Lebesgue measure.
\end{lemma}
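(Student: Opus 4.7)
The plan is to exploit the weighted $L^2$ hypothesis to split $\widehat{\mu}$ at $k=0$, recognize the negative-frequency part as an absolutely continuous piece, and reduce the positive-frequency remainder to an instance of the F.~and M.~Riesz theorem. WLOG we may take $\gamma>0$: if $\gamma=0$, then $\widehat{\mu}\in L^2(\RR)$ and Plancherel gives $\mu\in L^2(\RR)$ directly; if $\gamma<0$, the substitution $k\mapsto -k$ (which corresponds to conjugating/reflecting $\mu$) reduces to the case $\gamma>0$. Since $\mu$ is finite, $\widehat{\mu}\in L^\infty$, and Cauchy--Schwarz yields
\begin{align*}
	\int_{-\infty}^{0} |\widehat{\mu}(k)|\,\mathrm{d}k
	\leq \|\widehat{\mu}\|_{L^2_\gamma}\,\bignorm{\mathrm{e}^{\gamma(\cdot)/2}}_{L^2(-\infty,0)}
	= \tfrac{1}{\sqrt{\gamma}}\,\|\widehat{\mu}\|_{L^2_\gamma}<\infty,
\end{align*}
so that $\widehat{\mu}\chi_{(-\infty,0)} \in L^1(\RR)\cap L^2(\RR)$. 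Setting $f_-:=\mathcal{F}^{-1}(\widehat{\mu}\chi_{(-\infty,0)})$, I get $f_-\in L^2\cap L^\infty\cap C_0(\RR)$, and the tempered distribution $\nu := \mu - f_-\,\mathrm{d}x$ has Fourier transform $\widehat{\mu}\chi_{[0,\infty)}$, which is bounded and supported on $[0,\infty)$.

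Next, I would realize $\nu$ as the distributional boundary value on $\RR$ of the holomorphic function
\begin{align*}
	F(z) := \frac{1}{2\pi}\int_0^\infty \widehat{\mu}(k)\,\mathrm{e}^{\mathrm{i}k z}\,\mathrm{d}k, \quad z\in\mathbb{C}_+.
\end{align*}
The hypothesis $\widehat{\mu}\in L^2_\gamma$ together with Plancherel, applied to $\widehat{F(\cdot+\mathrm{i}\gamma/2)}(k)=\widehat{\mu}(k)\,\mathrm{e}^{-\gamma k/2}\chi_{[0,\infty)}(k)$, forces $F(\cdot+\mathrm{i}\tfrac{\gamma}{2})\in L^2(\RR)$, so that $F$ lies in the Hardy space $H^2$ of the shifted half-plane $\{\Im z>\tfrac{\gamma}{2}\}$. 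I would then invoke a version of the F.~and M.~Riesz theorem to conclude that the boundary distribution $\nu$ is in fact absolutely continuous with respect to Lebesgue measure. Combining this with the absolute continuity of $f_-\,\mathrm{d}x$ yields $\mu = f_-\,\mathrm{d}x+\nu$ absolutely continuous, which completes the proof.

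The main obstacle will be this last step: $F$ does \emph{not} belong to $H^p(\mathbb{C}_+)$ for any $p\geq 1$ (a direct Plancherel computation shows $\|F(\cdot+\mathrm{i}y)\|_2 = \mathcal{O}(y^{-1/2})$ as $y\downarrow 0$), so the classical F.-M.~Riesz theorem on the half-plane does not apply verbatim. The natural remedy is an approximation argument on the Fourier side: truncate $\widehat{\mu}\chi_{[0,\infty)}$ to $\widehat{\mu}\chi_{[0,N]}$, producing Schwartz-class approximants $F_N$ which are trivially in $H^1(\mathbb{C}_+)$ with absolutely continuous boundary measure, and then pass to the limit $N\to\infty$ using the $H^2$-control on the line $\Im z = \gamma/2$ and the fact that $\widehat{\nu}$ is supported in the half-line to show that any singular part of the weak-$*$ limit must vanish. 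The support condition on $\widehat{\nu}$, via Lemma~\ref{lem:fourier-L1}-type duality against Schwartz functions whose Fourier transforms are supported in $(-\infty, 0]$, is the crucial ingredient killing the singular part.
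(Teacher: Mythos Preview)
Your initial reduction and decomposition match the paper's exactly: reduce to $\gamma>0$, peel off the negative-frequency piece $f_-=\mathcal{F}^{-1}(\widehat{\mu}\chi_{(-\infty,0)})\in C_0\cap L^2$, and study the remainder $\nu=\mu-f_-\,\mathrm{d}x$ whose Fourier transform is supported in $[0,\infty)$. You also correctly diagnose the obstruction: $\nu$ is not a priori a finite measure (only a tempered distribution, since $f_-$ need not be in $L^1$), and the associated holomorphic extension $F$ is not in any $H^p(\mathbb{C}_+)$, so the classical F.--M.~Riesz theorem does not apply directly.

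The gap is in your proposed remedy. The truncations $F_N=\mathcal{F}^{-1}(\widehat{\mu}\chi_{[0,N]})$ are \emph{not} Schwartz (since $\widehat{\mu}$ is merely bounded and continuous), and they are \emph{not} in $H^1(\mathbb{C}_+)$; they lie only in $H^2(\mathbb{C}_+)$, with $\|F_N\|_{H^2}^2\sim\int_0^N|\widehat{\mu}(k)|^2\,\mathrm{d}k$, which in general blows up as $N\to\infty$. With no uniform $L^1$ or $L^2$ control on the real line, the weak-$\ast$ limit argument you sketch cannot rule out a singular part: convergence in $\mathcal{S}'(\RR)$ of absolutely continuous measures does not force the limit to be absolutely continuous. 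The $H^2$-bound on the shifted line $\{\Im z=\gamma/2\}$ does not by itself transfer any integrability to $\{\Im z=0\}$.

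The paper closes this gap with a one-line trick that avoids approximation entirely. Having shown $\int_\R (x-\overline{z})^{-1}\,\nu(\mathrm{d}x)=0$ for all $z\in\mathbb{C}_+$, it observes via the partial-fractions identity
\[
\frac{1}{(x-\overline{z_1})(x-\overline{z_2})}=\frac{1}{\overline{z_1}-\overline{z_2}}\Bigl(\frac{1}{x-\overline{z_1}}-\frac{1}{x-\overline{z_2}}\Bigr)
\]
that the weighted object $\mu_2(\mathrm{d}x):=(x+\ii)^{-1}\nu(\mathrm{d}x)$ satisfies the same vanishing Cauchy integrals. The point is that $\mu_2$ \emph{is} a finite Radon measure: $(x+\ii)^{-1}\mu$ is finite because $\mu$ is, and $(x+\ii)^{-1}f_-\in L^1(\RR)$ by Cauchy--Schwarz since $f_-,(x+\ii)^{-1}\in L^2(\RR)$. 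Now F.--M.~Riesz applies directly to $\mu_2$, giving $\mu_2=u\,\mathrm{d}x$ with $u\in L^1$, hence $\nu=(x+\ii)u\,\mathrm{d}x$ and $\mu=\bigl((x+\ii)u+f_-\bigr)\mathrm{d}x$ are absolutely continuous. This weight-by-a-resolvent step is the missing idea in your proposal.
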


The proof of Lemma~\ref{lem:abscont} relies on a classical result of F.\ and M.\ Riesz \cite{RR16} on \emph{analytic} measures, that is, Borel measures $\mu$ on $\RR$ with the property that $\widehat{\mu}(k) = 0$ for all $k\leq 0$. The following version can be found in \cite[Lemma 13.4]{Mas09}. 

\begin{lemma} \label{lem:Riesz} Let $\mu \in \mathcal{M}(\R)$. Then the following are equivalent:
\begin{enumerate}[label=(\roman*)]
\item $\mu$ is analytic;
\item for any $z\in \C_+ = \{z\in\CC: \mathrm{Im}(z) >0\}$, 
    \begin{align*}
	   \int_\R \frac{\mu (\mathrm{d}x)}{x-\overline{z}}  = 0; 
    \end{align*}
\item $\mu(\mathrm{d} x) = u(x) \mathrm{d}x$, where $u \in L^1(\R)$ with $\widehat{u}(k) = 0$ for all $k\leq 0$.
\end{enumerate}
\end{lemma}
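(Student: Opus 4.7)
I would establish the cycle (iii) $\Rightarrow$ (i) $\Leftrightarrow$ (ii) $\Rightarrow$ (iii). The first implication is immediate: if $\mu(\mathrm{d}x) = u(x)\,\mathrm{d}x$ with $u\in L^1(\RR)$ satisfying $\widehat{u}(k)=0$ for $k\leq 0$, then $\widehat{\mu} = \widehat{u}$ inherits this vanishing, so $\mu$ is analytic.

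For the equivalence (i) $\Leftrightarrow$ (ii), the mechanism is an explicit Fourier representation of the Cauchy kernel. For $z = a + \ii b \in \CC_+$, closing the contour in the lower (resp.\ upper) half plane for $k>0$ (resp.\ $k<0$) yields
\begin{align*}
	\frac{1}{x-\overline{z}} = -\ii \int_0^\infty \mathrm{e}^{-kb}\,\mathrm{e}^{\ii k(x-a)}\,\mathrm{d}k, \quad x \in \RR.
\end{align*}
Inserting this into $\int_{\RR}(x-\overline{z})^{-1}\,\mu(\mathrm{d}x)$ and applying Fubini (justified because $\mu$ is finite and the joint integrand is bounded by $\mathrm{e}^{-kb}$), we obtain
\begin{align*}
	\int_\RR \frac{\mu(\mathrm{d}x)}{x-\overline{z}} = -\ii \int_0^\infty \mathrm{e}^{-kb}\,\mathrm{e}^{-\ii k a}\,\widehat{\mu}(-k)\,\mathrm{d}k.
\end{align*}
Under (i), the integrand vanishes identically, giving (ii). Conversely, (ii) says that the joint Fourier-Laplace transform of $k \mapsto \widehat{\mu}(-k)\,\mathbf{1}_{(0,\infty)}(k)$ vanishes for every $(a,b) \in \RR \times (0,\infty)$; injectivity of this transform forces the function to vanish a.e., and continuity of $\widehat{\mu}$ (a finite Borel measure has continuous Fourier transform) upgrades this to $\widehat{\mu}(k)=0$ for all $k \leq 0$.

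The substantive implication is (ii) $\Rightarrow$ (iii), which is the content of the classical F.~and M.~Riesz theorem and will be the main obstacle. The plan is to introduce the Cauchy integral $F(z) \coloneqq \frac{1}{2\pi\ii}\int_\RR \frac{\mu(\mathrm{d}t)}{t-z}$, which is holomorphic on $\CC\setminus\RR$. Condition (ii) forces $F\equiv 0$ on $\CC_-$, so the entire problem reduces to showing that $F|_{\CC_+}$ lies in the Hardy space $H^1(\CC_+)$; once this is known, $F$ admits an $L^1(\RR)$ non-tangential boundary value $u$, and the Plemelj--Sokhotski jump relation $\mu = (F^+ - F^-)\,\mathrm{d}x = u\,\mathrm{d}x$ (interpreted against Schwartz test functions) then yields (iii). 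To establish the $H^1$ bound $\sup_{y>0}\|F(\cdot+\ii y)\|_{L^1(\RR)}<\infty$, I would transfer to the unit disk via a conformal map and invoke the canonical inner-outer factorisation of $H^1$: the analyticity of $\mu$ forces its Poisson extension to be the real part of a holomorphic function on $\CC_+$, and this function must lie in $H^1$ because otherwise a non-trivial singular inner factor would contradict the uniform $L^1$ bound on Poisson integrals of the finite measure $\mu$. The $L^1$ boundary value of this $H^1$ function then provides the density of $\mu$, completing the cycle.
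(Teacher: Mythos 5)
The paper does not actually prove this lemma --- it is quoted from \cite{Mas09} (Lemma 13.4 there), with the proof deferred to that reference --- so your attempt has to be measured against the standard argument rather than an in-paper one. Your implications (iii)$\Rightarrow$(i) and (i)$\Leftrightarrow$(ii) are fine: the kernel identity $\frac{1}{x-\overline z}=-\ii\int_0^\infty \mathrm{e}^{-kb}\,\mathrm{e}^{\ii k(x-a)}\,\mathrm{d}k$ is correct, Fubini is justified since $\mu$ is finite, and injectivity of the Fourier transform on $L^1$ together with continuity of $\widehat\mu$ gives the converse direction.

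The gap is in the sketch of (ii)$\Rightarrow$(iii), which is the actual F.~and M.~Riesz theorem. First, the step you identify as the main obstacle --- the bound $\sup_{y>0}\|F(\cdot+\ii y)\|_{L^1}<\infty$ --- is in fact the easy part, and the mechanism you propose for it is not a valid argument. Writing the Poisson kernel as $\frac{1}{\pi}\frac{y}{(t-x)^2+y^2}=\frac{1}{2\pi\ii}\bigl(\frac{1}{t-z}-\frac{1}{t-\overline z}\bigr)$, hypothesis (ii) kills the second term, so on $\CC_+$ the Cauchy integral $F$ \emph{equals} the Poisson integral of $\mu$ (in particular it is itself holomorphic, not merely ``the real part of a holomorphic function''), and $\sup_{y>0}\|F(\cdot+\ii y)\|_{L^1(\RR)}\leq |\mu|(\RR)$ follows immediately because the Poisson kernel has unit $L^1$ norm; no factorization is needed here, and the contrapositive you suggest --- that a nontrivial singular inner factor would contradict the uniform $L^1$ bound --- is a non sequitur, since singular inner functions are bounded by one and perfectly compatible with $H^1$ membership and with any such bound. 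Second, and more seriously, the step that carries all the depth is left unaddressed: from $F\in H^1(\CC_+)$ you invoke only the existence of an $L^1$ non-tangential boundary value $u$ and then the jump relation $\mu=(F^+-F^-)\,\mathrm{d}x$. What the jump relation actually gives (with $F^-\equiv 0$) is the weak-$*$ convergence $F(\cdot+\ii y)\,\mathrm{d}x\to\mu$ as $y\downarrow 0$; to conclude $\mu=u\,\mathrm{d}x$ you need $F(\cdot+\ii y)\to u$ in $L^1(\RR)$ (or at least distributionally), and pointwise non-tangential convergence alone cannot deliver this --- the Poisson integral of a singular measure also has non-tangential limit $0$ almost everywhere while the measure is not absolutely continuous. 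The $L^1$-convergence theorem for $H^1$ functions is precisely where the Blaschke/square-root factorization (or an equivalent device) must be deployed; as written, your plan uses the factorization where it is not needed and omits it where it is, so the decisive implication remains unproven.
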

We refer to \cite[Sections 5.5 and 13.2]{Mas09} for more details and the proof.

\begin{proof}[Proof of Lemma~\ref{lem:abscont}] The case $\gamma=0$ follows from the fact that the Fourier transform is an isomorphism on $L^2(\R)$. Moreover, by considering $\tilde{\mu}(A) = \mu(-A)$ instead of $\mu$, we may assume that $\widehat{\mu} \in L^2_\gamma$ with $\gamma>0$. 

Define the function  
\begin{align*}
	\widehat{m}(k) \coloneqq \widehat{\mu}(k) 1_{\{k \leq 0\}},
	\end{align*}
then from the assumption $\widehat{m} \in L^2_\gamma$ and H\"older's inequality, we have that $\widehat{m} \in L^1(\R) \cap L^2(\R)$, which implies that $m \in C_0(\R)\cap L^2(\R)$. In particular, the measure
\begin{align*}
	\mu_1 \coloneqq \mu - m \mathrm{d}x
\end{align*}
has a well-defined Fourier transform that vanishes identically for $k\leq 0$. 

Now note that for any $z \in \C_+$, the function $x\mapsto r_z(x)\coloneq \frac{1}{2\pi \ii} \frac{1}{x-z}$ belongs to $C_0(\R) \cap L^2(\R)$, and its Fourier transform $\widehat{r_z}(k) = \mathrm{e}^{-\ii z k} 1_{\{ k \leq 0\}}$ lies in $L^1(\R) \cap L^2(\R)$. 
Let $\varphi\in\mathcal{S}(\RR)$ and define $\varphi_{\epsilon}\coloneq \epsilon^{-1} \varphi(\frac{\cdot}{\epsilon})$. Then $\mu_1*\varphi_{\epsilon} = \mu \ast \varphi_{\epsilon} + m \ast \varphi_{\epsilon} \in L^1(\R) + L^2(\R)$ and we have 
\begin{align*} 
    \mu*\varphi_{\epsilon} \stackrel{*}{\rightharpoonup} \mu \quad \mbox{in $\mathcal{M}(\R)$}\quad {and}\quad m*\varphi_{\epsilon} \to m\quad \mbox{in $L^2(\R)$ as $\epsilon\downarrow 0$.}
\end{align*}
In particular 
\begin{align*}
	\frac{1}{2\pi \ii} \int_\R \frac{\mu_1(\mathrm{d}x)}{x-\overline{z}}  
    &= \int_{\RR} \overline{r_z(x)}\,\mu_{1}(\mathrm{d}x) 
    = \lim_{\epsilon \downarrow 0} \int_{\RR} \overline{r_z(x)}\,\mu_{1}*\varphi_{\epsilon}(x)\,\mathrm{d}x.
\end{align*} 
Parseval's formula then implies that for any $\epsilon>0$
\begin{align*}
    \int_{\RR} \overline{r_z(x)}\,\mu_{1}*\varphi_{\epsilon}(x)\,\mathrm{d}x
    = \frac{1}{2\pi} \int_{\RR} \overline{\widehat{r_z}(k)}\,\widehat{\mu_1}(k) \widehat{\varphi_{\epsilon}}(k)\,\mathrm{d}k
    = 0,
\end{align*}
since $\widehat{r_z}(k)$ and $\widehat{\mu_1}(k)$ have disjoint support. It follows that for all $z\in \CC_+$,
\begin{align*}
	\frac{1}{2\pi \ii} \int_\R \frac{\mu_1(\mathrm{d}x)}{x-\overline{z}}  
    &= 0.
\end{align*}
From the identity $(\overline{z_1}-\overline{z_2})(x-\overline{z_1})^{-1}(x-\overline{z_2})^{-1} = (x-\overline{z_1})^{-1} - (x-\overline{z_2})^{-1}$ for any $x\in\RR$, $z_1, z_2 \in \CC_+$, we conclude that
\begin{align*}
	\int_\R \frac{\mu_1(\mathrm{d}x)}{(x-\overline{z_1})(x-\overline{z_2})} = 0 \quad \mbox{for any $z_1,z_2 \in \C_+$.}
\end{align*}
In particular, the measure $\mu_2(\mathrm{d} x) \coloneqq (x+\ii)^{-1}  \mu_1 (\mathrm{d}x)$ is a bounded Radon measure satisfying the assumptions of Lemma~\ref{lem:Riesz}. Therefore, $\mu_2$ is absolutely continuous with respect to Lebesgue measure on $\RR$, which implies that $\mu(\mathrm{d}x) = (x+\ii) \mu_2(\mathrm{d}x)+ m \mathrm{d}x$ is also absolutely continuous.
\end{proof}

\begin{proof}[Proof of Lemma~\ref{lem:existenceboundary}]
The first step is to show that for any $\phi \in C_c^\infty(\R)$, the function
\begin{align}
 y\mapsto \inner{h_y, \widehat{\phi}_{y_0-y}} \quad \mbox{ is independent of $y \in (0,1)$.} \label{eq:closedcurveintegral}
\end{align}
For this, we notice that the function $z\mapsto \overline{h(\overline{z})} \widehat{\phi}(z + \ii y_0)$ is holomorphic on $S_-$, as $\widehat{\phi}$ is entire by the Paley-Wiener theorem. Hence Cauchy's Integral Theorem implies that
\begin{multline*}
	\int_{-R}^R \biggr(\overline{h_y(x)} \widehat{\phi}_{y_0-y}(x)  -\overline{h_{y'}(x)}\widehat{\phi}_{y_0-y'}(x)\biggr)\mathrm{d} x  \\
	+ \int_{y'}^y \biggr(\overline{h_{w}(-R)}\widehat{\phi}_{y_0-w}(-R) - \overline{h_{w}(R)}\widehat{\phi}_{y_0-w}(R)\biggr) \mathrm{d} w = 0,
\end{multline*}
for any $y, y' \in (0,1)$.
In the limit $R\ra \infty$, the first term converges to $\inner{h_y, \widehat{\phi}_{y_0-y}} - \inner{h_{y'},\widehat{\phi}_{y_0-y'}}$, while by the uniform (with respect to $R$) control on $h_w(\pm R)$ from Lemma~\ref{lem:uniformest} and the fast decay of $\phi_w(\pm R)$, the second term vanishes, which proves \eqref{eq:closedcurveintegral}.

Let us treat the case $p\in (1, \infty]$ first. By definition of the space $\mathbb{H}^{p,2}(S)$, we can decompose $h=f+g$ such that \eqref{eq:fgdecomposition} holds, i.e.\ $\norm{f_y}_{L^p(\R)} \lesssim \norm{h}_{\mathbb{H}^{p,2}} (1-y)$ and $\norm{g_y}_{L^2(\R)} \lesssim \norm{h}_{\mathbb{H}^{p,2}} y$. In particular, as $y\downarrow 0$, we have $\|g_y\|_{L^2}(\RR) \to 0$ and $\|f_y\|_{L^p(\RR)}\lesssim 1$. Hence, $g_y \to 0$ strongly in $L^2(\RR)$ as $y\downarrow 0$, and there exists a subsequence $y_n \downarrow 0$ and a function $h_0\in L^p(\RR)$ such that $f_{y_n} \rightharpoonup h_0$ weakly in $L^p$ (respectively weakly-* in $L^{\infty}$ for $p=\infty$) as $y_n \downarrow 0$. We claim that the limit $h_0$ is independent of the subsequence $(y_n)$. 
Indeed, there holds 
\begin{align}
	\inner{h_y - h_0, \widehat{\phi}_0} &= \inner{h_y,\widehat{\phi}_{-y}} - \inner{h_0, \widehat{\phi}_0}  - \inner{h_y, \widehat{\phi}_{-y}-\widehat{\phi}_0} \nonumber \\
	&\stackrel{\eqref{eq:closedcurveintegral}}{=}  \inner{h_{y_n}, \widehat{\phi}_{-y_n}} - \inner{h_0,\widehat{\phi}_0} - \inner{h_y, \widehat{\phi}_{-y}-\widehat{\phi}_0} \nonumber\\
	&= \inner{h_{y_n}-h_0, \widehat{\phi}_0} + \inner{h_{y_n}, \widehat{\phi}_{-y_n}-\widehat{\phi}_0} - \inner{h_y,\widehat{\phi}_{-y}-\widehat{\phi}_0}  \label{eq:auxest}
\end{align}
for any $\phi \in C^\infty_c(\R)$. Note that for any $y\in(0,1)$ we may bound 
\begin{align*}
    |\inner{h_y,\widehat{\phi}_{-y}-\widehat{\phi}_0}| 
    &\leq |\inner{f_y,\widehat{\phi}_{-y}-\widehat{\phi}_0}|+|\inner{g_y,\widehat{\phi}_{-y}-\widehat{\phi}_0}| \\
    &\leq \|f_y\|_{L^p(\RR)} \|\widehat{\phi}_{-y}-\widehat{\phi}_0\|_{L^{\frac{p}{p-1}}(\RR)} + \|g_y\|_{L^2(\RR)} \| \widehat{\phi}_{-y}-\widehat{\phi}_0 \|_{L^2(\RR)} \to 0
\end{align*}
as $y\downarrow 0$, which follows from the uniform bounds $\sup_{0<y<1}\left(\|f_y\|_{L^p} + \|g_y\|_{L^2}\right) \lesssim 1$ and the convergence $\widehat{\phi}_{-y} \to \widehat{\phi}_0$ in $L^1(\RR) \cap L^{\infty}(\RR)$ as $y\downarrow 0$. 
Hence, letting $y_n \downarrow 0$ in \eqref{eq:auxest}, we obtain $\inner{h_y - h_0, \widehat{\phi}_0} = - \inner{h_y,\widehat{\phi}_{-y}-\widehat{\phi}_0} \to 0$ as $y\downarrow 0$.  
The argument for the limit $y\uparrow 1$ is analogous. Since $\mathcal{C}^{\infty}_c(\RR)$ is dense in $\mathcal{S}(\RR)$, we conclude that equation \eqref{eq:boundaryvalues} holds for any Schwartz function $\varphi \in\mathcal{S}(\RR)$.

For the case $p=1$, we can only infer the existence of a finite measure $h_0 \in \mathcal{M}(\RR)$ on $\RR$ such that $f_{y_n} \stackrel{*}{\rightharpoonup} h_0$ weakly-* in the sense of measures as $y_n\downarrow 0$. Therefore, it remains to show that $h_0\in L^1(\RR)$. To this end, notice that \eqref{eq:closedcurveintegral} extends to $y=0$ and $y=1$, i.e.,
\begin{align}
	\inner{h_0, \widehat{\phi}_0} = \inner{h_y, \widehat{\phi}_{1-y}} \quad \mbox{for any $\phi \in C_c^\infty(\R)$ and $0\leq y \leq 1$.} \label{eq:extendedlineintegral}
\end{align}
This follows from the fact that $\widehat{\phi}_{y} \ra \widehat{\phi}_{y_0}$ strongly in $L^1\cap L^\infty$ and $h_y \ra h_{y_0}$ weakly-$*$ in $\mathcal{M}(\R)$ as $y \ra y_0$ for any $0 \leq y_0 \leq 1$, and therefore proves \eqref{eq:cauchytheorem}.
We can now appeal to Lemma~\ref{lem:abscont} to conclude the case $p=1$. Indeed, since $\widehat{\phi}_{-1}$ is the Fourier transform of $\phi \exp_{-1}$, Parseval's formula implies that $\widecheck{h_1} = \widecheck{h_0} \exp_{-1} \in L^2(\R)$. In particular, $h_0$ has a Fourier transform in $L^2_2$, which implies that $h_0 \in L^1(\R)$ by Lemma~\ref{lem:abscont}.

\end{proof}

\section{Existence and uniqueness of primal optimizer}
\label{sec:app-primaloptimizer}
In this section, we prove that the primal problem~\eqref{eq:primalL1} admits a unique (up to symmetries) minimizer in $L^1(\R)$. In addition, we show that the dual problem can be restricted to the space of continuous functions vanishing at infinity. Precisely, we shall prove the following theorem.

\begin{theorem}[Primal optimizer]\label{thm:primalopt} Let $\gamma>2$ and $M_\gamma$ be defined in \eqref{eq:primalL1}. Then we have
\begin{align}
	M_\gamma = \min_{m \in L^1(\R)} \norm{m}_{L^1}^{\gamma-2} \norm{\widehat{m} - \exp}_{L^2_{\gamma}}^2 = \frac{4(\gamma-2)^{\gamma-2}}{(2\pi)^{\gamma-2} \gamma^\gamma}\sup_{\substack{g \in C_0(\R) \\ \widehat{g} \in L^2_{-\gamma}}} \biggr\{ \frac{|\inner{\exp,\widehat{g}}|^\gamma}{\norm{g}_{L^\infty}^{\gamma-2} \norm{\widehat{g}}_{L^2_{-\gamma}}^2}\biggr\},
\end{align}
i.e., the infimum over $L^1(\R)$ is attained, and the minimizer is unique up to the transformation $m(x) \mapsto m(x) \alpha^{-\ii x-1}$ for $\alpha >0$.
\end{theorem}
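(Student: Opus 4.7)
The plan is to combine the direct method of the calculus of variations with Lemma~\ref{lem:abscont} for existence, to derive uniqueness from Lemma~\ref{lem:scaling}, and to rerun the Fenchel--Rockafellar argument of Lemma~\ref{lem:dualproblem} in the natural pairing $\mathcal{M}(\R)\leftrightarrow C_0(\R)$ in order to drop the dual problem down to $C_0(\R)$.

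For existence, after reducing via Lemma~\ref{lem:scaling} to the constrained form $M_\gamma = \inf_{\norm{m}_{L^1}\leq 1} F_\gamma(m)$, I would take a minimizing sequence $(m_n) \subset L^1(\R)$ with $\norm{m_n}_{L^1}\leq 1$. The embedding $L^1(\R)\hookrightarrow \mathcal{M}(\R) = C_0(\R)^\ast$ and Banach--Alaoglu yield a weak-$\ast$ limit $m_n \stackrel{\ast}{\rightharpoonup} \mu \in \mathcal{M}(\R)$ with $\norm{\mu}_{\mathcal{M}}\leq 1$, while the $L^2_\gamma$-boundedness of $\widehat{m_n}-\exp$ gives, along a further subsequence, a weak limit $\widehat{m_n}-\exp \rightharpoonup \psi$ in $L^2_\gamma(\R)$; testing against $\widehat{\varphi}$ for $\varphi \in C_c^\infty(\R)$ (via Plancherel and $C_c^\infty \subset C_0$) identifies $\psi = \widehat{\mu}-\exp$ as tempered distributions. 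Lower semi-continuity of the $L^2_\gamma$-norm then yields $\norm{\mu}_{\mathcal{M}}^{\gamma-2}\norm{\widehat{\mu}-\exp}_{L^2_\gamma}^2 \leq M_\gamma$. The main obstacle is to upgrade $\mu$ from $\mathcal{M}(\R)$ to $L^1(\R)$: I would apply Lemma~\ref{lem:abscont} with any $\tilde{\gamma}\in(0,\min(2,\gamma))$, checking that $\widehat{\mu}\in L^2_{\tilde{\gamma}}$ by splitting at $k=0$ and using $\|\widehat{\mu}\|_{L^\infty}\leq\norm{\mu}_{\mathcal{M}}$ together with $\tilde{\gamma}>0$ on $k>0$, and the triangle inequality $|\widehat{\mu}|^2 \lesssim |\widehat{\mu}-e^k|^2 + e^{2k}$ with $\tilde{\gamma}<2$, $\tilde{\gamma}<\gamma$ on $k<0$. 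Uniqueness up to the symmetry $m(x)\mapsto m(x)\alpha^{-\ii x-1}$ then follows directly from Lemma~\ref{lem:scaling} and the strict convexity of $F_\gamma$.

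For the dual restriction to $C_0(\R)$, I would rerun the Fenchel--Rockafellar calculation of Lemma~\ref{lem:dualproblem} applied to the natural extension $F_\gamma^{\mathrm{ext}}:\mathcal{M}(\R)\to \R\cup\{+\infty\}$, $\mu\mapsto \norm{\widehat{\mu}-\exp}_{L^2_\gamma}^2$ (with $+\infty$ when $\widehat{\mu}-\exp\notin L^2_\gamma$), under the constraint $\norm{\mu}_{\mathcal{M}}\leq 1$. By the existence step the infimum of this extended primal still equals $M_\gamma$, and in the pairing $\mathcal{M}(\R)\leftrightarrow C_0(\R)$ the Fenchel conjugates live on $C_0(\R)$. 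Repeating the computation of Proposition~\ref{lem:dualFgamma} verbatim with $g\in C_0(\R)$ produces the same formula for $(F_\gamma^{\mathrm{ext}})^\ast(g)$ (with $\inner{\exp,\widehat{g}}$ defined as in Lemma~\ref{lem:explimit}) and $\norm{g}_{L^\infty}$ for the constraint conjugate, after which the scale-invariant reformulation of Theorem~\ref{thm:scaleinvariantdual} goes through unchanged. The final technical point I expect to be the most delicate is verifying weak-$\ast$ lower semi-continuity of $F_\gamma^{\mathrm{ext}}$ on $\mathcal{M}(\R)$ and checking the Fenchel--Rockafellar qualification hypothesis, both of which can be handled by the same weak-limit identification argument used in the existence step (using a sufficiently small rescaling of $p_\gamma$ as an interior continuity point of the constraint inside $\dom F_\gamma^{\mathrm{ext}}$).
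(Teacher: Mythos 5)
Your existence and uniqueness half is sound, and it takes a genuinely different route from the paper: you obtain the minimizing measure by the direct method (Banach--Alaoglu in $\mathcal{M}(\R)$, weak $L^2_\gamma$-compactness of $\widehat{m_n}-\exp$, identification of the limit by testing against Schwartz functions, lower semicontinuity), whereas the paper obtains it as the attained dual optimum of a Fenchel--Rockafellar duality run on $C_0(\R)$. Your subsequent upgrade from $\mathcal{M}(\R)$ to $L^1(\R)$ via Lemma~\ref{lem:abscont}, with the splitting at $k=0$ and the auxiliary weight $\tilde\gamma\in(0,\min(2,\gamma))$, is correct and in fact more careful than the paper's one-line appeal to that lemma (which tacitly needs exactly this splitting, since $\exp\notin L^2_\gamma$). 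Uniqueness via Lemma~\ref{lem:scaling} and strict convexity matches the paper.

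The gap is in the dual half. You propose to apply Fenchel--Rockafellar with the \emph{primal} posed on $\mathcal{M}(\R)$ and to check the qualification hypothesis by exhibiting an interior continuity point of the constraint inside $\dom F_\gamma^{\mathrm{ext}}$. This does not deliver a dual problem over $C_0(\R)$: if $\mathcal{M}(\R)$ carries the norm topology, the conjugates and the attained dual variable live in $\mathcal{M}(\R)^\ast\supsetneq C_0(\R)$, so you do not get the claimed supremum over $C_0(\R)$; if instead you work in the pairing $(\mathcal{M}(\R),C_0(\R))$, i.e.\ with the weak-$\ast$ topology, then the qualification fails -- the unit ball of $\mathcal{M}(\R)$ has empty weak-$\ast$ interior, and neither the constraint indicator nor $F_\gamma^{\mathrm{ext}}$ is weak-$\ast$ continuous at any point -- so the ``verbatim rerun'' of Lemma~\ref{lem:dualproblem} is not justified by the theorem you cite. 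The paper resolves this by dualizing in the opposite direction: it applies Fenchel--Rockafellar on the Banach space $C_0(\R)$ to $F_\gamma^\ast|_{C_0}+\norm{\cdot}_{L^\infty}$ (the qualification is then trivial, since $\norm{\cdot}_{L^\infty}$ is norm-continuous and $\dom F_\gamma^\ast|_{C_0}\neq\emptyset$), so that the dual minimum over $\mathcal{M}(\R)$ is attained, and it identifies the biconjugate $F_\gamma^{\ast\ast}$ on $\mathcal{M}(\R)$ with the natural extension $\mu\mapsto\norm{\widehat{\mu}-\exp}_{L^2_\gamma}^2$ (Lemma~\ref{lem:doubledual}, proved by a Riesz-representation argument as in Proposition~\ref{lem:dualFgamma}). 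Your weak-$\ast$ lower semicontinuity observation is exactly what substitutes for Lemma~\ref{lem:doubledual} (convexity plus weak-$\ast$ lsc gives $F_\gamma^{\mathrm{ext}}=(F_\gamma^\ast|_{C_0})^\ast$ by Fenchel--Moreau in that dual pair), so your argument becomes correct once you flip the roles of primal and dual in this step; as stated, however, the qualification check you describe cannot be carried out in the pairing you need. Note also that the supremum over $C_0(\R)$ is genuinely a supremum (the $L^\infty$ optimizer has constant modulus, hence is not in $C_0(\R)$), so there is no shortcut via showing the $L^\infty$ maximizer lies in $C_0(\R)$.
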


For the proof of Theorem~\ref{thm:primalopt}, we shall use the following lemma.
\begin{lemma} \label{lem:doubledual} Let $F_\gamma^\ast : C_0(\R) \rightarrow (-\infty, \infty]$ be the functional defined in Lemma~\ref{lem:dualFgamma}, but restricted to $C_0(\R)$. Then its Fenchel-conjugate $F_\gamma^{\ast\ast} : \mathcal{M}(\R) \rightarrow [0,\infty]$ is given by
\begin{align}
	F_\gamma^{\ast \ast}(\mu) = \begin{dcases} \norm{\widehat{\mu}-\exp}_{L^2_\gamma}^2, \quad &\mbox{if $\widehat{\mu}-\exp \in L^2_\gamma$,} \\
	+\infty, \quad \mbox{otherwise,} \end{dcases},
\end{align}
where $\mathcal{M}(\R)$ denotes the space of bounded Radon measures.
\end{lemma}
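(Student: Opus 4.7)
The plan is to compute $F_\gamma^{**}$ directly by case analysis on whether $\widehat{\mu} - \exp \in L^2_\gamma(\RR)$, using the Parseval-type manipulation from the proof of Proposition~\ref{lem:dualFgamma} together with the absolute continuity result in Lemma~\ref{lem:abscont}.

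First, for $\mu \in \mathcal{M}(\RR)$ with $\widehat{\mu} - \exp \in L^2_\gamma$, I would show that $\mu$ is necessarily absolutely continuous, thereby reducing to the case $\mu = m\,\mathrm{d}x$ with $m \in \dom F_\gamma$. Pick the reference function $m_0(x) = \pi^{-1}(1+x^2)^{-1}$, so that $\widehat{m_0}(k) = \mathrm{e}^{-|k|}$ and $\widehat{m_0} - \exp \in L^2_\gamma$ (as used in the proof of Lemma~\ref{lem:explimit}). Then the signed Radon measure $\mu - m_0\,\mathrm{d}x$ has Fourier transform $(\widehat{\mu}-\exp) - (\widehat{m_0}-\exp) \in L^2_\gamma$, and Lemma~\ref{lem:abscont} applied to its Jordan decomposition yields $\mu - m_0\,\mathrm{d}x = \widetilde{m}\,\mathrm{d}x$ for some $\widetilde{m} \in L^1(\RR)$, hence $\mu = m\,\mathrm{d}x$ with $m \coloneqq m_0 + \widetilde{m} \in \dom F_\gamma$.

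Next, for the upper bound in this integrable case, the same Parseval manipulation carried out in the proof of Proposition~\ref{lem:dualFgamma} gives, for any $g \in C_0(\RR)$ with $\widehat{g} \in L^2_{-\gamma}$,
\begin{align*}
	\re\inner{g,m} - F_\gamma^*(g) = \frac{1}{2\pi}\re\int_\RR \overline{\widehat{g}(k)}\left(\widehat{m}(k) - \mathrm{e}^k\right)\mathrm{d}k - \frac{1}{16\pi^2}\|\widehat{g}\|_{L^2_{-\gamma}}^2,
\end{align*}
where the pairing on the right is finite by Cauchy--Schwarz using the splitting $\overline{\widehat{g}}(\widehat{m}-\exp) = (\overline{\widehat{g}}\,\mathrm{e}^{\gamma k/2})\cdot((\widehat{m}-\exp)\,\mathrm{e}^{-\gamma k/2})$ with both factors in $L^2(\RR)$. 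A further application of Cauchy--Schwarz together with AM--GM (equivalently, completing the square in $\widehat{g}$) then yields $\re\inner{g,m} - F_\gamma^*(g) \leq \|\widehat{m}-\exp\|_{L^2_\gamma}^2$, hence $F_\gamma^{**}(m\,\mathrm{d}x) \leq \|\widehat{m}-\exp\|_{L^2_\gamma}^2$.

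For both the matching lower bound and the remaining case $\widehat{\mu}-\exp \notin L^2_\gamma$, I would use the same explicit truncated construction: set $\widehat{g_n}(k) \coloneqq 4\pi (\widehat{\mu}(k)-\mathrm{e}^k)\mathrm{e}^{-\gamma k}\mathbf{1}_{[-n,n]}(k)$. Since $\widehat{\mu}$ is bounded continuous on $\RR$, $\widehat{g_n}$ is bounded and compactly supported, so $\widehat{g_n} \in L^1(\RR) \cap L^2_{-\gamma}(\RR)$ and $g_n \in C_0(\RR)$ by Riemann--Lebesgue. A direct computation using Parseval against $\mu$ (valid because $\widehat{g_n}$ has compact support: approximating $\widehat{g_n}$ in $L^1$ by $C_c^\infty$ functions reduces to the Schwartz case, and finiteness of $\mu$ combined with uniform convergence of the corresponding $g_n$'s passes to the limit) yields
\begin{align*}
	\re\inner{g_n,\mu} - F_\gamma^*(g_n) = \int_{-n}^n |\widehat{\mu}(k)-\mathrm{e}^k|^2 \mathrm{e}^{-\gamma k}\,\mathrm{d}k.
\end{align*}
Letting $n \to \infty$ gives $\|\widehat{\mu}-\exp\|_{L^2_\gamma}^2$ in the absolutely continuous case (matching the upper bound) and $+\infty$ otherwise, proving $F_\gamma^{**}(\mu) = +\infty$ when $\widehat{\mu}-\exp \notin L^2_\gamma$. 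The main technical obstacle throughout is justifying the Parseval identities for the possibly singular measure $\mu$ whose Fourier transform is only bounded continuous, which is precisely why the compact support of $\widehat{g_n}$ is essential.
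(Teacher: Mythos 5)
Your proof is correct, but it takes a genuinely different route from the paper's. The paper stays entirely in the measure-theoretic setting: it uses the invariance of the admissible class under complex scaling $g\mapsto \alpha g$ to rewrite $F_\gamma^{\ast\ast}(\mu)$ as a supremum of quotients $\bigl|2\pi\langle\mu,g\rangle-\langle\exp,\widehat{g}\rangle\bigr|^2/\|\widehat{g}\|_{L^2_{-\gamma}}^2$, identifies the numerator with $\bigl|\langle\widehat{\mu}-\exp,\widehat{g}\rangle\bigr|^2$ via Parseval for $\widehat{g}\in C_c^\infty(\R)$, and concludes by density of $C_c^\infty$ in $L^2_\gamma$ together with the Riesz representation theorem that the supremum is finite if and only if $\widehat{\mu}-\exp\in L^2_\gamma$, in which case it equals $\|\widehat{\mu}-\exp\|_{L^2_\gamma}^2$; Lemma~\ref{lem:abscont} only enters later, in the proof of Theorem~\ref{thm:primalopt}. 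You instead reduce the finite case to absolutely continuous $\mu$ via Lemma~\ref{lem:abscont}, reuse the Parseval manipulation from the proof of Proposition~\ref{lem:dualFgamma} plus Cauchy--Schwarz and completion of the square for the upper bound, and get the matching lower bound as well as the $+\infty$ case from the explicit truncated near-maximizers $\widehat{g_n}=4\pi(\widehat{\mu}-\exp)\exp_{-\gamma}\1_{[-n,n]}$. What your route buys is a concrete maximizing sequence and a uniform treatment of the infinite case; in fact your truncation argument yields the lower bound for arbitrary $\mu\in\mathcal{M}(\R)$ without any absolute-continuity reduction, so that reduction is only needed for the upper bound. The price is length compared with the paper's quotient/Riesz argument. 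Two small corrections: Lemma~\ref{lem:abscont} should be applied directly to the complex measure $\mu-m_0\,\mathrm{d}x$, whose Fourier transform $(\widehat{\mu}-\exp)-(\widehat{m_0}-\exp)$ lies in $L^2_\gamma$, rather than to its Jordan components, since those components need not individually satisfy the hypothesis of the lemma; and the identity $\langle g_n,\mu\rangle=\frac{1}{2\pi}\langle\widehat{g_n},\widehat{\mu}\rangle$ follows immediately from Fubini, because $\widehat{g_n}\in L^1(\R)$ is compactly supported and $\mu$ is finite, so the approximation detour you sketch is unnecessary (though harmless).
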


\begin{proof} From similar calculations as in the proof of Lemma~\ref{lem:dualFgamma}, 
 we find
\begin{align}
	F_\gamma^{\ast \ast}(\mu) &= \sup_{\substack{g \in C_0(\R) \\ \widehat{g} \in L^2_{-\gamma}} } \sup_{\alpha \in \C}\biggr\{ \re \alpha \biggr(\inner{\mu, g} - \frac{1}{2\pi} \inner{\exp, \widehat{g}}\biggr) -  \frac{|\alpha|^2}{16\pi^2} \norm{\widehat{g}}_{L^2_{-\gamma}}^2\biggr\} \nonumber \\
	&= \sup_{\substack{g \in C_0(\R) \\ \widehat{g} \in L^2_{-\gamma}}} \frac{|2 \pi \inner{\mu, g} - \inner{\exp,\widehat{g}}|^2}{\norm{\widehat{g}}_{L^2_{-\gamma}}^2} \label{eq:bi-dualF}
\end{align}
Now note that from Parseval's identity we have
\begin{align}
	2\pi \inner{\mu, g} - \inner{\exp, \widehat{g}} = \inner{ \widehat{\mu} - \exp , \widehat{g}}, \label{eq:parseval2}
\end{align}
for any $g \in C_0(\R)$ with $\widehat{g} \in C_c^\infty(\R)$. Thus by the Riesz representation theorem (recall that $C_c^\infty(\R)$ is dense in $L^2_\gamma(\R)$), the supremum in \eqref{eq:bi-dualF} is finite if and only if $\widehat{\mu} - \exp \in L^2_\gamma(\R)$. In this case we have
\begin{align*}
    \sup_{\substack{g \in C_0(\R) \\ \widehat{g} \in L^2_{-\gamma}}} \frac{|2 \pi \inner{\mu, g} - \inner{\exp,\widehat{g}}|^2}{\norm{\widehat{g}}_{L^2_{-\gamma}}^2} = \sup_{\substack{g \in C_0(\R) \\ \widehat{g} \in L^2_{-\gamma}}} \frac{|\inner{ \widehat{\mu} - \exp , \widehat{g}}|^2}{\norm{\widehat{g}}_{L^2_{-\gamma}}^2} = \norm{\widehat{\mu} - \exp}_{L^2_\gamma}^2,
\end{align*}
which completes the proof.
\end{proof}

\begin{proof}[Proof of Theorem~\ref{thm:primalopt}] From Lemma~\ref{lem:doubledual} and the Fenchel-Rockafellar duality theorem applied to the functional $F_\gamma^\ast + \norm{\cdot}_{L^\infty}$, we obtain
\begin{align}
	\widetilde{M}_\gamma \coloneqq \min_{\substack{\mu \in \mathcal{M}(\R) \\ \norm{\mu} \leq 1}} \norm{\widehat{\mu}-\exp}_{L^2_\gamma}^2  
    = \sup_{g \in C_0(\R)} \left( \norm{g}_{L^\infty} + \frac{1}{16\pi^2} \norm{\widehat{g}}_{L^2_{-\gamma}}^2 + \frac{1}{2\pi} \re \inner{\widehat{g},e^{(\cdot)}} \right) 
    \geq M_\gamma,
\end{align}
where the last inequality follows from Lemma~\ref{lem:dualproblem}, and a minimizer in $\mathcal{M}(\R)$ exists and is unique by the strict convexity of $\norm{\cdot}_{L^2_\gamma}^2$.

To conclude the proof, just note that any $\mu \in \mathcal{M}(\R)$ satisfying $\widehat{\mu} - \exp \in L^2_\gamma(\R)$ must be absolutely continuous with respect to the Lebesgue measure by Lemma~\ref{lem:abscont}, and therefore, the variational problem over $\mathcal{M}(\R)$ coincides with problem~\eqref{eq:primalL1}.
\end{proof}
\section*{Data availability}
No datasets were generated or analysed during the current study.

\section*{Competing interests}

The authors have no competing interests to declare that are relevant to the content of this article.

\bigskip
\end{document}